\newcommand{\cut}{\textbf{cut}}
\newcommand{\vol}{\textbf{vol}}
\providecommand{\mat}[1]{\boldsymbol{\mathrm{#1}}}%
\renewcommand{\vec}[1]{\boldsymbol{\mathrm{#1}}}
\DeclareMathOperator*{\minimize}{minimize}
\providecommand{\subjectto}{\ensuremath{\text{subject to}}}
\providecommand{\mA}{\ensuremath{\mat{A}}}
\providecommand{\mB}{\ensuremath{\mat{B}}}
\providecommand{\mD}{\ensuremath{\mat{D}}}
\providecommand{\mI}{\ensuremath{\mat{I}}}
\providecommand{\vd}{\ensuremath{\vec{d}}}
\providecommand{\vg}{\ensuremath{\vec{g}}}
\providecommand{\vr}{\ensuremath{\vec{r}}}
\providecommand{\vw}{\ensuremath{\vec{w}}}
\providecommand{\vx}{\ensuremath{\vec{x}}}
\providecommand{\eps}{\varepsilon}
\newcommand{\confspace}{\vspace}
\newcommand{\mypara}{\textbf}
\newcommand{\Meng}[1]{{\textcolor{blue}{[Meng: #1]}}}
\newcommand{\pan}[1]{{\textcolor{red}{P: #1}}}
\newcommand{\hidecomments}[1]{}
\newcommand{\MINone}[3]{\begin{array}{ll} \displaystyle \minimize_{#1} & {#2} \\ \subjectto & {#3} \end{array}}
\newenvironment{myindentpar}[1]%
  {\begin{list}{}%
          {\setlength{\leftmargin}{#1}}%
          \item[]%
  }
  {\end{list}
}
  \providecommand\BibTeX{{%
    \normalfont B\kern-0.5em{\scshape i\kern-0.25em b}\kern-0.8em\TeX}}}
  \providecommand\BibTeX{{%
    \normalfont B\kern-0.5em{\scshape i\kern-0.25em b}\kern-0.8em\TeX}}}
\begin{document}

\title{Strongly Local Hypergraph Diffusions for Clustering and Semi-supervised Learning}

\author{Meng Liu}
\email{liu1740@purdue.edu}
\affiliation{%
  \institution{Purdue University}
  \country{United States}
}

\author{Nate Veldt}
\email{nveldt@cornell.edu}
\affiliation{%
  \institution{Cornell University}
  \country{United States}
}

\author{Haoyu Song}
\email{song522@purdue.edu}
\affiliation{%
  \institution{Purdue University}
  \country{United States}
}

\author{Pan Li}
\email{panli@purdue.edu}
\affiliation{%
  \institution{Purdue University}
  \country{United States}
}

\author{David F. Gleich}
\email{dgleich@purdue.edu}
\affiliation{%
  \institution{Purdue University}
  \country{United States}
}

\renewcommand{\shortauthors}{Liu, et al.}

\begin{abstract}
Hypergraph-based machine learning methods are now widely recognized as important for modeling and using higher-order and multiway relationships between data objects.
Local hypergraph clustering and semi-supervised learning specifically involve finding a well-connected set of nodes near a given set of labeled vertices.
Although many methods for local clustering exist for graphs, there are relatively few for localized clustering in hypergraphs.
Moreover, those that exist often lack flexibility to model a general class of hypergraph cut functions or cannot scale to large problems. 
To tackle these issues, this paper proposes a new diffusion-based hypergraph clustering algorithm that solves a quadratic hypergraph cut based objective akin to a hypergraph analog of Andersen-Chung-Lang personalized PageRank clustering for graphs. 
We prove that, for graphs with fixed maximum hyperedge size, this method is strongly local, meaning that its runtime only depends on the size of the output instead of the size of the hypergraph and is highly scalable. Moreover, our method enables us to compute with a wide variety of cardinality-based hypergraph cut functions. 
  We also prove that the clusters found by solving the new objective function satisfy a Cheeger-like quality guarantee.
  We demonstrate that on large real-world hypergraphs our new method finds better clusters and runs much faster than existing approaches. Specifically, it runs in a few seconds for hypergraphs with a few million hyperedges compared with minutes for a flow-based technique.
  We furthermore show that our framework is general enough that can also be used to solve other p-norm based cut objectives on hypergraphs.
  
\end{abstract}



\keywords{hypergraph, local clustering, community detection, PageRank}


\maketitle
\section{Introduction}
\label{sec:intro}

Two common scenarios in graph-based data analysis are: (i) What are the clusters, groups, modules, or communities in a graph? and (ii) Given some limited label information on the nodes of the graph, what can be inferred about missing labels?  
These statements correspond to the clustering and semi-supervised learning problems respectively, and while there exists a strong state of the art in algorithms for these problems on graphs~\cite{Zhou-2003-semi-supervised,andersen2006local,Gleich-2015-robustifying,yang2020p,Veldt-2016-simple-local-flow,Veldt-2019-flow,Joachims-2003-transductive,Mahoney-2012-local}, research on these problems is currently highly active for hypergraphs~\cite{yoshida2019cheeger,panli_submodular,veldt2020minimizing,Ibrahim-preprint-hocrd,Yin-2017-local-motif,chitra2019random,Zhang2017subgrad} building on new types of results~\cite{Hein2013,panli2017inhomogeneous,veldt2020hypercuts} compared to prior approaches~\cite{karypis1999multilevel,Zhou2006learning,Agarwal2006holearning}. The lack of flexible, diverse, and scalable hypergraph algorithms for these problems limits the opportunities to investigate rich structure in data. For example, clusters can be relevant treatment groups for statistical testing on networks~\cite{Eckles-2017-network-inference} or identify common structure across many types of sparse networks~\cite{Leskovec-2009-community-structure}. Likewise, semi-supervised learning helps to characterize subtle structure in the emissions spectra of galaxies in astronomy data through characterizations in terms of biased eigenvectors~\cite{lawlor2016mapping}. The current set of hypergraph algorithms are insufficient for such advanced scenarios. 

Hypergraphs, indeed, enable a flexible and rich data model that has the potential to capture subtle insights that are difficult or impossible to find with traditional graph-based analysis~\cite{Benson-2016-motif-spectral,Yin-2017-local-motif,panli_submodular,veldt2020minimizing,yoshida2019cheeger}. But, hypergraph generalizations of graph-based algorithms often struggle with scalability and interpretation~\cite{Agarwal2006holearning,Hein2013} with ongoing questions of whether particular models \emph{capture} the higher-order information in hypergraphs.  Regarding scalablility, an important special case for that is a \emph{strongly local algorithm}. Strongly local algorithms are those whose runtime depends on the size of the output rather than the size of the graph. This was only recently addressed for various hypergraph clustering and semi-supervised learning frameworks~\cite{veldt2020minimizing,Ibrahim-preprint-hocrd}. This property enables fast (seconds to minutes) evaluation even for massive graphs with hundreds of millions of nodes and edges~\cite{andersen2006local} (compared with hours). For graphs, perhaps the best known strongly local algorithm is the Andersen-Chung-Lang (henceforth, ACL) approximation for personalized PageRank~\cite{andersen2006local} with applications to local community detection and semi-supervised learning~\cite{Zhou-2003-semi-supervised}. \emph{The specific problem we address is a mincut-inspired hypergraph generalization of personalized PageRank along with a strongly local algorithm to rapidly approximate solutions.} Our formulation differs in a number of important ways from existing Laplacian~\cite{Zhou2006learning} and quadratic function-based hypergraph PageRank generalizations~\cite{panli_submodular,panli2020qdsfm,takai2020hypergraph}.

Although our localized hypergraph PageRank is reasonably simple to state formally (\S\ref{sec:lhpr}), there are a variety of subtle aspects to both the problem statement and the algorithmic solution. First, we wish to have a formulation that enables the richness of possible hypergraph cut functions. These hypergraph cut functions are an essential component to rich hypergraph models because they determine when a group of nodes ought to belong to the same cluster or obtain a potential new label for semi-supervised learning.  Existing techniques based on star expansions (akin to treating the hypergraph as a bipartite graph) or a clique expansion (creating a weighted graph by adding edges from a clique to the graph for each hyperedge) only model a limited set of cut functions~\cite{Agarwal2005beyond,veldt2020hypercuts}.  More general techniques based on Lov\'{a}sz extensions~\cite{panli_submodular,panli2020qdsfm,yoshida2019cheeger} pose substantial computational difficulties.  Second, we need a problem framework that gives sparse solutions such that they can be computed in a strongly local fashion and then we need an algorithm that is actually able to compute these---the mere existence of solutions is insufficient for deploying these ideas in practice as we wish to do. Finally, we need an understanding of the relationship between the results of this algorithm and various graph quantities, such as minimal conductance sets as in the original ACL method.

To address these challenges, we extend and employ a number of recently proposed hypergraph frameworks. First, we show a new result on a class of hypergraph to graph transformations~\cite{veldt2020hypercuts}. These transformations employ carefully constructed directed graph gadgets, along with a set of auxiliary nodes, to encode the properties of a general class of cardinality based hypergraph cut functions. Our simple new result highlights how these transformations not only preserve \emph{cut} values, but preserve the \emph{hypergraph conductance} values as well (\S\ref{sec:hypergraph-to-graph}). Then we localize the computation in the reduced graph using a general strategy to build strongly local computations. This involves a particular modification often called a ``localized cut'' graph or hypergraph~\cite{liu2020strongly,veldt2020minimizing,Andersen:2008:AIG:1347082.1347154,Fountoulakis-preprint-flow}. We then use a squared $2$-norm (i.e. a \emph{quadratic} function) instead of a $1$-norm that arises in the mincut-graph to produce the hypergraph analogue to strongly local personalized PageRank. Put another way, applying all of these steps on a graph (instead of a hypergraph) is equivalent to a characterization of  personalized PageRank vector~\cite{gleich2014anti}. 

Once we have the framework in place (\S\ref{sec:hypergraph-to-graph},\S\ref{sec:lhpr}), we are able to show that an adaptation of the \emph{push} method for personalized PageRank (\S\ref{sec:local-solver}) will compute an approximate solution in time that depends only on the localization parameters and is independent of the size of a hypergraph with fixed maximum hyperedge size (Theorem~\ref{runtime-guarantee}). Consequently, the algorithms are strongly local. 

The final algorithm we produce is extremely efficient. It is a small factor (2-5x) slower than running the ACL algorithm for graphs on the star expansion of the hypergraph. It is also a small factor (2-5x) faster than running an optimized implementation of the ACL algorithm on the clique expansion of the hypergraph. Nevertheless, for many instances of semi-supervised learning problems, it produces results with much larger F1 scores than alternative methods. In particular, it is much faster and performs much better with extremely limited label information than a recently proposed flow-based method~\cite{veldt2020minimizing}.

\mypara{Summary of additional contributions.} In addition to providing a strongly local algorithm for the squared $2$-norm (i.e. a \emph{quadratic} function) in \S\ref{sec:lhpr}, which gives better and faster empirical performance (\S\ref{sec:exp}), we also discuss how to use a $p$-norm (\S\ref{sec:pnorms}) instead. Finally, we also show a Cheeger inequality that relates our results to the hypergraph conductance of a nearby set (\S\ref{sec:theory}).

Our method is the first algorithm for hypergraph clustering that includes \emph{all} of the following features: it is
(1) strongly-local,
(2) can grow a cluster from a small seed set,
(3) models flexible hyperedge cut penalties, and
(4) comes with a conductance guarantee.

\begin{figure*}[t]
\centering
\begin{minipage}{0.19\linewidth}%
\includegraphics[width=0.9\linewidth]{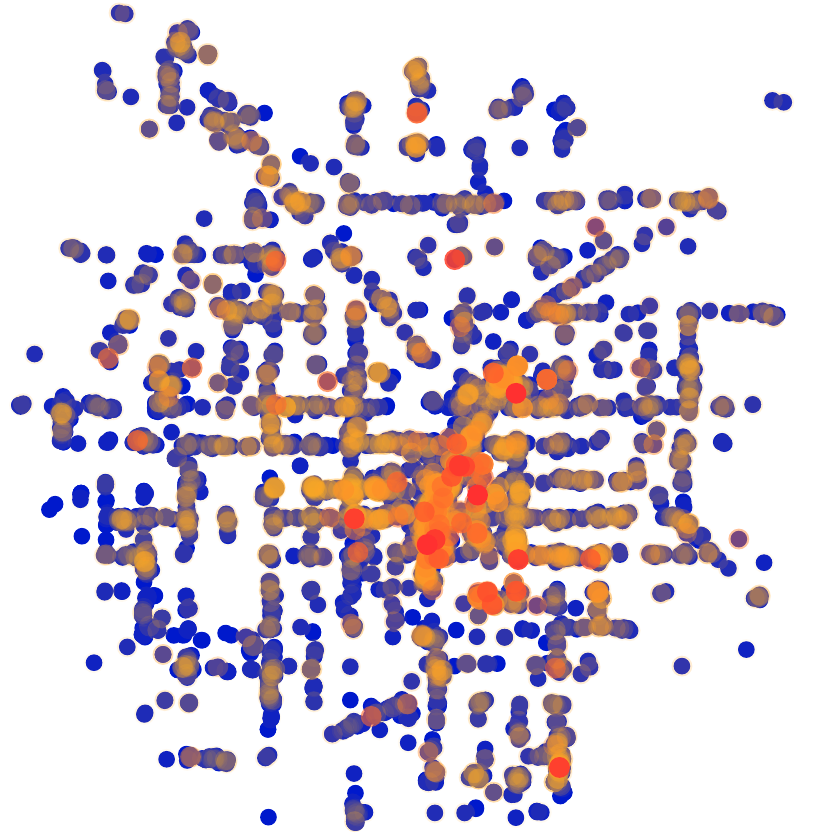}\\
\footnotesize ACL-Clique \\
 P=0.80, R=0.99, F1=0.876
\end{minipage}%
\begin{minipage}{0.19\linewidth}%
\includegraphics[width=0.9\linewidth]{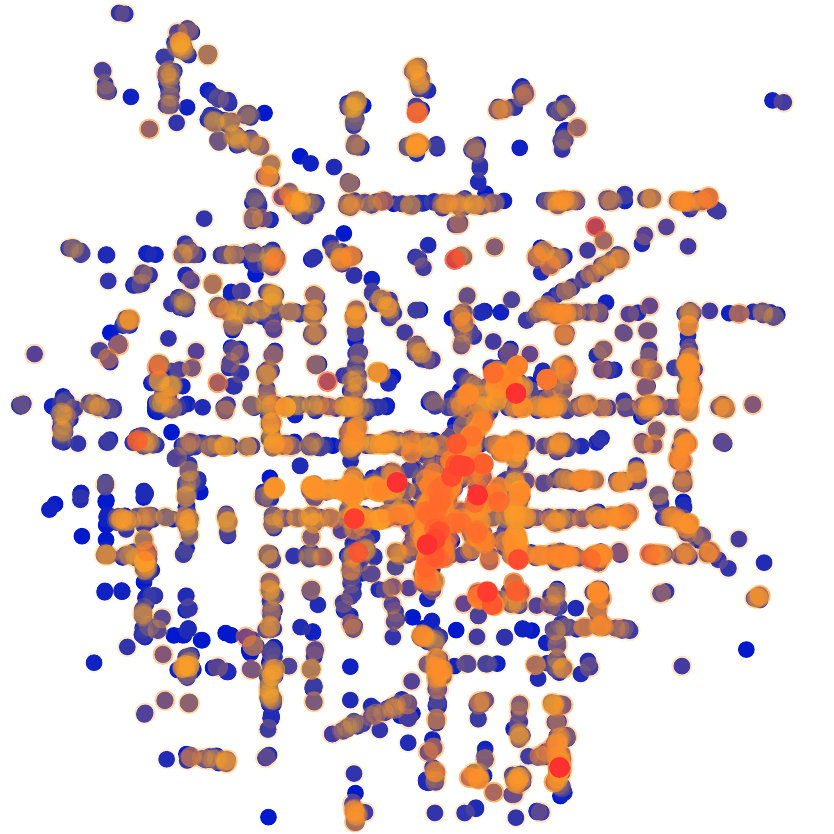}\\
\footnotesize ACL-Star \\
 P=0.76, R=0.98, F1=0.85
\end{minipage}%
\begin{minipage}{0.19\linewidth}%
\includegraphics[width=0.9\linewidth]{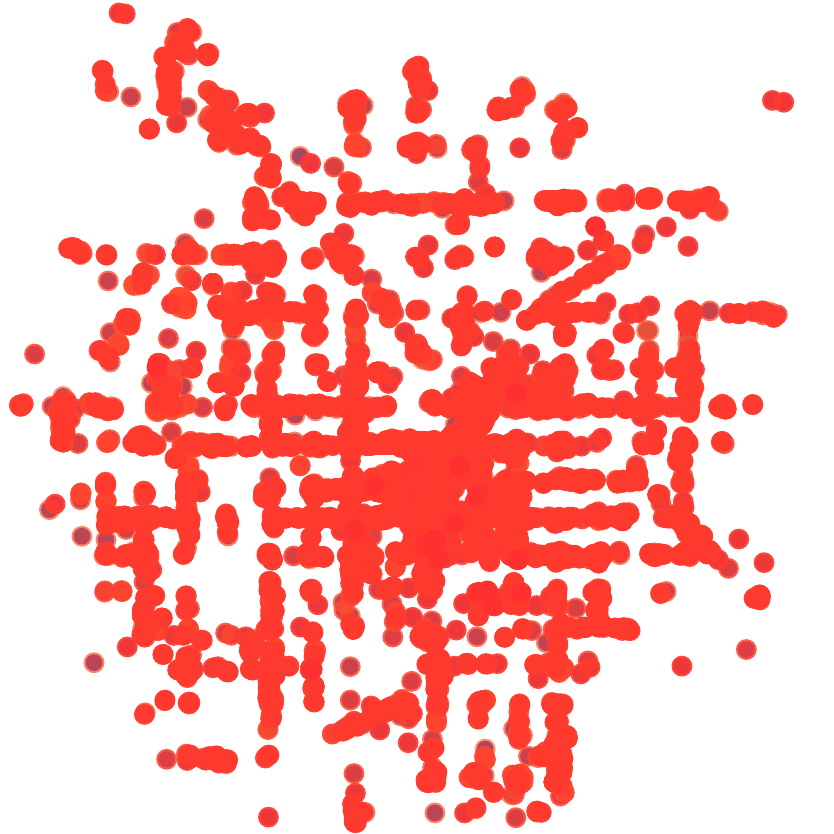}\\
\footnotesize HyperLocal~\cite{veldt2020minimizing} \\
 P=0.92, R=0.05, F1=0.10
\end{minipage}%
\begin{minipage}{0.19\linewidth}%
\includegraphics[width=0.9\linewidth]{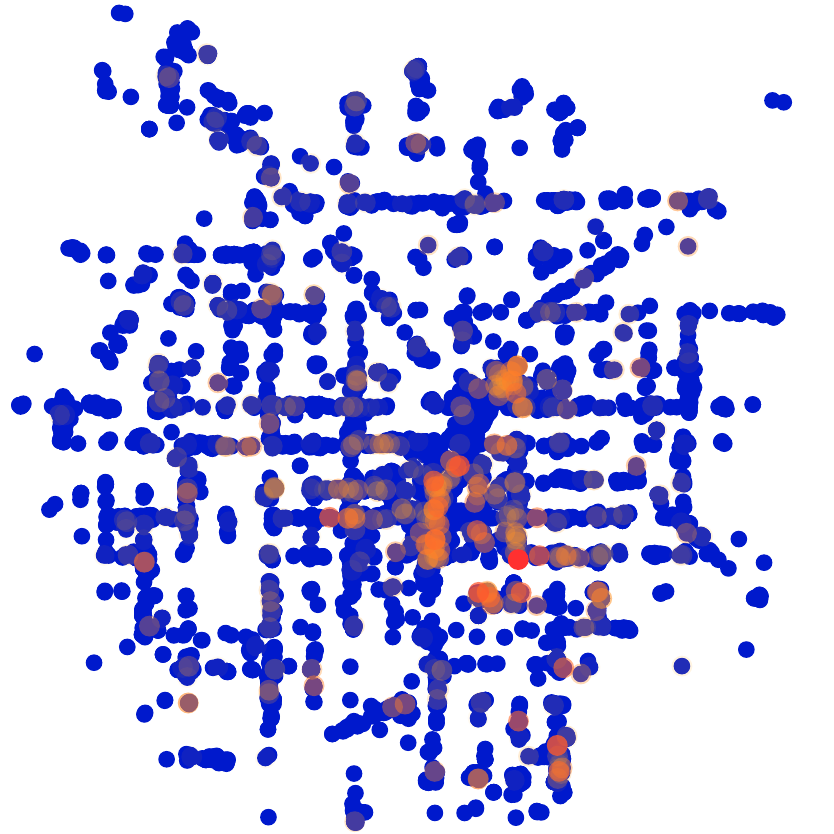}\\
\footnotesize QHPR~\cite{takai2020hypergraph,panli2020qdsfm} \\
 P=0.83, R=0.95, F1=0.886
\end{minipage}%
\begin{minipage}{0.19\linewidth}%
\includegraphics[width=0.9\linewidth]{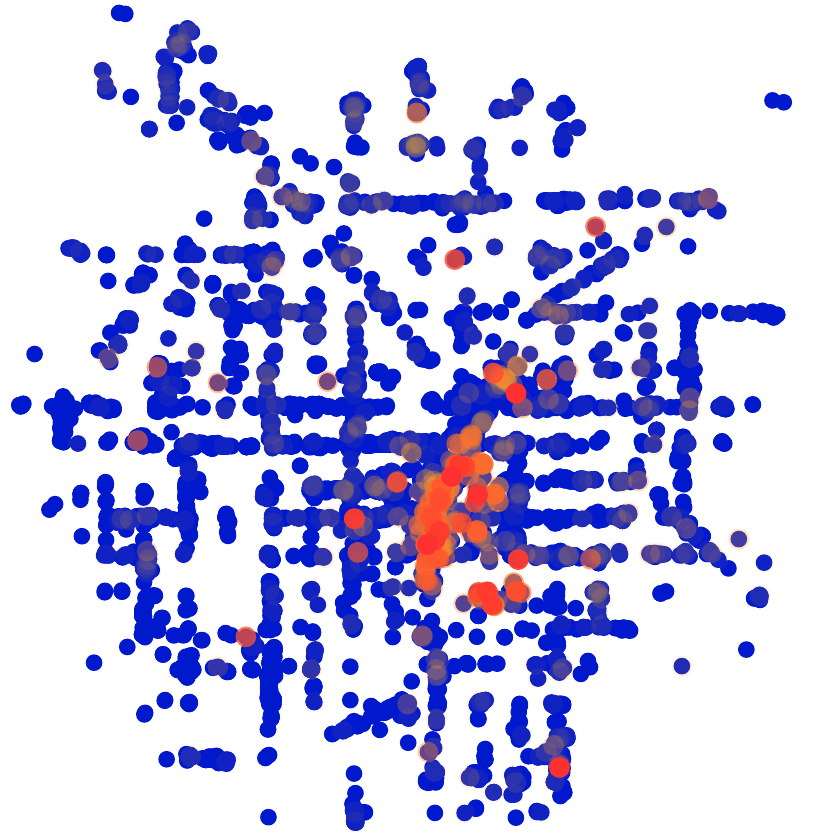}\\
\footnotesize \bfseries \itshape LHPR (Ours)\\
 P=0.83, R=0.98, F1=0.900
\end{minipage}%

\caption{This figure shows locations of the $\sim$7,300 restaurants of Las Vegas that are reviewed on Yelp and how often algorithms recover them from a set of 10 random seeds; our hypergraph PageRank (LHPR) methods has the highest accuracy and finds the result by exploring only 10000 vertices total compared with a fully dense vector for QHPR giving a boost to scalability on larger graphs. The colors show the regions that are missed (red or orange) or found (blue) by each algorithm over 15 trials. HyperLocal is a flow-based method that is known to have trouble growing small seed sets as in this experiment. (The parameters for HyperLocal were chosen in consultation its authors; other parameters were hand tuned for best case performance.)  }
\label{fig:yelp}
\vspace{-10pt}
\end{figure*}

\mypara{A motivating case study with Yelp reviews.} 
We begin by illustrating the need and utility for the methods instead with a simple example of the benefit to these spectral or PageRank-style hypergraph approaches. For this purpose we consider a hypothetical use case with an answer that is easy to understand in order to compare our algorithm to a variety of other approaches. We build a hypergraph from the Yelp review dataset (\url{https://www.yelp.com/dataset}). Each restaurant is a vertex and each user is a hyperedge. This model enables users, i.e.~hyperedges, to capture subtle socioeconomic status information as well as culinary preferences in terms of which types of restaurants they visit and review. The task we seek to understand is either an instance of local clustering or semi-supervised learning. Simply put, given a random sample of 10 restaurants in Las Vegas Nevada, we seek to find other restaurants in Las Vegas. The overall hypergraph has around 64k vertices and 616k hyperedges with a maximum hyperedge size of 2566. Las Vegas, with around 7.3k restaurants, constitutes a small localized cluster.

We investigate a series of different algorithms that will identify a cluster nearby a seed node in a hypergraph: (1)  Andersen-Chung-Lang PageRank on the star and clique expansion of the hypergraph (ACL-Star, ACL-Clique, respectively), these algorithms are closely related to ideas proposed in~\cite{Zhou2006learning,Agarwal2006holearning}; (2) HyperLocal, a recent maximum flow-based hypergraph clustering algorithm~\cite{veldt2020minimizing}; (3) quadratic hypergraph PageRank~\cite{panli2020qdsfm,takai2020hypergraph} (which is also closely related to~\cite{Hein2013}), and (4) our Local Hypergraph-PageRank (LHPR). These are all strongly local except for (3), which we include because our algorithm LHPR is essentially the strongly local analogue of (3). 

The results are shown in Figure~\ref{fig:yelp}. The flow-based HyperLocal method has difficulty finding the entire cluster. Flow-based methods are known to have trouble expanding small seed sets~\cite{Veldt-2016-simple-local-flow,Fountoulakis-preprint-flow,liu2020strongly} and this experiment shows that same behavior. Our strongly local hypergraph PageRank (LHPR) slightly improves on the performance of a quadratic hypergraph PageRank (QHPR) that is not strongly local. In particular, it has 10k non-zero entries (of 64k) in its solution.

This experiment shows the opportunities with our approach for large hypergraphs. 
We are able to model a flexible family of hypergraph cut functions beyond those that use clique and star expansions and we equal or outperform all the other methods. For instance, another more complicated method~\cite{Ibrahim-preprint-hocrd} designed for small hyperedge sizes showed similar performance to ACL-Clique  (F1 around 0.85) and took much longer. 



\vspace{-\baselineskip}
\section{Notation and Preliminaries}
Let $G=(V,E,w)$ be a directed graph with $|V|=n$ and $|E|=m$. For simplicity, we require weights $w_{ij}\geq 1$ for each directed edge $(i,j)\in E$. We interpret an undirected graph as having two directed edges $(i,j)$ and $(j,i)$. For simplicity, we assume the vertices are labeled with indices $1$ to $n$, so that we may use these labels to index matrices and vectors. For instance, we define $\vd$ as the length-$n$ out-degree vector where its $i$th component $d_i=\sum_{j\in V} w_{ij}$. The incidence matrix $\mB\in \{0,-1,1\}^{m\times n}$ measures the difference of adjacent nodes. The $k$th row of $\mB$ corresponds to an edge,  say $(i,j)$, and has exactly two nonzero values, 1 for the node $i$ and -1 for the node $j$. (Recall that we have directed edges, so the origin of the edge is always $1$ and the destination is always -1.) 


Let $\mathcal{H}=(V,\mathcal{E})$ be a hypergraph where each hyperedge $e\in \mathcal{E}$ is a subset of $V$. Let $\zeta =\max_{e\in \mathcal{E}} |e|$ be the maximum hyperedge size. 
With each hyperedge, we associate a \emph{splitting function} $f_e$ that we use to assess an appropriate penalty for splitting the hyperedge among two labels or splitting the hyperedge between two clusters. 
Formally, let $S$ be a cluster and let $A = e \cap S$ be the hyperedge's nodes inside $S$, then $f_e(A)$ penalizes splitting $e$. 
A common choice in early hypergraph literature was the \emph{all-or-nothing} split, which assigns a fixed value if a hyperedge is split or zero if all nodes in the hyperedge lie in the same cluster \cite{hadley1995,ihler1993modeling,lawler1973}: $f_e(A) = 0$ if $A = e$ or $A = \emptyset$ and $f_e(A) = 1$ otherwise (or an alternative constant).
More recently, there have been a variety of alternative splitting functions proposed~\cite{panli_submodular,panli2017inhomogeneous,veldt2020hypercuts} that provide more flexibility. We discuss more choices in the next section (\S\ref{sec:hypergraph-to-graph}). 
With a splitting function identified, the cut value of any given set $S$ can be written as $\text{cut}_{\mathcal{H}}(S)=\sum_{e\in E}f_e(e\cap S)$. The node degree in this case can be defined as $d_i= \sum_{e:i\in e} f_e(\{i\})$~\cite{panli2017inhomogeneous,veldt2020minimizing}, though other types of degree vectors can also be used in both the graph and hypergraph case.
This gives rise to a definition of conductance on a hypergraph
\begin{equation}\label{eq:h-cond}
\phi_{\mathcal{H}}(S)=\frac{\cut_{\mathcal{H}}(S)}{\text{min}(\vol(S),\vol(\bar{S}))}
\end{equation}
where $\vol(S)=\sum_{i\in S}d_i$. This reduces to the standard definition of graph conductance when each edge has only two nodes ($\zeta=2$) and we use the \emph{all-or-nothing penality}.


\mypara{Diffusion algorithms for semi-supervised learning and local clustering.} 
Given a set of seeds, or what we commonly think of as a reference, set $R$, a diffusion is any method that produces a real-valued vector $\vx$ over all the other vertices. For instance, the personalized PageRank method uses $R$ to define the \emph{personalization vector} or \emph{restart vector} underlying the process~\cite{andersen2006local}. The PageRank solution or the sparse Andersen-Chung-Lang approximation~\cite{andersen2006local} are then the diffusion $\vx$. Given a diffusion vector $\vx$, we \emph{round} it back to a set $S$ by performing a procedure called a sweepcut. This involves sorting $\vx$ from largest to smallest and then evaluating the hypergraph conductance of each prefix set $S_j = \{ [1], [2], \ldots, [k] \}$, where $[i]$ is the id of the $i$th largest vertex. The set returned by sweepcut picks the minimum conductance set $S_j$. Since the sweepcut procedures are general and standardized, we focus on the computation of $\vx$.  When these algorithms are used for semi-supervised learning, the returned set $S$ is presumed to share the label as the reference (seed) set $R$; alternatively, its value or rank information may be used to disambiguate multiple labels~\cite{Zhou-2003-semi-supervised,Gleich-2015-robustifying}.

\section{Method}
Our overall goal is to compute a hypergraph diffusion that will help us perform a sweepcut to identify a set with \emph{reasonably small} conductance nearby a reference set of vertices in the graph. We explain our method: \emph{localized hypergraph quadratic diffusions} (LHQD) or also \emph{localized hypergraph PageRank} (LHPR) through two transformations before we formally state the problem and algorithm. We adopted this strategy so that the final proposal is well justified because some of the transformations require additional context to appreciate. Computing the final sweepcut is straightforward for hypergraph conductance, and so we do not focus on that step. 

\subsection{Hypergraph-to-graph reductions }
\label{sec:hypergraph-to-graph}


Minimizing conductance is NP-hard even in the case of simple graphs, though numerous techniques have been designed to approximate the objective in theory and practice~\cite{Andersen:2008:AIG:1347082.1347154,andersen2006local,Chung-1992-book}. A common strategy for searching for low-conductance sets in hypergraphs is to first reduce a hypergraph to a graph, and then apply existing graph-based techniques. This sounds ``hacky'' or least ``ad-hoc'' but this idea is both principled and rigorous. The most common approach is to apply a clique expansion~\cite{Benson-2016-motif-spectral,panli2017inhomogeneous,zien1999,Zhou2006learning,Agarwal2006holearning}, which explicitly models splitting functions of the form 
$f_e(A)\propto |A||e\backslash A|$. 
For instance Benson et al.~\cite{Benson-2016-motif-spectral} showed that clique expansion can be used to convert a 3-uniform hypergraph into a graph that preserves the \emph{all-or-nothing} conductance values. For larger hyperedge sizes, \emph{all-or-nothing} conductance is preserved to within a distortion factor depending on the size of the hyperedge. Later, Li et al.~\cite{panli2017inhomogeneous} were the first to introduce more generalized notions of hyperedge splitting functions, focusing specifically on submodular functions.
\begin{definition}
A splitting function $f_e$ is \textit{submodular} if  
\begin{equation}\label{eq:subm}
f_e(A)+f_e(B)\geq f_e(A\cup B) + f_e(A\cap B)\quad\text{$\forall A,B\subseteq e$}.
\end{equation}
\end{definition}
These authors showed that for this submodular case, clique expansion could be used to define a graph preserving conductance to within a factor $O(\zeta)$ ($\zeta$ is the largest hyperedge size).

More recently, Veldt et al.~\cite{veldt2020hypercuts} introduced graph reduction techniques that \emph{exactly} preserve submodular hypergraph \emph{cut} functions which are cardinality-based. 
\begin{definition}
A splitting function $f_e$ is cardinality-based if  
\begin{equation}\label{eq:card}
f_e(A) = f_e(B)\quad\text{whenever $|A| = |B|$}.
\end{equation}
\end{definition}
Cardinality-based splitting functions are a natural choice for many applications, since node identification is typically irrelevant in practice, and the cardinality-based model produces a cut function that is invariant to node permutation. Furthermore, most previous research on applying generalized hypergraph cut penalties implicitly focused on cut functions are that are naturally cardinality-based~\cite{panli_submodular,panli2020qdsfm,Hein2013,Benson-2016-motif-spectral,zien1999,karypis1999multilevel}. 
Because of their ubiquity and flexibility, in this  work we also focus on hypergraph cut functions that are submodular and cardinality-based. We briefly review the associated graph transformation and then we build on previous work by showing that these hypergraph reductions can be used to preserve the hypergraph \emph{conductance} objective, and not just hypergraph cuts.  

\mypara{Reduction for Cardinality-Based Cuts.}
Veldt et al.~\cite{veldt2020hypercuts} gave results that show the cut properties of a submodular, cardinality-based hypergraph could be preserved by replacing each hyperedge with a set of directed graph \emph{gadgets}. Each gadget for a hyperedge $e$ is constructed by introducing a pair of auxiliary nodes $a$ and $b$, along with a directed edge $(a, b)$ with weight $\delta_e > 0$. For each $v \in e$, two unit-weight directed edges are introduced: $(v, a)$ and $(b, v)$. The entire gadget is then scaled by a weight $c_e \geq 0$. The resulting gadget represents a simplified splitting function of the following form:
\begin{equation}
\label{cb-function}
    f_e(A) = c_e \cdot \min \{ |A|, |e\backslash A|, \delta_e \}.
\end{equation}
Figure~\ref{example-graph-reduction}(b) illustrates the process of replacing a hyperedge with a gadget. The cut properties of any submodular cardinality-based splitting function can be exactly modeled by introducing a set of $O(|e|)$ or fewer such splitting functions~\cite{veldt2020hypercuts}. If an approximation suffices, only $O(\log |e|)$ gadgets are required~\cite{benson2020augmented}.

\begin{figure*}
\begin{minipage}{0.24\linewidth}
  \includegraphics[width=0.95\linewidth]{"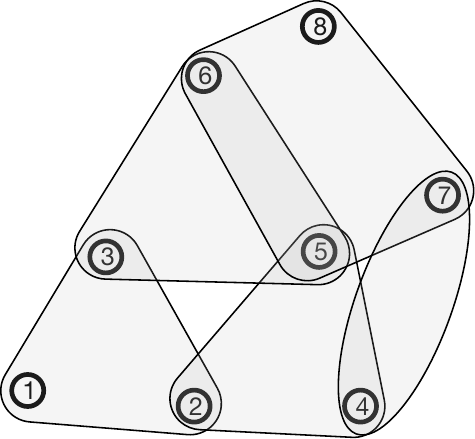"}\\
  \footnotesize (a) original hypergraph\\
\end{minipage}\hfill%
\begin{minipage}{0.12\linewidth}
\includegraphics[width=0.85\linewidth]{"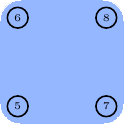"}\\
\includegraphics[width=0.85\linewidth]{"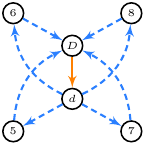"}\\
\footnotesize (b) single hyperedge reduction gadget
\end{minipage}\hfill%
\begin{minipage}{0.24\linewidth}
  \includegraphics[width=0.95\linewidth]{"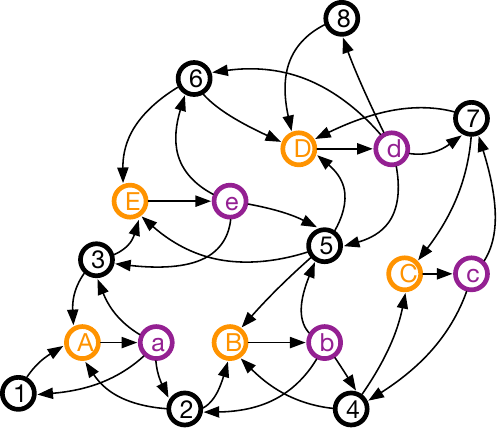"} \\
  \footnotesize (c) expanded graph\\
\end{minipage}\hfill%
\begin{minipage}{0.28\linewidth}
  \includegraphics[width=0.95\linewidth]{"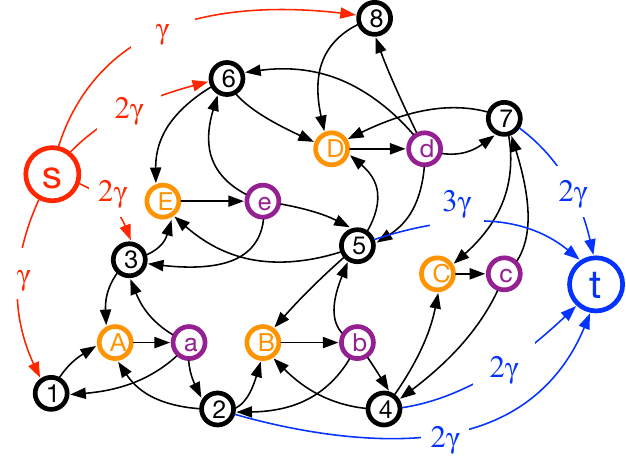"} \\
  \footnotesize (d) localized directed cut graph\\
\end{minipage}
\vspace{-\baselineskip}
\caption{A simple illustration of hypergraph reduction (Section~\ref{sec:hypergraph-to-graph}) and localization (Section~\ref{sec:lhqd}). (a) A hypergraph with 8 nodes and 5 hyperedges. (b) An illustration of the hyperedge transformation gadget for $\delta$-linear splitting function. (c) The hypergraph is reduced to a directed graph by adding a pair of auxiliary nodes for each hyperedge and this preserves hypergraph conductance computations (Theorem~\ref{thm:preservecond}). (d) The localized directed cut graph is created by adding a source node $s$, a sink node $t$ and edges from $s$ to hypergraph nodes or from hypergraph nodes to $t$ to \textit{localize} a solution.}
\label{fig:transformations}
\label{example-graph-reduction}
\label{full-graph-reduction}
\vspace{-10pt}
\end{figure*}


An important consequence of these reduction results is that in order to develop reduction techniques for \emph{any} submodular cardinality-based splitting functions, it suffices to consider hyperedges with splitting functions of the simplified form given in \eqref{cb-function}. In the remainder of the text, we focus on splitting functions of this form, with the understanding that all other cardinality-based submodular splitting functions can be modeled by introducing multiple hyperedges on the same set of nodes with different edge weights.

In Figure~\ref{full-graph-reduction}, we illustrate the procedure of reducing a small hypergraph to a directed graph, where we introduce a single gadget per hyperedge. Formally, for a hypergraph $\mathcal{H} = (V,E)$, this procedure produces a directed graph $G = (\hat{V}, \hat{E})$, with directed edge set $\hat{E}$, and node set $\hat{V} = V \cup V_a \cup V_b$, where $V$ is the set of original hypergraph nodes. Sets $V_a, V_b$ store auxiliary nodes, in such a way that for each pair of auxiliary nodes $a, b$ where $(a,b)$ is a directed edge, we have $a \in V_a$ and $b \in V_b$. This reduction technique was previously developed as a way of preserving minimum cuts and minimum $s$-$t$ cuts for the original hypergraph. Here, we extend this result to show that for a certain choice for node degree, this reduction also preserves hypergraph conductance.

\begin{theorem}
\label{thm:preservecond}
Define a degree vector $\vd$ for the reduced graph $G = (\hat{V}, \hat{E})$ such that $\vd(v) = d_v$ is the out-degree for each node $v \in V$, and $\vd(u) = d_u = 0$ for every auxiliary node $u \in V_a\cup V_b$. If $T^*$ is the minimum conductance set in $G$ for this degree vector, then $S^* = T^*\cap V$ is the minimum hypergraph conductance set in $\mathcal{H} = (V,E)$. 
\end{theorem}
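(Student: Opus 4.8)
The plan is to move conductance back and forth between $G$ and $\mathcal{H}$ using two facts about the gadget reduction: it preserves cuts in the right sense, and under the stated degree choice it preserves volumes exactly. The cut part is the per-gadget property already behind the reduction, which I would re-derive by enumerating how a gadget's auxiliary pair $a,b$ can lie relative to a cut. Fix $T\subseteq \hat V$ and let $A=e\cap T$ be the original nodes of hyperedge $e$ on the source side. Inspecting the four placements shows the (scaled) directed cut contributed by $e$'s gadget equals $c_e|e\setminus A|$, $c_e\delta_e$, $c_e|A|$, or $c_e|e|$ according to whether $\{a,b\}\subseteq T$, only $a\in T$, neither is in $T$, or only $b\in T$; the first three can be chosen freely, so the contribution is always at least $c_e\min\{|A|,|e\setminus A|,\delta_e\}=f_e(A)$, with equality attained by the best placement. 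Since $\hat E$ is the disjoint union of the gadget edge sets, summing over hyperedges gives, for every $T$ with $S:=T\cap V$,
\[
\cut_G(T)\ \ge\ \sum_{e} f_e(e\cap S)\ =\ \cut_{\mathcal H}(S),
\]
and some lift $T$ of a prescribed $S$ attains equality.

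The genuinely new ingredient is the volume bookkeeping. Because $\vd(u)=0$ for every auxiliary $u\in V_a\cup V_b$, the graph volume ignores the auxiliary nodes, so $\vol(T)=\vol(T\cap V)=\vol_{\mathcal H}(S)$ and likewise $\vol(\bar T)=\vol_{\mathcal H}(\bar S)$; in particular the two arguments of the $\min$ in the denominator of the graph conductance $\phi_G$ agree with those of $\phi_{\mathcal H}$. Combining this with the cut inequality yields $\phi_G(T)\ge \phi_{\mathcal H}(T\cap V)$ for every $T$, and for any hypergraph set $S$ a lift with $\phi_G(T)=\phi_{\mathcal H}(S)$. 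The first statement gives $\min_T \phi_G(T)\ge \min_S \phi_{\mathcal H}(S)$ and the second gives the reverse, so the optimal conductances coincide. Applying the chain at the optimizer $T^*$ reads $\phi_{\mathcal H}(S^*)\le \phi_G(T^*)=\min_S\phi_{\mathcal H}(S)$ with $S^*=T^*\cap V$, which forces $S^*$ to be a minimum-conductance set of $\mathcal H$.

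The one place that needs care, and the only real obstacle, is the degenerate sets and making the denominators line up cleanly. Since auxiliary nodes carry no degree, any $T$ built only from auxiliary nodes has $\vol(T)=0$ and hence undefined (infinite) conductance, so a genuine minimizer $T^*$ must satisfy $\emptyset\neq T^*\cap V\neq V$; this is what guarantees $S^*$ is a legitimate nonempty, non-full hypergraph set rather than a gadget artifact. I would also check that the equality-achieving auxiliary placements in the forward direction are jointly consistent: each pair $a_e,b_e$ touches only the nodes of its own hyperedge, so the placements decouple and can be optimized gadget by gadget, giving $\cut_G(T)=\sum_e f_e(e\cap S)=\cut_{\mathcal H}(S)$ simultaneously. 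With these checks the sandwich closes and the theorem follows.
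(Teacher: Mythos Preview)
Your proposal is correct and follows essentially the same approach as the paper's proof: both rely on the cut-preservation property of the gadget reduction together with the observation that auxiliary nodes carry zero degree, so volumes on $G$ and $\mathcal{H}$ agree. The paper simply cites prior work for the cut identity $\cut_{\mathcal H}(S)=\min_{T:\,T\cap V=S}\cut_G(T)$ and then argues directly that the $G$-optimizer $T^*$ must already realize this minimum (else one could lower the cut without changing the volume), whereas you re-derive the gadget cut bound from scratch and close the argument with an explicit two-sided sandwich $\phi_G(T)\ge \phi_{\mathcal H}(T\cap V)$ plus an equality-attaining lift; you are also more careful about the degenerate auxiliary-only sets, which the paper leaves implicit.
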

\begin{proof}
From previous work on these reduction techniques~\cite{benson2020augmented,veldt2020hypercuts}, we know that the cut penalty for a set $S \subseteq V$ in $\mathcal{H}$ equals the cut penalty in the directed graph, as long as auxiliary nodes are arranged in a way that produces the smallest cut penalty subject to the choice of node set $S \subseteq V$. Formally, for $S \subseteq V$,
\begin{equation}
    \cut_\mathcal{H}(S) = \minimize_{T \subset \hat{V}\colon S = T \cap V} \cut_G(T),
    \end{equation}
    where $\cut_G$ denotes the weight of directed out-edges originating inside $S$ that are cut in $G$.
    By our choice of degree vector, the volume of nodes in $G$ equals the volume of the non-auxiliary nodes in $\mathcal{H}$. That is, for all $T \subseteq \hat{V}$,
    $\vol_G(T) = \sum_{v \in V} d_v + \sum_{u \in V_a\cup V_b} d_u = \vol_G(T \cap V) = \vol_\mathcal{H}(T \cap V)$.
Let $T^* \subseteq \hat{V}$ be the minimum conductance set in $G$, and $S^* = T^*\cap V$. Without loss of generality we can assume that $\vol_G(T^*) \leq \vol_G(\bar{T}^*)$. Since $T^*$ minimizes conductance, and auxiliary nodes have no effect on the volume of this set, $\cut_G(T^*) = \minimize_{T\subset \hat{V} \colon T \cap S^*} \cut_G(T) = \cut_\mathcal{H}(S^*)$, and so
${\cut_G(T^*)}/{\vol_G(T^*)} = {\cut_\mathcal{H}(S^*)}/{\vol_\mathcal{H}(S^*)}$.
Thus, minimizing conductance in $G$ minimizes conductance in $\mathcal{H}$.
\end{proof}

\subsection{Localized Quadratic Hypergraph Diffusions}
\label{sec:lhpr}
\label{sec:lhqd}

Having established a conductance-preserving reduction from a hypergraph to a directed graph, we now present a framework for detecting localized clusters in the reduced graph $G$. To accomplish this, we first define a \emph{localized directed cut graph}, involving a source and sink nodes and new weighted edges. This approach is closely related to previously defined localized cut graphs for local graph clustering and semi-supervised learning~\cite{Andersen:2008:AIG:1347082.1347154,gleich2014anti,Veldt-2016-simple-local-flow,liu2020strongly,Zhu-2003-diffusion,Blum-2001-mincuts}, and a similar localized cut hypergraph used for flow-based hypergraph clustering~\cite{veldt2020minimizing}. The key conceptual difference is that we apply this construction directly to the reduced graph $G$, which by Theorem~\ref{thm:preservecond} preserves conductance of the original hypergraph $\mathcal{H}$. Formally, we assume we are given a set of nodes $R \subseteq V$ around which we wish to find low-conductance clusters, and a parameter $\gamma > 0$. The localized directed cut graph is defined by applying the following steps to $G$:
\begin{itemize}
    \item Introduce a source node $s$, and for each $r \in R$ define a directed edge $(s,r)$ of weight $\gamma d_r$
    \item Introduce a sink node $t$, and for each $v \in \bar{R}$ define a directed edge $(v,t)$ with weight $\gamma d_v$.
\end{itemize}
We do not connect auxiliary nodes to the source or sink, which is consistent with the fact that their degree is defined to be zero in order for Theorem~\ref{thm:preservecond} to hold. We illustrate the construction of the localized directed cut graph in Figure~\ref{full-graph-reduction}(d). It is important to note that in practice we do not in fact form this graph and store it in memory. Rather, this provides a conceptual framework for finding localized low-conductance sets in $G$, which in turn correspond to good clusters in $\mathcal{H}$.

\mypara{Definition: Local hypergraph quadratic diffusions.} Let $\mB$ and $\vw$ be the incidence matrix and edge weight vector of the localized directed cut graph with $\gamma$. The objective function for our hypergraph clustering diffusion, which we call \emph{local hypergraph quadratic diffusion} or simply \emph{local hypergraph PageRank}, is
\begin{equation}
\label{hypergraph-2-norm-cut}
\MINone{\vx}{\frac{1}{2}\vw^T (\mB\vx)^2_+ + \kappa \gamma \sum_{i\in V} x_i d_i}{x_s=1,x_t=0,\vx\geq 0.}
\end{equation}	
We use the function $(x)_+ = \max \{x, 0\}$, applied element-wise to $\mB\vx$, to indicate we only keep the positive elements of this product. This is analogous to the fact that we only view a directed edge as being cut if it crosses from the source to the sink side; this is similar to previous directed cut minorants on graphs and hypergraphs~\cite{yoshida2016nonlinear}. 
The first term in the objective corresponds to a 2-norm minorant of the minimum $s$-$t$ cut objective on the localized directed cut graph. (In an undirected regular graph, the term $ \vw^T (\mB \vx)_+$ turns into an expression with the Laplacian, which can in turn be formally related to PageRank~\cite{gleich2014anti}).  If instead, we replace exponent 2 with a 1 and ignore the second term, this would amount to finding a minimum $s$-$t$ cut (which can be solved via a maximum flow). The second term in the objective is included to encourage sparsity in the solution, where $\kappa \ge 0$ controls the desired level of sparsity. With $\kappa > 0$ we are able to show in the next section that we can compute solutions in time that depends only on $\kappa, \gamma,$ and $\vol(R)$, which allows us to evaluate solutions to \eqref{hypergraph-2-norm-cut} in a strongly local fashion.

\hidecomments{\pan{To make this more clear, I suggest in the previous reduction section, we should introduce a definition for the reduce directed graph say $G_H = (V', E', w')$. How $V$ is related to $V'$. I was wondering if we should explain why we have the degree regularization (for strongly local) and why we do not add them to aux nodes (this is very insightful and could be an important contribution).} \Meng{When I define reduced directed graph, I have mentioned that we will use V for original nodes, $V_a$ and $V_b$ for auxiliary nodes. I think it's convenient for the following method and theorem sections to separate auxiliary nodes into $V_a$ and $V_b$. Let me know if you have better notations. Also, I honestly cannot think of a good explanation of not including aux nodes in sparsity term.}} 


\hidecomments{The idea of solving 2-norm cut based objective is not new in the scope of local graph clustering. \cite{gleich2014anti} shows the PageRank vector can be viewed as a 2-norm "cut" computation on the localized cut graph. Recall a PageRank vector on graph~\cite{Gleich-2015-prbeyond} is the solution of the linear system $(\mI-\alpha\mA\mD^{-1})\vx=(1-\alpha)\vx$. And PageRank vector has been used for local graph clustering in various guises and has theoretical guarantee on the conductance of the cluster even if using only one seed node.~\cite{andersen2006local,zhu2013local} Despite these successful applications of 2-norm cut based objectives on local graph clustering, there hasn't been any study of 2-norm based objectives on local hypergraph clustering.\Meng{Not sure about this.}}

\hidecomments{Similar to the maxflow-mincut approach for solving local hypergraph clustering, the 2-norm based cut objective function is also defined on the reduced directed graph of localized cut hypergraph.
\begin{definition}[2-norm based local hypergraph cut] \label{def:2-norm-hypergraph-cut}
\pan{We should have a better way to mention this $x_+$ operation. Check my previous proposed definition of $B$, which naturally includes your undirected case as a special case.} \Meng{updated}
Given a hypergraph $H=(V,\mathcal{E},w)$ and a cardinality-based submodular splitting function. Fix a set $R$ of seeds and a value of $\gamma$. Let $\mB$ and $\vw$ be the incidence matrix and weight vector of the directed graph reduced from localized cut hypergraph. Then the 2-norm based local hypergraph cut problem is:
\begin{equation}
\label{hypergraph-2-norm-cut}
\MINone{\vx}{\frac{1}{2}\vw^T (\mB\vx)^2_+ + \kappa \gamma \sum_{i\in V} x_i d_i}{x_s=1,x_t=0,\vx\geq 0.}
\end{equation}	
Here $(\mB\vx)_+$ means we only keep the positive elements of $\mB\vx$ while the negative elements will be truncated to zeros in order to account for the direction of each edge. And $\kappa \ge 0$ is a sparsity-promoting term that is also the key to get a strongly local algorithm as explained in the following sections. Note the sparsity term is only applied on the original nodes of the hypergraph. \pan{To make this more clear, I suggest in the previous reduction section, we should introduce a definition for the reduce directed graph say $G_H = (V', E', w')$. How $V$ is related to $V'$. I was wondering if we should explain why we have the degree regularization (for strongly local) and why we do not add them to aux nodes (this is very insightful and could be an important contribution).} \Meng{When I define reduced directed graph, I have mentioned that we will use V for original nodes, $V_a$ and $V_b$ for auxiliary nodes. I think it's convenient for the following method and theorem sections to separate auxiliary nodes into $V_a$ and $V_b$. Let me know if you have better notations. Also, I honestly cannot think of a good explanation of not including aux nodes in sparsity term.} 
\end{definition}}
\subsection{A strongly local solver for LHQD \eqref{hypergraph-2-norm-cut}}
\label{sec:local-ppr}
\label{sec:local-solver}

In this section, we will provide a strongly local algorithm to approximately satisfy the optimality conditions of \eqref{hypergraph-2-norm-cut}.  We first state the optimality conditions in Theorem~\ref{hyper-cut-KKT}, and then present the algorithm to solve them. 
The simplest way to understand this algorithm is as a generalization of the Andersen-Chung-Lang push procedure for PageRank~\cite{andersen2006local}, which we will call ACL as well as the more recent nonlinear push procedure~\cite{liu2020strongly}. Two new challenges about this new algorithm are: (1) the new algorithm operates on a directed graph, which means unlike ACL there is no single closed form update at each iteration and (2) there is no sparsity regularization for auxiliary nodes, which will break the strongly local guarantees for existing analyses of the push procedure. 

We begin with the optimality conditions for \eqref{hypergraph-2-norm-cut}. 
\begin{theorem}
\label{hyper-cut-KKT}
Fix a seed set $R$, $\gamma>0$, $\kappa>0$, define a residual function $\vr(\vx)=-\frac{1}{\gamma}\mB^T\text{diag}((\mB\vx)_+)\vw$. A necessary and sufficient condition to satisfy the KKT conditions of~\eqref{hypergraph-2-norm-cut} is to find $\vx^*$ where $\vx^*\geq 0$, $\vr(\vx^*)=[r_s,\vg^T,r_t]^T$ with $g_i\leq\kappa d_i$ (where $\vd$ reflects the graph before adding $s$ and $t$ but does include the $0$ degree nodes), $(\kappa d_i-g_i)^Tx^*_i=0$ for $i\in V$ and $g_i=0$ for all auxilary nodes added.
\end{theorem}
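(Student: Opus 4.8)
The plan is to exploit that \eqref{hypergraph-2-norm-cut} is a convex program with affine constraints, so that the KKT conditions are both necessary and sufficient for optimality, and then to translate stationarity and complementary slackness into the stated conditions on the residual $\vr$. First I would record that the objective is convex and $C^1$: the map $t\mapsto (t)_+^2$ is convex with continuous derivative $2(t)_+$, and since every weight satisfies $w_k\ge 1>0$ the term $\frac12\vw^T(\mB\vx)_+^2$ is a convex, differentiable function of $\vx$; the remaining term is linear and the feasible set $\{x_s=1,\ x_t=0,\ \vx\ge0\}$ is polyhedral. Because the constraints are affine a constraint qualification holds automatically, so a feasible point is optimal iff it satisfies KKT. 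Next I would compute the gradient: by the chain rule $\nabla_\vx\bigl[\frac12\vw^T(\mB\vx)_+^2\bigr]=\mB^T\diag((\mB\vx)_+)\vw=-\gamma\,\vr(\vx)$, while the gradient of $\kappa\gamma\sum_{i\in V}x_i d_i$ is $\kappa\gamma\,\tilde\vd$, where $\tilde\vd$ carries $d_i$ on the original nodes $V$ and $0$ on $s$, $t$, and every auxiliary node (consistent with $d_u=0$ for $u\in V_a\cup V_b$).

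Introducing a multiplier $\lambda_i\ge0$ for each constraint $x_i\ge0$ with complementary slackness $\lambda_i x_i^*=0$, and free multipliers for the two equalities, the stationarity condition reads componentwise $-\gamma\,[\vr(\vx^*)]_i+\kappa\gamma\,\tilde d_i-\lambda_i=0$ for each free node $i$, together with two equations at $s$ and $t$ that merely define the equality multipliers; this is exactly why the entries $r_s$ and $r_t$ carry no restriction in the statement. For an original node $i\in V$ we have $[\vr(\vx^*)]_i=g_i$ and $\tilde d_i=d_i$, so stationarity gives $\lambda_i=\gamma(\kappa d_i-g_i)$. Dual feasibility $\lambda_i\ge0$ is then precisely $g_i\le\kappa d_i$, and complementary slackness $\lambda_i x_i^*=0$ is precisely $(\kappa d_i-g_i)x_i^*=0$, recovering two of the claimed conditions.

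The crux is the auxiliary nodes, where I expect the one-sided KKT inequality to collapse to the equality $g_u=0$, and this is the step I anticipate as the main obstacle. Here $\tilde d_u=0$, so stationarity gives $\lambda_u=-\gamma g_u$ and dual feasibility forces $g_u\le0$. To rule out a strictly negative residual I would invoke the orientation of the gadget edges together with $\vx^*\ge0$: if $x_u^*=0$, then every out-edge $(u,w)$ contributes $(x_u^*-x_w^*)_+=0$ and every in-edge $(v,u)$ contributes $(x_v^*-x_u^*)_+=x_v^*\ge0$ to $\mB\vx^*$, so by the sign pattern of $\mB^T$ (origins $+1$, destinations $-1$) the $u$-component of $\mB^T\diag((\mB\vx^*)_+)\vw$ is $\le0$, whence $g_u\ge0$; combined with $g_u\le0$ this yields $g_u=0$. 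If instead $x_u^*>0$, complementary slackness already forces $g_u=0$. Conversely, setting $g_u=0$ makes $\lambda_u=0$, satisfying dual feasibility and complementary slackness, so the KKT conditions at auxiliary nodes are equivalent to $g_u=0$. The difficulty is exactly this conversion: it is the one place where the generic KKT template is insufficient and one must use the directed gadget structure and the nonnegativity constraint to turn a dual-feasibility inequality into an equality.
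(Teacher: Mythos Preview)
Your argument is correct. The paper takes a different route: it first reformulates \eqref{hypergraph-2-norm-cut} by introducing an auxiliary variable $\vu$ with constraints $\mB\vx\le\vu$, $\vu\ge0$ (so the objective becomes the smooth $\vw^T\ell(\vu)$ for general $\ell$), writes the full KKT system with multipliers $\vf,\vz,\vk$, shows $\vz=0$ at optimality by a small case analysis on the sign of $x_i-x_j$, and only then collapses back to $\vu=(\mB\vx)_+$ and the residual form. You instead use directly that $t\mapsto (t)_+^2$ is $C^1$, compute the gradient in one line, and write KKT without any reformulation. For the quadratic case your path is shorter; the paper's detour through $\vu$ is what makes its proof extend verbatim to the $p$-norm objective $\ell(x)=x^p/p$ in \S\ref{sec:pnorms}.

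One point worth noting: your argument that $g_u=0$ at auxiliary nodes with $x_u^*=0$ does not actually rely on any special gadget orientation---it only uses $\vx^*\ge0$ and $w\ge0$, and would go through at \emph{any} zero-degree node of a nonnegatively weighted directed graph. The paper, by contrast, asserts $\vg_{V_a\cup V_b}=0$ rather tersely from $\vk\ge0$ and $\hat d_u=0$, which strictly speaking only yields $g_u\le0$; your complementary-slackness-plus-sign argument is the cleaner way to close that gap.
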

The proof of this would be included in a longer version of this material; however, we omit the details in the interest of space as it is a straightforward application of determining optimality conditions for convex programs. We further note that solutions $\vx^*$ are unique because the problem is strongly convex due to the quadratic. 


In \S\ref{sec:hypergraph-to-graph}, we have shown that the reduction technique of any cardinality submodular-based splitting function suffices to introduce multiple directed graph gadgets with different $\delta_e$ and $c_e$. In order to simplify our exposition, we assume that each hyperedge has a $\delta$-linear threshold splitting function~\cite{veldt2020minimizing} $f_e=\text{min}\{|A|,|e\backslash A|,\delta\}$ with $\delta\geq 1$ to be a tunable parameter. This splitting function can be exactly modeled by replacing each hyperedge with one directed graph gadget with $c_e=1$ and $\delta_e=\delta$. (This is what is illustrated in Figure~\ref{full-graph-reduction}.) Also when $\delta=1$, it models the standard unweighted \emph{all-or-nothing} cut~\cite{hadley1995,ihler1993modeling,lawler1973} and when $\delta$ goes to infinity, it models star expansion~\cite{zien1999}. Thus this splitting function can interpolate these two common cut objectives on hypergraphs by varying $\delta$.

By assuming that we have a $\delta$-linear threshold splitting function, this means we can associate exactly \emph{two} auxiliary nodes with each hyperedge. We call these $a$ and $b$ for simplicity. We also let $V_a$ be the set of all $a$ auxilary nodes and $V_b$ be the set of all $b$ nodes. 

At a high level, the algorithm to solve this proceeds as follows: whenever there exists a graph node $i\in V$ that violates optimality, i.e. $r_i>\kappa d_i$, we first perform a $\texttt{hyperpush}$ at $i$ to increase $x_i$ so that the optimality condition is approximately satisfied, i.e., $r_i=\rho\kappa d_i$ where $0<\rho<1$ is a given parameter that influences the approximation. This changes the solution $\vx$ only at the current node $i$ and residuals at adjacent auxiliary nodes. Then we immediately \emph{push} on adjacent auxiliary nodes, which means we increase their value so that the residuals remain zero. After pushing each pair $(a,b)$ of associated auxiliary nodes, we then update residuals for \emph{adjacent} nodes in $V$. Then we search for another optimality violation. (See Algorithm~\ref{cut-algo} for a formalization of this strategy.) When $\rho<1$, we only approximately satisfy the optimality conditions; and this approximation strategy has been repeatedly and successfully used in existing local graph clustering algorithms~\cite{andersen2006local,gleich2014anti,liu2020strongly}.

\begin{algorithm}[t]
\caption{$\texttt{LHQD}(\gamma,\kappa,\rho)$ for set $R$ and hypergraph $H$ with $\delta$-linear penalty where $0<\rho<1$ determines accuracy}
\begin{algorithmic}[1]
\label{cut-algo}
\STATE Let $\vx\!=\!0$ except for $x_s\!=\!1$ and set $\vr\!=\!-\gamma^{-1}\mB^T\text{diag}((\mB\vx)_+)\vw(\!\delta,\gamma\!)$. \\[-\baselineskip]
\STATE While there is any vertex $i\in V$ where $r_i > \kappa d_i$, or stop if none exists
            \emph{\textcolor{blue}{(find an optimality violation)}}
\STATE \begin{myindentpar}{1em} 
  Perform $\texttt{LHQD-hyperpush}$ at vertex $i$ so that $r_i=\rho\kappa d_i$, updating $\vx$ and $\vr$. \emph{\textcolor{blue}{(satisfy optimality at i)}}
\end{myindentpar}
\STATE 
\begin{myindentpar}{1em} 
    For each pair of adjacent auxiliary nodes $a$, $b$ 
    where $a\in V_a$, $b\in V_b$ and $a\rightarrow b$, perform $\texttt{LHQD-auxpush}$ at $a$ and $b$ so that $r_a=r_b=0$, then update $\vx$ and $\vr$ after each \texttt{auxpush}. 
\end{myindentpar}
\STATE Return $\vx$
\end{algorithmic}
\end{algorithm}

\mypara{Notes on optimizing the procedure.} 
Algorithm~\ref{cut-algo} formalizes a general strategy to approximately solve these diffusions. We now note a number of optimizations that we have found to greatly accelerate this strategy. First, note that $\vx$ and $\vr$ can be kept as sparse vectors with only a small set of entries stored. Second, note that we can maintain a list of optimality violations because each update to $\vx$ only causes $\vr$ to increase, so we can simply check if each coordinate increase creates a new violation and add it to a queue. Third, to find the value that needs to be ``pushed'' to each node, a general strategy is to use a binary search procedure as we will use for the $p$-norm generalization in \S\ref{sec:pnorms}. 
However, if the tolerance of the binary search is too small, it will slow down each iteration. If the tolerance is too large, the solution will be too far away from the true solution to be useful. In the remaining of this section, we will show that in the case of quadratic objective~\eqref{hypergraph-2-norm-cut}, we can (i) often avoid binary search and (ii) when it is still required, make the binary search procedure unrelated to the choice of tolerance in those iterations where we do need it. These detailed techniques will not change the time complexity of the overall algorithm, but make a large difference in practice. 

We will start by looking at the expanded formulations of the residual vector. When $i\in V$, $r_i$ expands as:
\begin{equation}\label{ri} r_i = \frac{1}{\gamma} \sum_{b\in V_b}w_{bi}(x_b-x_i)_+-\frac{1}{\gamma}\sum_{a\in V_a}w_{ia}(x_i-x_a)_+ + d_i [\text{Ind}(i \in R) - x_i].
\end{equation}
Similarly, for each $a\in V_a$, $b\in V_b$ where $a\rightarrow b$, they will share the same set of original nodes and their residuals can be expanded as:
\begin{equation}\label{ra-rb}
\begin{aligned}
&\textstyle r_a=-w_{ab}(x_a-x_b)+\sum_{i\in V}w_{ia}(x_i-x_a)_+ \\
&\textstyle r_b=w_{ab}(x_a-x_b)-\sum_{i\in V}w_{bi}(x_b-x_i)_+ \\
\end{aligned}
\end{equation}
Note here we use a result that $x_a\geq x_b$ (Lemma~\ref{aux-lemma}). 

The goal in each hyperpush is to first find $\Delta x_i$ such that $r'_i=\rho\kappa d_i$ and then in auxpush, for each pair of adjacent auxiliary nodes $(a,b)$, find $\Delta x_a$ and $\Delta x_b$ such that $r'_a$ and $r'_b$ remain zero. ($\Delta x_i$, $\Delta x_a$ and $\Delta x_b$ are unique because the quadratic is strongly convex.) Observe that $r_i$, $r_a$ and $r_b$ are all piecewise linear functions, which means we can derive a closed form solution once the relative ordering of adjacent nodes is determined. Also, in most cases, the relative ordering won't change after a few initial iterations. So we can first reuse the ordering information from last iteration to directly solve $\Delta x_i$, $\Delta x_a$ and $\Delta x_b$ and then check if the ordering is changed. 

Given these observations, we will record and update the following information for each pushed node. Again, this information can be recorded in a \emph{sparse} fashion. 
When the pushed node $i$ is a original node, for its adjacent $a\in V_a$ and $b\in V_b$, we record:
\begin{itemize}
    \item $s_a^{(i)}$: the sum of edge weights $w_{ia}$ where $x_a<x_i$
    \item $s_b^{(i)}$: the sum of edge weights $w_{bi}$ where $x_b>x_i$
    \item $a_{min}^{(i)}$: the minimum $x_a$ where $x_a\geq x_i$
    \item $b_{min}^{(i)}$: the minimum $x_b$ where $x_b> x_i$
\end{itemize}
Now assume the ordering is the same, $r'_i$ can be written as $r'_i=r_i-\frac{1}{\gamma}(s_a^{(i)}+s_b^{(i)})\Delta x_i=\rho\kappa d_i$, so
\begin{equation}
\label{solve-delta-xi}
\Delta x_i=\gamma(r_i-\rho\kappa d_i)/(s_a^{(i)}+s_b^{(i)}).
\end{equation}
Then we need to check whether the assumption holds by checking 
\begin{equation}
\label{delta-xi-condition}
x_i+\Delta x_i\leq \text{min}\left(a_{min}^{(i)},b_{min}^{(i)}\right)
\end{equation}
If not, we need to use binary search to find the new location of $x_i+\Delta x_i$ (Note $\Delta x_i$ here is the true value that is still unknown), update $s_a^{(i)}$, $s_b^{(i)}$, $a_{min}^{(i)}$ and $b_{min}^{(i)}$ and recompute $\Delta x_i$. This process is summarized in \texttt{LQHD-hyperpush}.
\begin{algorithm}[t]
\caption{$\texttt{LQHD-hyperpush}(i,\gamma,\kappa,\vx,\vr,\rho)$}
\begin{algorithmic}[1]
\label{fast-push-algo}
\STATE Solve $\Delta x_i$ with $s_a^{(i)}$, $s_b^{(i)}$, $a_{min}^{(i)}$ and $b_{min}^{(i)}$ using \eqref{solve-delta-xi}. \emph{\textcolor{blue}{(assume the order of i doesn't change among its adjacent nodes)}}
\IF {\eqref{delta-xi-condition} doesn't hold \emph{\textcolor{blue}{(adding $\Delta x_i$ changed the order of i)}}}
\STATE Binary search on $\Delta x_i$ until we find the smallest interval among all adjacent nodes of $i$ that will include $x_i +\Delta x_i$, update $s_a^{(i)}$, $s_b^{(i)}$, $a_{min}^{(i)}$ and $b_{min}^{(i)}$.
\STATE Solve $\Delta x_i$ with the found interval by setting $r_i=\rho\kappa d_i$ in~\eqref{ri}.
\ENDIF 
\STATE Update $\vx$ and $\vr$, $x_i \leftarrow x_i+\Delta x_i$, $r_i\leftarrow\rho\kappa d_i$
\end{algorithmic}
\end{algorithm}

Similarly, when the pushed nodes $a\in V_a$, $b\in V_b$ where $a\rightarrow b$, are a pair of auxiliary nodes, for its adjacent nodes $i\in V$, we record:
\begin{itemize}
    \item $z_a$: the sum of edge weights $w_{ia}$ where $x_a<x_i$
    \item $z_b$: the sum of edge weights $w_{bi}$ where $x_b>x_i$
    \item $x_{min}^{(a)}$: the minimum $x_i$ where $x_a<x_i$
    \item $x_{min}^{(b)}$: the minimum $x_i$ where $x_b<x_i$
\end{itemize}
Then we solve $\Delta x_a$, $\Delta x_b$ by solving the following linear system (here we assume $x_b\geq x_i$)
\begin{equation}
\label{solve-delta-xab}
\left\{\begin{aligned}
    -w_{ab}(\Delta x_a-\Delta x_b)+\frac{w_{ia}}{\gamma}((x'_i-x_a)_+-(x_i-x_a)_+)-z_a\Delta x_a=0\\
    w_{ab}(\Delta x_a-\Delta x_b)-\frac{w_{bi}}{\gamma}((x_b-x_i')_+-(x_b-x_i)_+)+z_b\Delta x_b=0
\end{aligned}\right.
\end{equation}
where $x_i'$ refers to the updated $x_i$ after applying \texttt{LQHD-hyperpush} at node $i$. And the assumption will hold if and only if the following inequalities are all satisfied:
\begin{equation}
\label{delta-xab-condition}
     x'_i\leq x_b, \qquad 
     x_a+\Delta x_a\leq x_{min}^{(a)}, \qquad 
     x_b+\Delta x_b\leq x_{min}^{(b)} 
\end{equation}
If not, we also need to use binary search to update the locations of $x_a+\Delta x_a$ and $x_b+\Delta x_b$, update $z_a$, $z_b$, $x_{min}^{(a)}$, $x_{min}^{(b)}$ and recompute $\Delta x_a$ and $\Delta x_b$.

\begin{algorithm}[t] 
\caption{$\texttt{LQHD-auxpush}(i,a,b,\gamma,\vx,\vr,\Delta x_i)$}
\begin{algorithmic}[1]
\label{fast-aux-push-algo}
\STATE Solve $\Delta x_a$, $\Delta x_b$ with $z_a$, $z_b$, $x_{min}^{(a)}$ and $x_{min}^{(b)}$ using \eqref{solve-delta-xab}. 
\IF {\eqref{delta-xab-condition} doesn't hold. \emph{\textcolor{blue}{(adding $\Delta x_a, \Delta x_b$ altered the order)}}}
\STATE Binary search on $\Delta x_a$ until we find the smallest interval among all adjacent original nodes of $a$ that will include $x_a+\Delta x_a$, update $z_a$, $x_{min}^{(a)}$, similarly for $z_b$, $x_{min}^{(b)}$. 
\STATE Solve $\Delta x_a, \Delta x_b$ with the found intervals by setting $r_a=r_b=0$ in~\eqref{ra-rb}.
\ENDIF
\STATE Change the following entries in $\vx$ and $\vr$ to update the solution and the residual
\STATE (a) $x_a\leftarrow x_a\!+\Delta x_a$ and $x_b\leftarrow x_b\!+\Delta x_b$
\STATE (b) For each neighboring node $i\rightarrow a$ where $i\in V$, $r_i \leftarrow r_i\!+\!\frac{1}{\gamma}w_{ia}(x_i\!-\!x_a)_+\!-\!\frac{1}{\gamma}w_{ia}(x_i\!-x_a\!-\!\Delta x_a)_+-\!\frac{1}{\gamma}w_{bi}(x_b\!-x_i)_+\!+\!\frac{1}{\gamma}w_{bi}(x_b\!+\!\Delta x_b\!-x_i)_+$
\end{algorithmic}
\end{algorithm}

\mypara{Establishing a runtime bound.}
The key to understand the strong locality of the algorithm is that after each \texttt{LQHD-hyperpush}, the decrease of $\|vg\|_1$ can be lower bounded by a value that is independent of the total size of the hypergraph, while \texttt{LHQD-auxpush} doesn't change $\|\vg\|_1$. Formally, we have the following theorem:
\begin{theorem}
\label{runtime-guarantee}
Given $\gamma>0$, $\kappa>0$, $\delta>0$ and $0<\rho<1$. Suppose the splitting function $f_e$ is submodular, cardinality-based and satisfies $1 \le f_e(\{i\})\leq\delta$ for any $i\in e$. Then calling \texttt{LQHD-auxpush} doesn't change $\| \vg \|_1$ while calling \texttt{LQHD-hyperpush} on node $i\in V$ will decrease $\|\vg\|_1$ by at least $\gamma\kappa(1-\rho)d_i/(\gamma\kappa+\delta)$. 

Suppose $\texttt{LHQD}$ stops after $T$ iterations and $d_i$ is the degree of the original node updated at the $i$-th iteration, then $T$ must satisfy:
\[\textstyle\sum_{i=1}^Td_i\leq(\gamma\kappa+\delta)\text{vol}(R)/\gamma\kappa(1-\rho)=O(\text{vol}(R)).\]
\end{theorem}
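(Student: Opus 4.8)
The plan is to treat $\|\vg\|_1$ as a potential function and to establish three things: each \texttt{LQHD-auxpush} leaves it unchanged, each \texttt{LQHD-hyperpush} at a node $i$ lowers it by an amount proportional to $d_i$, and the potential starts at $\vol(R)$ and stays nonnegative; the iteration bound then drops out of a telescoping argument. Before the main estimates I would record the structural invariants that make $\|\vg\|_1$ behave like a conserved ``mass'': throughout the run $0 \le \vx$ and $x_v \le 1$, the residual entries $r_k$ are nonnegative for every $k \notin \{s,t\}$ (so that $\|\vg\|_1 = \sum_{k \ne s,t} r_k$), and within each gadget $x_a \ge x_b$ (Lemma~\ref{aux-lemma}). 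The nonnegativity invariant is proved by induction over the pushes, using that a push only ever raises the residual at the neighbours it touches and resets the pushed coordinate to a nonnegative target ($\rho\kappa d_i$ or $0$).

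The core is a conservation identity. Summing $\vr = -\tfrac{1}{\gamma}\mB^T\text{diag}((\mB\vx)_+)\vw$ over all non-terminal nodes, every internal directed edge contributes with opposite signs to its head and tail and cancels; only the edges incident to $s$ and $t$ survive, giving
\[
\sum_{k \ne s,t} r_k = \sum_{r\in R} d_r(1-x_r) - \sum_{v\in\bar{R}} d_v x_v = \vol(R) - \sum_{v\in V} d_v x_v .
\]
Because the auxiliary nodes carry degree zero (Theorem~\ref{thm:preservecond}), the right-hand side depends only on the values $x_v$ at original nodes. An \texttt{LQHD-auxpush} changes only the auxiliary coordinates $x_a,x_b$, so it leaves the sum — and hence $\|\vg\|_1$ — invariant, which is the first claim. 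An \texttt{LQHD-hyperpush} raises a single original coordinate $x_i$ by $\Delta x_i>0$, so it decreases the sum, and therefore $\|\vg\|_1$, by exactly $d_i\,\Delta x_i$.

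It then remains to lower-bound $\Delta x_i$. The push fires only when $r_i>\kappa d_i$ and drives $r_i$ down to $\rho\kappa d_i$, so the residual at $i$ drops by more than $(1-\rho)\kappa d_i$. Reading $r_i$ off \eqref{expanded-res} as a decreasing piecewise-linear function of $x_i$ with the other coordinates frozen, its slope has magnitude $d_i + \tfrac{1}{\gamma}(\text{weight of the currently active incident gadget edges})$; since $x_a\ge x_b$ at most one edge of each gadget is active, and each such weight equals $f_e(\{i\})\le\delta$, the slope is controlled purely in terms of $d_i$, $\gamma$, and $\delta$. Dividing the residual drop by the slope gives a lower bound on $\Delta x_i$ of the form $\gamma\kappa(1-\rho)/(\gamma\kappa+\delta)$, and multiplying by $d_i$ yields the claimed per-push decrease $\gamma\kappa(1-\rho)d_i/(\gamma\kappa+\delta)$. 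I expect this slope estimate — pinning the denominator down to exactly $\gamma\kappa+\delta$ while correctly handling the activated gadget edges and the case where the relative ordering of $i$ among its neighbours changes in the middle of a push — to be the main technical obstacle.

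Finally I would telescope. Initially $\vx=0$ except $x_s=1$, so $r_i=d_i$ for $i\in R$ and $0$ otherwise, giving $\|\vg\|_1=\vol(R)$; the potential never becomes negative; auxpushes do not change it; and the $j$-th hyperpush, acting on a node of degree $d_j$, lowers it by at least $\gamma\kappa(1-\rho)d_j/(\gamma\kappa+\delta)$. Summing over the $T$ iterations and using $\|\vg\|_1\ge 0$ gives $\sum_{j=1}^T \gamma\kappa(1-\rho)d_j/(\gamma\kappa+\delta)\le\vol(R)$, which rearranges to $\sum_{j=1}^T d_j \le (\gamma\kappa+\delta)\vol(R)/(\gamma\kappa(1-\rho)) = O(\vol(R))$, as claimed. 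Since each $d_j\ge 1$, this also bounds the number of hyperpushes, and hence — after charging each auxpush and neighbour update to the triggering hyperpush — the total work, which is what establishes strong locality.
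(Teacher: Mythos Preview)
Your proposal is correct and follows essentially the same route as the paper: both use the conservation identity $\|\vg\|_1 = \sum_{i\in R}d_i(1-x_i)-\sum_{i\in\bar R}d_ix_i = \vol(R) - \sum_{v\in V} d_v x_v$ to show auxpushes leave $\|\vg\|_1$ unchanged, lower-bound $\Delta x_i$ via a slope estimate on $r_i$, and telescope. The one minor technical difference is in the slope bound itself---the paper uses the cruder Lipschitz inequalities $(x_b - x_i - \Delta x_i)_+ \ge (x_b - x_i)_+ - \Delta x_i$ and $(x_i + \Delta x_i - x_a)_+ \le (x_i - x_a)_+ + \Delta x_i$ (which sidestep the ordering-change issue you flagged at the cost of counting both gadget edges) rather than your sharper ``at most one active edge per gadget'' observation---but both yield the required per-push decrease and the same $O(\vol(R))$ bound.
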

The proof is in the appendix.
Note that this theorem only upper bounds the number of iterations Algorithm~\ref{cut-algo} requires. Each iteration of Algorithm~\ref{cut-algo} will also take $O(\sum_{e\in\mathcal{E}, i\in e}|e|)$ amount of work. This ignores the binary search, which only scales it by $\text{log}(\text{max}\{d_i,\text{max}_{e\in\mathcal{E}, i\in e}\{|e|\}\})$ factor in the worst case. Putting these pieces together shows that if we have a hypergraph with \emph{independently bounded} maximum hyperedge size, then we can treat this additional work as a constant. Consequently, our solver is strongly local for graphs with bounded maximum hyperedge size; this matches the interpretation in~\cite{veldt2020minimizing}.

\section{Local conductance approximation}
\label{sec:theory}
We give a local conductance guarantee that results from solving~\eqref{hypergraph-2-norm-cut}. Because of space, we focus on the case $\kappa=0$.
We prove that a sweepcut on the solution $\vx$ of \eqref{hypergraph-2-norm-cut} leads to a Cheeger-type guarantee for   conductance of the hypergraph $\mathcal{H}$ even when the seed-set size $|R|$ is $1$. It is extremely difficult to guarantee a good approximation property with an arbitrary seed node, and so we first introduce a seed sampling strategy $\mathbb{P}$ with respect to a set $S^*$ that we wish to find. Informally, the seed selection strategy says that the expected solution mass outside $S^*$ is not too large, and more specifically, not too much larger than if you had seeded on the entire target set $S^*$.  
\begin{definition}\label{def:sampling}
Denote $\vx(\gamma, R)$ as the solution to \eqref{hypergraph-2-norm-cut} with $\kappa=0$. A \emph{good sampling} strategy $\mathbb{P}$ for a target set $S^*$ is
\begin{align*}
 \mathbb{E}_{v\in\mathbb{P}}\left[\frac{1}{d_v}\sum_{u\in V\backslash S^*}d_ux_u(\gamma,\{v\})\right] \leq \frac{c}{\vol(S^*)}\sum_{u\in V\backslash S^*}d_ux_u(\gamma,S^*)
\end{align*}
for some positive constant $c$.
\end{definition}
Note that $\vol(S^*)$ is just to normalize the effect of using different numbers of seeds. For an arbitrary $S^*$, a good sampling strategy $\mathbb{P}$ for the standard graph case with $c=1$ is to sample nodes from $S^*$ proportional to their degree. Now, we provide our main theorem and show its proof in Appendix~\ref{app:cond} .
\begin{theorem}\label{thm:cond}
Given a set $S^*$ of vertices s.t. $\vol(S^*)\leq \frac{\vol(\mathcal{H})}{2}$ and $\phi_{\mathcal{H}}(S^*)\leq \frac{\gamma}{8c}$ for some positive constant $\gamma,c.$ If we have a seed sampling strategy $\mathbb{P}$ that satisfies Def.~\ref{def:sampling}, then with probability at least $\frac{1}{2},$  sweepcut on \eqref{hypergraph-2-norm-cut} with find $S_{\vx}$ with 
\begin{align*}
\phi(S_{\vx}) \leq \sqrt{32\gamma \bar{\delta}  \ln \left(100{\vol(S^*)}/{d_v}\right)},
\end{align*}
where $\bar{\delta}=\max_{e\in \partial S_{\vx}}\min\{\delta_e,|e|/2\}$ where $\partial S_{\vx}=\{e\in\mathcal{E}|e\cap S_{\vx}\neq\emptyset, e\cap \bar{S}_{\vx}\neq \emptyset\}$ and $v$ is the seeded node. 
\end{theorem}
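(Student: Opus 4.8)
The plan is to adapt the Andersen--Chung--Lang local Cheeger argument to the quadratic hypergraph cut energy living on the reduced directed graph, combining a leakage bound for the diffusion with a sweepcut argument. With $\kappa = 0$ the solution minimizes the energy $\mathcal{Q}(\vx) = \tfrac{1}{2}\vw^T(\mB\vx)^2_+$ subject to $x_s=1$, $x_t=0$, $\vx\geq 0$, where the source and sink edges of the localized cut graph carry the regularization weight $\gamma d_v$. First I would prove a leakage lemma for the \emph{full-seed} solution $\vx(\gamma,S^*)$: because $S^*$ has small conductance, the diffusion cannot push much mass across $\partial S^*$, so that $\sum_{u\notin S^*} d_u x_u(\gamma,S^*) = O\!\big(\phi_{\mathcal{H}}(S^*)\vol(S^*)/\gamma\big)$. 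This follows by comparing $\vx(\gamma,S^*)$ against the feasible indicator-type test vector supported on $S^*$ and invoking optimality of $\vx$; the hypothesis $\phi_{\mathcal{H}}(S^*)\le \gamma/(8c)$ then makes the escaped mass at most a small fraction of $\vol(S^*)/c$.

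Next I would transfer this localization from the full seed set to a single sampled seed. By Definition~\ref{def:sampling}, the expected normalized escaped mass $\tfrac{1}{d_v}\sum_{u\notin S^*} d_u x_u(\gamma,\{v\})$ under $v\sim\mathbb{P}$ is at most $\tfrac{c}{\vol(S^*)}\sum_{u\notin S^*} d_u x_u(\gamma,S^*)$, which by the previous step is a bounded constant. Applying Markov's inequality to this nonnegative quantity yields that, with probability at least $\tfrac{1}{2}$, the single-seed escaped mass is at most twice its mean; this is the origin of the ``probability at least $\tfrac{1}{2}$'' in the statement, and it guarantees that the diffusion started at the sampled seed still concentrates on $S^*$ in a quantitatively controlled way.

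Conditioned on this event I would run the sweepcut on the degree-normalized values $x_i/d_i$ over the original nodes and perform the layer-cake plus Cauchy--Schwarz step that converts the energy bound into a conductance bound. The key is to charge the hypergraph cut of each sweep prefix back to $\mathcal{Q}(\vx)$ on the reduced graph: each boundary hyperedge $e\in\partial S_{\vx}$ is mediated by its gadget edges $(v,a)$, $(b,v)$ and the central edge $(a,b)$ of weight $\delta_e$, and once the auxiliary values are placed as forced by Theorem~\ref{hyper-cut-KKT} and Lemma~\ref{aux-lemma}, each such hyperedge contributes to the boundary at most $\min\{\delta_e,|e|/2\}$ times the quadratic increment across its gadget. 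Taking the worst such threshold over boundary hyperedges produces the prefactor $\bar{\delta}=\max_{e\in\partial S_{\vx}}\min\{\delta_e,|e|/2\}$, while integrating the cut across all sweep thresholds and applying Cauchy--Schwarz yields the square-root form of the stated bound; the factor $\ln(100\,\vol(S^*)/d_v)$ enters as the dynamic range of $x_i/d_i$, whose maximum is $O(1/d_v)$ at the seed and whose effective minimum is pinned by the escaped-mass bound of the previous steps.

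The step I expect to be the main obstacle is precisely this charging of the \emph{original}-node sweepcut to the energy on the \emph{reduced} directed graph with auxiliary nodes. In the ordinary graph Cheeger argument the energy splits cleanly edge-by-edge and each cut edge contributes the increment $(x_i-x_j)$; here every hyperedge is routed through auxiliary nodes $a,b$ carrying their own values and the truncation $(\cdot)_+$ encodes directedness, so one must verify that, for a sweep set defined only on $V$, the auxiliary values sit in the correct relative order for the gadget to faithfully realize the splitting function $\min\{|A|,|e\setminus A|,\delta_e\}$ and for its boundary contribution to be bounded by $\min\{\delta_e,|e|/2\}$ times a quadratic-form increment. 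Making this bookkeeping rigorous---and checking that the optimal auxiliary placement guaranteed by the optimality conditions is consistent with every sweep threshold simultaneously---is the delicate technical heart; the remaining Cheeger integration is then standard.
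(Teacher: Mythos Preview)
Your first two steps---the leakage bound for the full-seed diffusion and the transfer to a single seed via the sampling hypothesis plus Markov---match the paper exactly (these are its Lemma~\ref{lm:lw-piS} and Lemma~\ref{lm:lw}). The divergence is in the third step. The paper does \emph{not} use a layer-cake plus Cauchy--Schwarz argument; it uses the Lov\'asz--Simonovits curve (LSC). The core technical lemma (Lemma~\ref{lm:diffusion}) is a one-step ``mixing'' inequality of the form
\[
2I_{\vx}(\vol(S_j^{\vx})) - \langle \textstyle\sum_e L_e(\vx), 1_{S_j^{\vx}}\rangle \le I_{\vx}(\vol(S_j^{\vx})-\sigma_j\cut(S_j^{\vx})) + I_{\vx}(\vol(S_j^{\vx})+\sigma_j\cut(S_j^{\vx})),
\]
with $\sigma_j = (1+2\max_{e\in\partial S_j^{\vx}}\min\{\delta_e,|e|/2\})^{-1}$, and this is then iterated in $t$ to produce the bound $I_{\vx}(k)\le k/M + \gamma t/(2+\gamma) + \sqrt{\bar k/d_{\min}}(1-\sigma_{\vx}^2\phi_{\vx}^2/8)^t$. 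The logarithm $\ln(100\,\vol(S^*)/d_v)$ and the constant $32$ come precisely from choosing $t \approx (8/\sigma_{\vx}^2\phi_{\vx}^2)\ln(8\sqrt{\vol(S^*)/d_v})$ to kill the exponential term; they are artifacts of the LSC iteration, not of a dynamic-range estimate inside a Cauchy--Schwarz step. A direct Cauchy--Schwarz/layer-cake Cheeger argument would bound $\phi_{\vx}$ by a square root of a Rayleigh-type quotient without the log, so as written your route does not reproduce the stated inequality.

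You are right that the gadget bookkeeping is the technical heart, but the paper handles it differently than you anticipate: rather than verifying auxiliary-node placement against every sweep threshold, it computes $\langle L_e(\vx),1_{S_j^{\vx}}\rangle$ explicitly from the closed-form optimal $x_a^{(e)*},x_b^{(e)*}$ and splits into three cases ($k\le k_+$, $k_+\le k\le k^-$, $k\ge k^-$), showing in each that the smaller of the two signed coefficient sums is at least $(2\delta_e'+1)^{-1}\min\{k,|e|-k,\delta_e\}$. That per-hyperedge inequality is what produces $\bar\delta$ and feeds into the LSC concavity step; the ``check consistency at every threshold'' worry you raise is sidestepped because the auxiliary optima are determined once by $\vx$ and the sweep is only over original-node prefixes.
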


The proof is in the appendix. This implies that for any set $S^*$, if we have a sampling strategy that matches $S^*$ and tune $\gamma$,  our method can find a node set with conductance $O(\sqrt{\phi_{\mathcal{H}}(S^*)\bar{\delta}\log(\vol(S^*))})$. The term $\bar{\delta}$ is the additional cost that we pay for performing graph reduction. The dependence on $\bar{\delta}$ essentially generalizes the previous works that analyzed the conductance with only \emph{all-or-nothing} penalty~\cite{panli2020qdsfm,takai2020hypergraph}, as our result matches these when $\bar{\delta}=1$.  But our method gives the flexibility to choose other values $\delta_e$ and while $\bar{\delta}$ in the worst case could be as large as $|e|/2$, in practice, $\bar{\delta}$ can be chosen much smaller (See \S\ref{sec:exp}). Also, although we reduce $\mathcal{H}$ into a directed graph $G$, the previous conductance analysis for directed graphs~\cite{yoshida2016nonlinear,panli2020qdsfm} is not applicable as we have degree zero nodes in $G$. Those degree zero nodes introduce challenges. 

\section{Directly Related work} We have discussed most related work in-situ throughout the paper. Here, we address a few related hypergraph PageRank vectors directly. First, \citet{panli2020qdsfm} defined a quadratic hypergraph PageRank by directly using Lov\'{a}sz extension of the splitting function $f_e$ to control the diffusion instead of a reduction. Both \citet{panli2020qdsfm} and \citet{takai2020hypergraph} simultaneously proved that this PageRank can be used to partition hypergraphs with an \emph{all-or-nothing} penalty and a Cheeger-type guarantee. Neither approach gives a strongly local algorithm and they have complexity $O(|\mathcal{E}||V|\min\{|\mathcal{E}|,|V|\} \text{poly}(\zeta))$ or in terms of Euler integration or subgradient descent.



\section{Generalization to p-norms}
\label{sec:pnorms}
In the context of the local graph clustering, the quadratic cut objective can sometimes ``over-expand'' or ``bleed out'' over natural boundaries in the data. (This is the opposite problem to the maxflow-based clustering.)  To solve this issue, \cite{liu2020strongly} proposed a more general $p$-norm based cut objective, where $1<p\leq 2$. The corresponding p-norm diffusion algorithm can not only grow from small seed set, but also capture the boundary better than 2-norm cut objective. Moreover, \cite{yang2020p} proposed a related p-norm flow objective that shares similar characteristics. 
Our  \emph{hypergraph diffusion} framework easily adapts to such a generalization.

\mypara{Definition: p-norm local hypergraph diffusions.}
Given a hypergraph $\mathcal{H}=(V,\mathcal{E})$,
seeds $R$, and values $\gamma, \kappa$. Let $\mB, \vw$ again be the incidence matrix and weight vector of the localized reduced directed cut graph.
A $p$-norm local hypergraph diffusion is:
\begin{equation}
\label{hypergraph-p-norm-cut}
\MINone{\vx}{\vw^T \ell((\mB\vx)_+) + \kappa \gamma \sum_{i\in V} x_i d_i}{x_s=1,x_t=0,\vx\geq 0.}
\end{equation}	
Here $\ell(x)=\frac{1}{p}x^p$, $1<p\leq 2$. And the corresponding residual function is $\vr(\vx)=-\frac{1}{\gamma}\mB^T\text{diag}(\ell'((\mB\vx)_+))\vw$.

The idea of solving~\eqref{hypergraph-p-norm-cut} is similar to the quadratic case, where the goal is to iteratively push values to $x_i$ as long as node $i$ violates the optimality condition, i.e. $r_i>\kappa d_i$. The challenge of solving a more general p-norm cut objective is that we no longer have a closed form solution even if the ordering of adjacent nodes is known. Thus, we need to use binary search to find $\Delta x_i$, $\Delta x_a$ and $\Delta x_b$ up to $\eps$ accuracy at every iteration. This means that in the worst case, the general push process can be slower than 2-norm based push process by a factor of $O(\text{log}(1/\eps))$. We defer the details of the algorithm to a longer version of the paper, but we note that a similar analysis shows that this algorithm is strongly local.

\section{Experiments}\label{sec:exp}
In the experiments, we will investigate both the LHQD (2-norm) and 1.4-norm cut objectives with the $\delta$-linear threshold as the splitting function (more details about this function in \S\ref{sec:local-ppr}).  Our focus in this experimental investigation is on the use of the methods for semi-supervised learning. Consequently, we consider how well the algorithms identify ``ground truth'' clusters that represent various known labels in the datasets when given a small set of seeds. (We leave detailed comparisons of the conductances to a longer version.) 

In the plots and tables, we use LH-2.0 to represent our LHQD or LHPR method and LH-1.4 to represent the 1.4 norm version from \S\ref{sec:pnorms}.
The other four methods we compare are:\\
\emph{(i)} ACL~\cite{andersen2006local}, which is initially designed to compute approximated PageRank on graphs. Here we transform each hypergraph to a graph using three different techniques, which are star expansion (star+ACL), unweighted clique expansion (UCE+ACL) and weighted clique expansion (WCE+ACL) where a hyperedge $e$ is replaced by a clique where each edge has weight $1/|e|$~\cite{Zhou2006learning}. ACL is known as one of the fastest and most successful local graph clustering algorithm in several benchmarks~\cite{Veldt-2016-simple-local-flow,liu2020strongly} and has a similar quadratic guarantee on local graph clustering~\cite{andersen2006local,zhu2013local}. 
\\ \emph{(ii)} flow~\cite{veldt2020minimizing}, which is the maxflow-mincut based local method designed for hypergraphs. Since the flow method has difficulty growing from small seed set as illustrated in the yelp experiment in \S\ref{sec:intro}, we will first use the one hop neighborhood to grow the seed set. (OneHop+flow) To limit the number of neighbors included, we will order the neighbors using the \textit{BestNeighbors} as introduced in~\cite{veldt2020minimizing} and only keep at most 1000 neighbors. (Given a seedset $R$, \textit{BestNeighbors} orders nodes based on the fraction of hyperedges incident to $v$ that are also incident to at least one node from $R$.) 
\\ \emph{(iii)} LH-2.0+flow, this is a combination of LH-2.0 and flow where we use the output of LH-2.0 as the input to the flow method to refine. 
\\ \emph{(iv)} HGCRD~\cite{Ibrahim-preprint-hocrd}, this is a hypergraph generalization of CRD~\cite{wang2017capacity}, which is a hybrid diffusion and flow.\footnote{Another highly active topic for clustering and semi-supervised learning involves graph neural networks (GNN). Prior comparisons between GNNs and diffusions shows mixed results in the \emph{small seed set} regime we consider~\cite{Ibrahim-2019-nonlinear,liu2020strongly} and complicates doing a fair comparison. As such, we focus on comparing with the most directly related work.}

In order to select an appropriate $\delta$ for different datasets, Veldt et al.~found that the optimal $\delta$ is usually consistent among different clusters in the same dataset~\cite{veldt2020minimizing}. Thus, the optimal $\delta$ can be visually approximated by varying $\delta$ for a handful of clusters if one has access to a subset of ground truth clusters in a hypergraph. We adapt the same procedure in our experiments and report the results in App.~\ref{sec:select-delta}.
Other parameters are in the reproduction details footnote.\footnote{\emph{Reproduction details}. The full algorithm and evaluation codes can be found here \url{https://github.com/MengLiuPurdue/LHQD}. We fix the LH locality parameter $\gamma$ to be 0.1, approximation parameter $\rho$ to be 0.5 in all experiments. We set 
$\kappa=0.00025$ for Amazon and $\kappa=0.0025$ for Stack Overflow based on cluster size. For ACL, we use the same set of parameters as LH. For LH-2.0+flow, we set the flow method's locality parameter to be 0.1. For OneHop+flow, we set the locality parameter to be 0.05, 0.0025 on Amazon and Stack Overflow accordingly. For HGCRD, we set $U=3$ (maximum flow that can be send out of a node), $h=3$ (maximum flow that an edge can handle), $w=2$ (multiplicative factor for increasing the capacity of the nodes at each iteration), $\alpha=1$ (controls the eligibility of hyperedge), $\tau=0.5$ and 6 maximum iterations.}


\subsection{Detecting Amazon Product Categories}
In this experiment, we use different methods to detect Amazon product categories~\cite{ni-etal-2019-justifying}. The hypergraph is constructed from Amazon product review data where each node represents a product and each hyperedge is set of products reviewed by the same person. It has 2,268,264 nodes and 4,285,363 hyperedges. The average size of hyperedges is around 17. We select 6 different categories with size between 100 and 10000 as ground truth clusters used in~\cite{veldt2020minimizing}. 
We set $\delta=1$ for this dataset (more details about this choice in \S\ref{sec:select-delta}). We select 1\% nodes (at least 5) as seed set for each cluster and report median F1 scores and median runtime over 30 trials in Table~\ref{tab:amazon_f1} and~\ref{tab:amazon_runtime}. Overall, LH-1.4 has the best F1 scores and LH-2.0 has the second best F1. 
The two fastest methods are LH-2.0 and star+ACL. 
While achieving better F1 scores, LH-2.0 is 20x faster than HyperLocal (flow) and 2-5x faster than clique expansion based methods. 

\begin{table}[t]
  \caption{Median F1 scores on detecting Amazon product categories over 30 trials, the small violin plots show variance. }
  \confspace{-\baselineskip}
  \label{tab:amazon_f1}
  \noindent \begin{tabularx}{\linewidth}{@{}l@{\,\,}*{6}{@{}l@{}X@{}}@{}}
\toprule
 Alg & 
\mbox{\rlap{12}} & &
\mbox{\rlap{18}} & &
\mbox{\rlap{17}} & &
\mbox{\rlap{25}} & &
\mbox{\rlap{15}} & &
\mbox{\rlap{24}} &  \\
 & \mbox{\rlap{\footnotesize F1 \& Med.}}  & &
\mbox{\rlap{\footnotesize F1 \& Med.}} & &
\mbox{\rlap{\footnotesize F1 \& Med.}} & &
\mbox{\rlap{\footnotesize F1 \& Med.}} & &
\mbox{\rlap{\footnotesize F1 \& Med.}} & &
\mbox{\rlap{\footnotesize F1 \& Med.}} &    \\
\midrule
 LH-2.0 & \mbox{\rlap{\raisebox{-2pt}{\includegraphics[width=20pt,height=10pt]{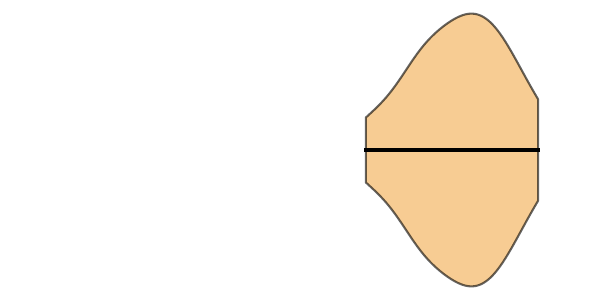}}}}\mbox{\rlap{\mbox{\hspace{12pt}0.77}}} & & \mbox{\rlap{\raisebox{-2pt}{\includegraphics[width=20pt,height=10pt]{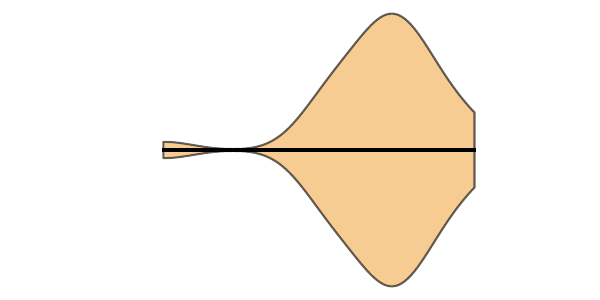}}}}\mbox{\rlap{\mbox{\hspace{10pt}0.65}}} & & \mbox{\rlap{\raisebox{-2pt}{\includegraphics[width=20pt,height=10pt]{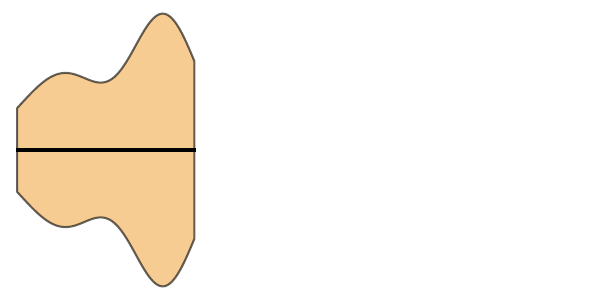}}}}\mbox{\rlap{\mbox{\hspace{4pt}0.25}}} & & \mbox{\rlap{\raisebox{-2pt}{\includegraphics[width=20pt,height=10pt]{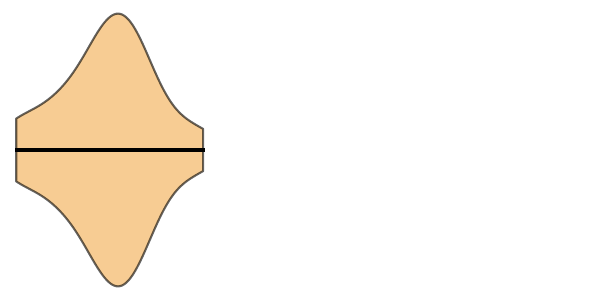}}}}\mbox{\rlap{\mbox{\hspace{3pt}0.19}}} & & \mbox{\rlap{\raisebox{-2pt}{\includegraphics[width=20pt,height=10pt]{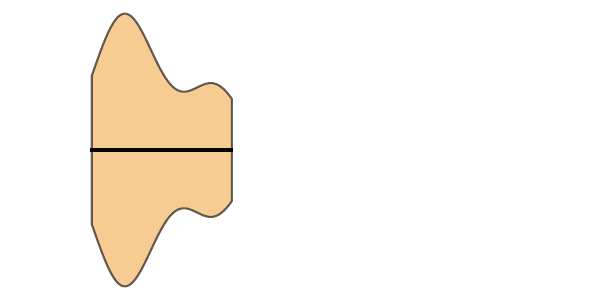}}}}\mbox{\rlap{\mbox{\hspace{3pt}0.22}}} & & \mbox{\rlap{\raisebox{-2pt}{\includegraphics[width=20pt,height=10pt]{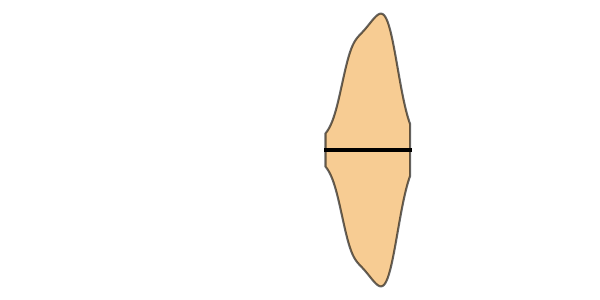}}}}\mbox{\rlap{\mbox{\hspace{9pt}0.62}}} & \\ 
 LH-1.4 & \mbox{\rlap{\raisebox{-2pt}{\includegraphics[width=20pt,height=10pt]{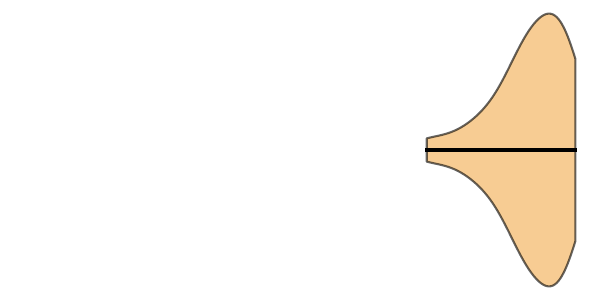}}}}\mbox{\rlap{\mbox{\hspace{14pt}0.9}}} & & \mbox{\rlap{\raisebox{-2pt}{\includegraphics[width=20pt,height=10pt]{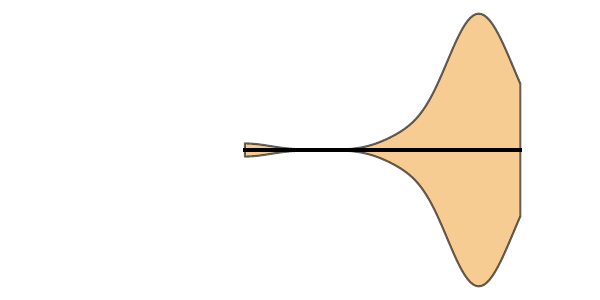}}}}\mbox{\rlap{\mbox{\hspace{12pt}0.79}}} & & \mbox{\rlap{\raisebox{-2pt}{\includegraphics[width=20pt,height=10pt]{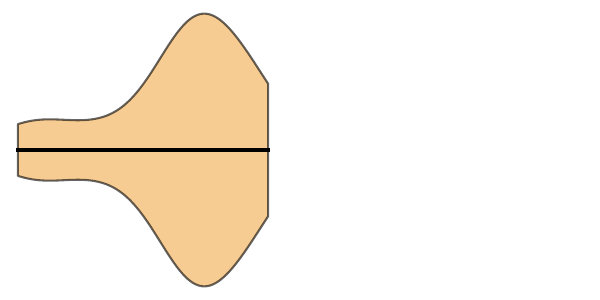}}}}\mbox{\rlap{\mbox{\hspace{5pt}0.32}}} & & \mbox{\rlap{\raisebox{-2pt}{\includegraphics[width=20pt,height=10pt]{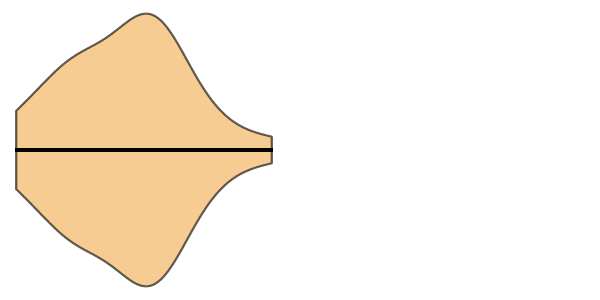}}}}\mbox{\rlap{\mbox{\hspace{3pt}0.22}}} & & \mbox{\rlap{\raisebox{-2pt}{\includegraphics[width=20pt,height=10pt]{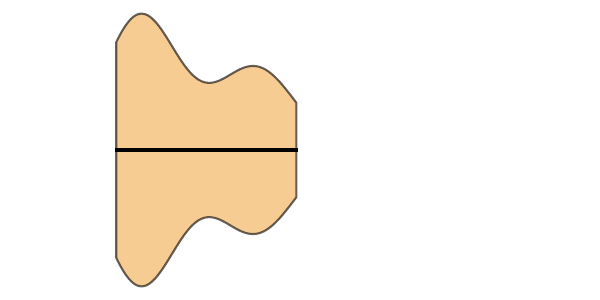}}}}\mbox{\rlap{\mbox{\hspace{4pt}0.27}}} & & \mbox{\rlap{\raisebox{-2pt}{\includegraphics[width=20pt,height=10pt]{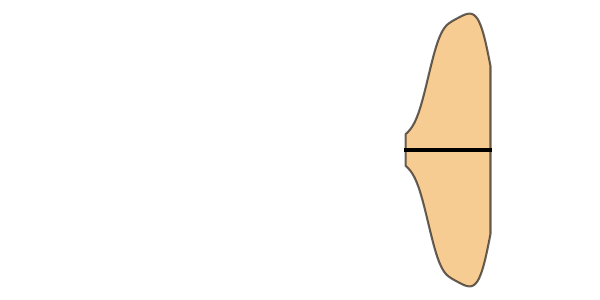}}}}\mbox{\rlap{\mbox{\hspace{12pt}0.77}}} & \\ 
 LH-2.0+flow & \mbox{\rlap{\raisebox{-2pt}{\includegraphics[width=20pt,height=10pt]{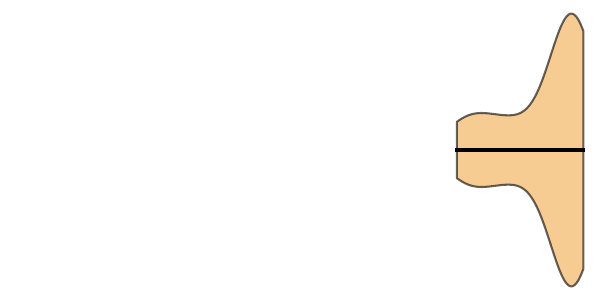}}}}\mbox{\rlap{\mbox{\hspace{14pt}0.95}}} & & \mbox{\rlap{\raisebox{-2pt}{\includegraphics[width=20pt,height=10pt]{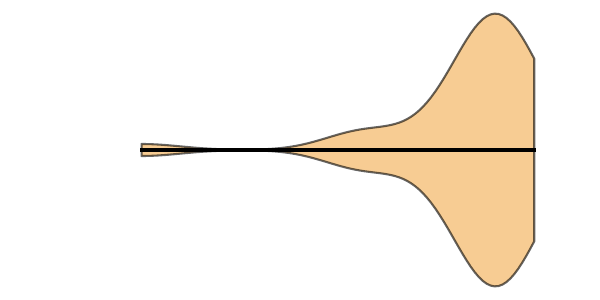}}}}\mbox{\rlap{\mbox{\hspace{12pt}0.82}}} & & \mbox{\rlap{\raisebox{-2pt}{\includegraphics[width=20pt,height=10pt]{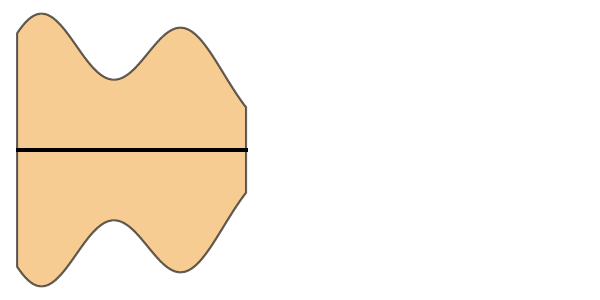}}}}\mbox{\rlap{\mbox{\hspace{2pt}0.15}}} & & \mbox{\rlap{\raisebox{-2pt}{\includegraphics[width=20pt,height=10pt]{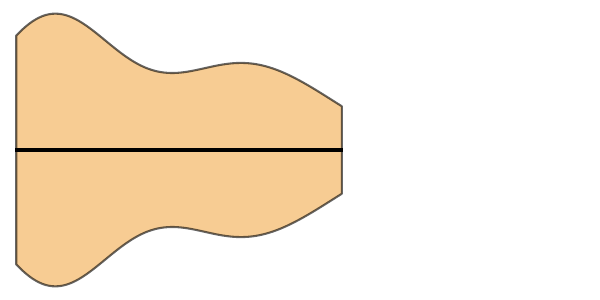}}}}\mbox{\rlap{\mbox{\hspace{2pt}0.16}}} & & \mbox{\rlap{\raisebox{-2pt}{\includegraphics[width=20pt,height=10pt]{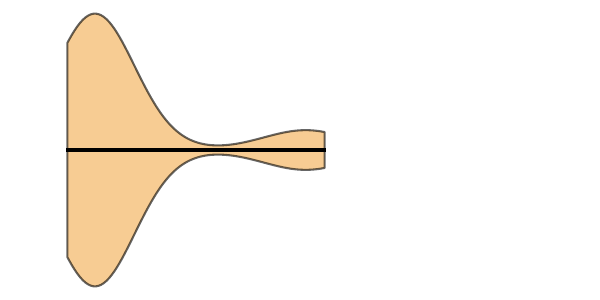}}}}\mbox{\rlap{\mbox{\hspace{2pt}0.16}}} & & \mbox{\rlap{\raisebox{-2pt}{\includegraphics[width=20pt,height=10pt]{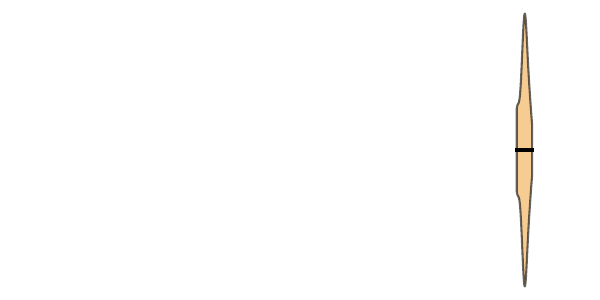}}}}\mbox{\rlap{\mbox{\hspace{13pt}0.87}}} & \\ 
 star+ACL & \mbox{\rlap{\raisebox{-2pt}{\includegraphics[width=20pt,height=10pt]{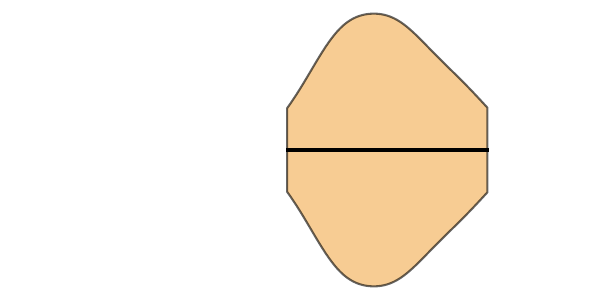}}}}\mbox{\rlap{\mbox{\hspace{10pt}0.64}}} & & \mbox{\rlap{\raisebox{-2pt}{\includegraphics[width=20pt,height=10pt]{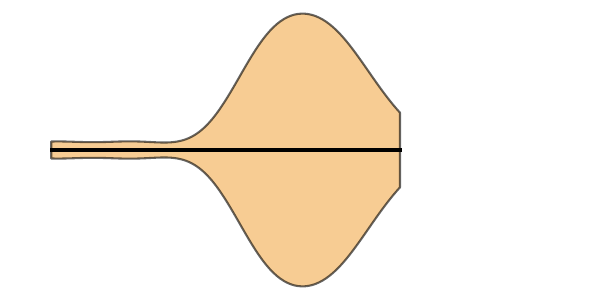}}}}\mbox{\rlap{\mbox{\hspace{8pt}0.51}}} & & \mbox{\rlap{\raisebox{-2pt}{\includegraphics[width=20pt,height=10pt]{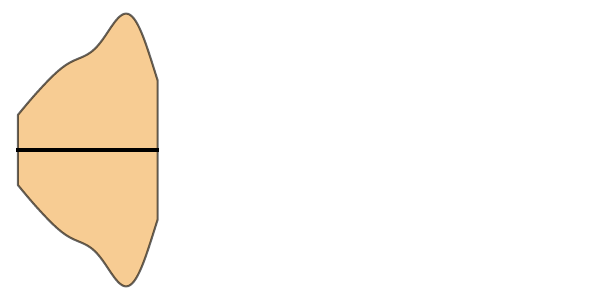}}}}\mbox{\rlap{\mbox{\hspace{3pt}0.19}}} & & \mbox{\rlap{\raisebox{-2pt}{\includegraphics[width=20pt,height=10pt]{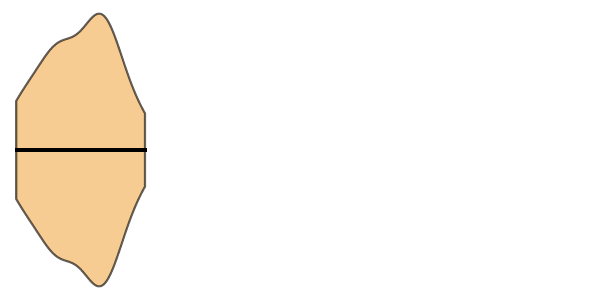}}}}\mbox{\rlap{\mbox{\hspace{2pt}0.15}}} & & \mbox{\rlap{\raisebox{-2pt}{\includegraphics[width=20pt,height=10pt]{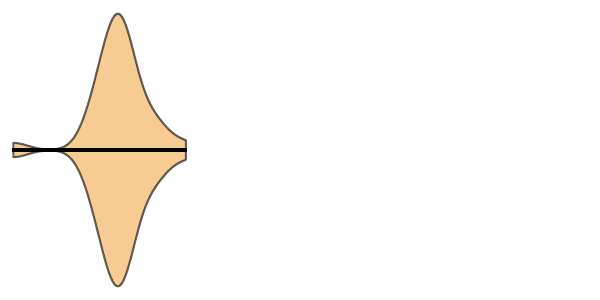}}}}\mbox{\rlap{\mbox{\hspace{3pt}0.2}}} & & \mbox{\rlap{\raisebox{-2pt}{\includegraphics[width=20pt,height=10pt]{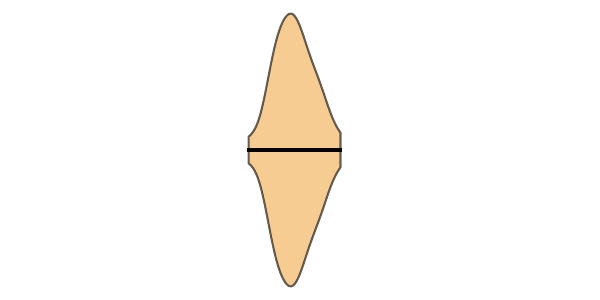}}}}\mbox{\rlap{\mbox{\hspace{7pt}0.49}}} & \\ 
 WCE+ACL & \mbox{\rlap{\raisebox{-2pt}{\includegraphics[width=20pt,height=10pt]{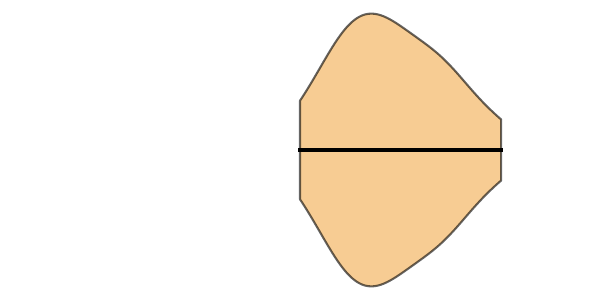}}}}\mbox{\rlap{\mbox{\hspace{10pt}0.64}}} & & \mbox{\rlap{\raisebox{-2pt}{\includegraphics[width=20pt,height=10pt]{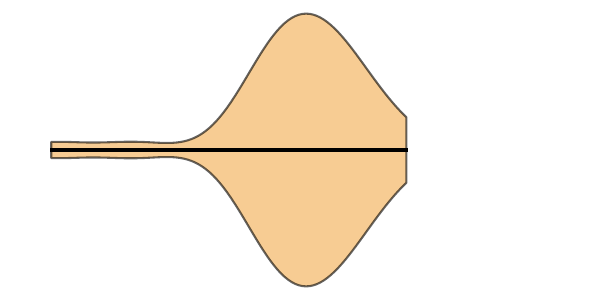}}}}\mbox{\rlap{\mbox{\hspace{8pt}0.51}}} & & \mbox{\rlap{\raisebox{-2pt}{\includegraphics[width=20pt,height=10pt]{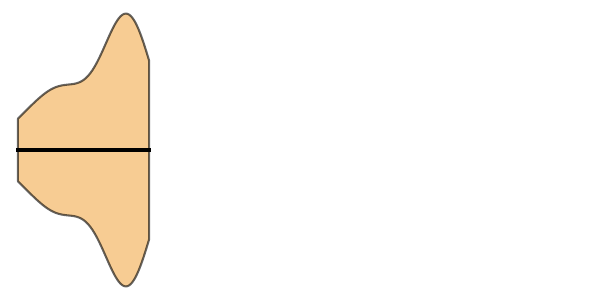}}}}\mbox{\rlap{\mbox{\hspace{3pt}0.2}}} & & \mbox{\rlap{\raisebox{-2pt}{\includegraphics[width=20pt,height=10pt]{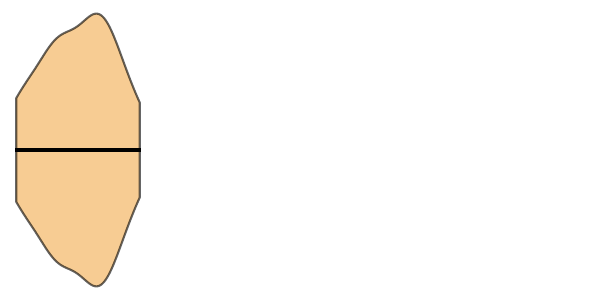}}}}\mbox{\rlap{\mbox{\hspace{2pt}0.14}}} & & \mbox{\rlap{\raisebox{-2pt}{\includegraphics[width=20pt,height=10pt]{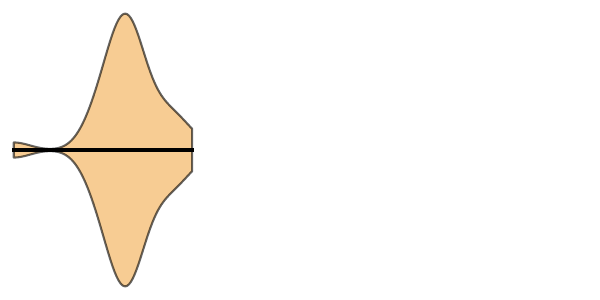}}}}\mbox{\rlap{\mbox{\hspace{3pt}0.21}}} & & \mbox{\rlap{\raisebox{-2pt}{\includegraphics[width=20pt,height=10pt]{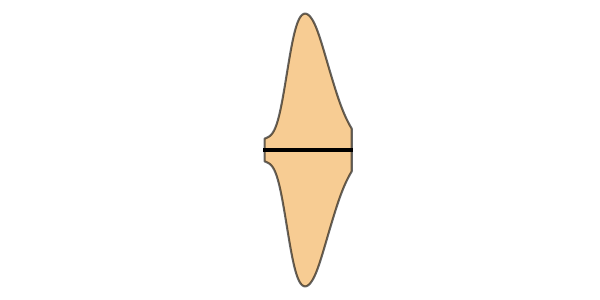}}}}\mbox{\rlap{\mbox{\hspace{8pt}0.51}}} & \\ 
 UCE+ACL & \mbox{\rlap{\raisebox{-2pt}{\includegraphics[width=20pt,height=10pt]{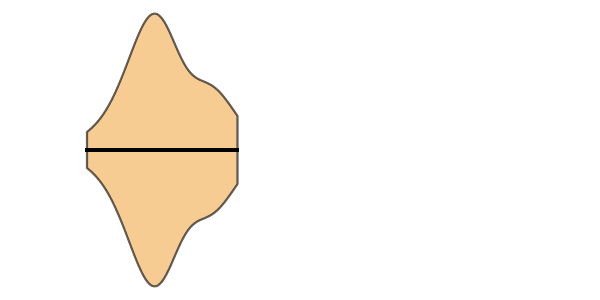}}}}\mbox{\rlap{\mbox{\hspace{4pt}0.27}}} & & \mbox{\rlap{\raisebox{-2pt}{\includegraphics[width=20pt,height=10pt]{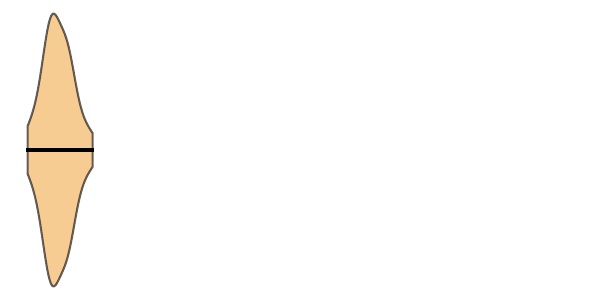}}}}\mbox{\rlap{\mbox{\hspace{1pt}0.09}}} & & \mbox{\rlap{\raisebox{-2pt}{\includegraphics[width=20pt,height=10pt]{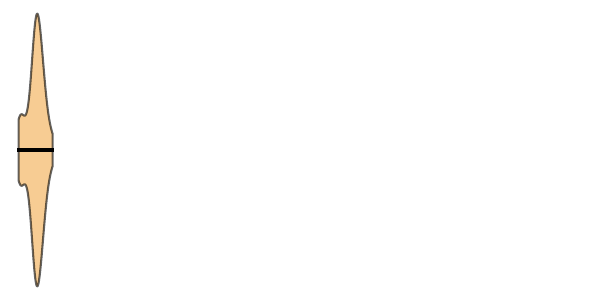}}}}\mbox{\rlap{\mbox{\hspace{1pt}0.06}}} & & \mbox{\rlap{\raisebox{-2pt}{\includegraphics[width=20pt,height=10pt]{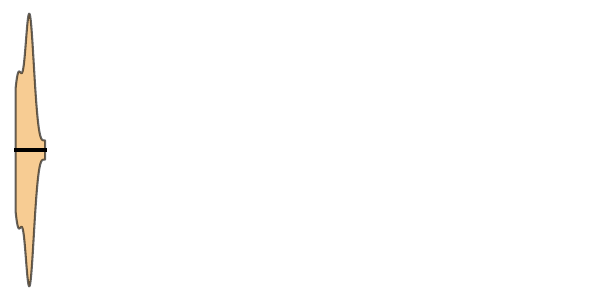}}}}\mbox{\rlap{\mbox{\hspace{1pt}0.05}}} & & \mbox{\rlap{\raisebox{-2pt}{\includegraphics[width=20pt,height=10pt]{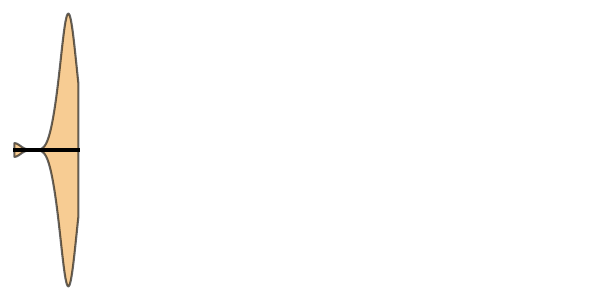}}}}\mbox{\rlap{\mbox{\hspace{2pt}0.11}}} & & \mbox{\rlap{\raisebox{-2pt}{\includegraphics[width=20pt,height=10pt]{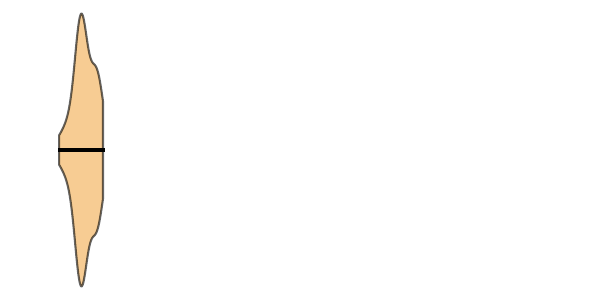}}}}\mbox{\rlap{\mbox{\hspace{2pt}0.14}}} & \\ 
 OneHop+flow & \mbox{\rlap{\raisebox{-2pt}{\includegraphics[width=20pt,height=10pt]{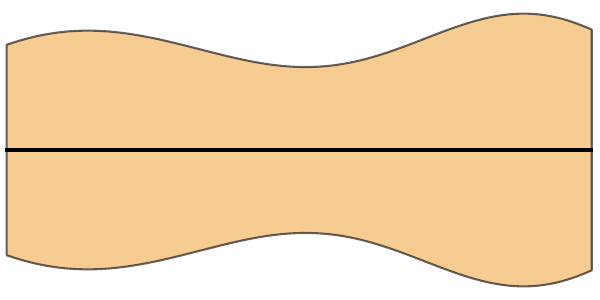}}}}\mbox{\rlap{\mbox{\hspace{8pt}0.52}}} & & \mbox{\rlap{\raisebox{-2pt}{\includegraphics[width=20pt,height=10pt]{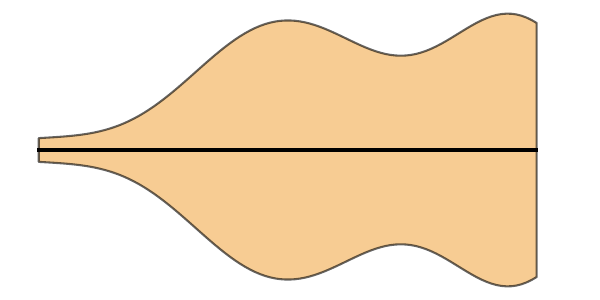}}}}\mbox{\rlap{\mbox{\hspace{9pt}0.6}}} & & \mbox{\rlap{\raisebox{-2pt}{\includegraphics[width=20pt,height=10pt]{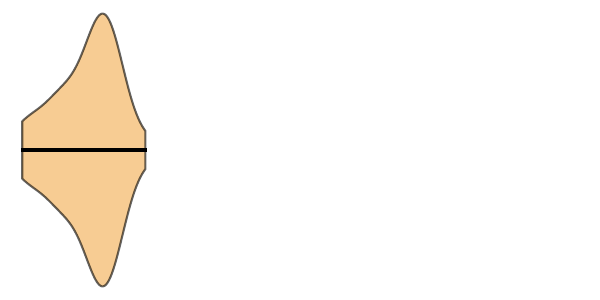}}}}\mbox{\rlap{\mbox{\hspace{2pt}0.16}}} & & \mbox{\rlap{\raisebox{-2pt}{\includegraphics[width=20pt,height=10pt]{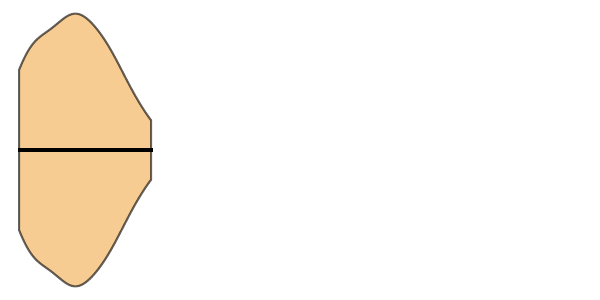}}}}\mbox{\rlap{\mbox{\hspace{2pt}0.12}}} & & \mbox{\rlap{\raisebox{-2pt}{\includegraphics[width=20pt,height=10pt]{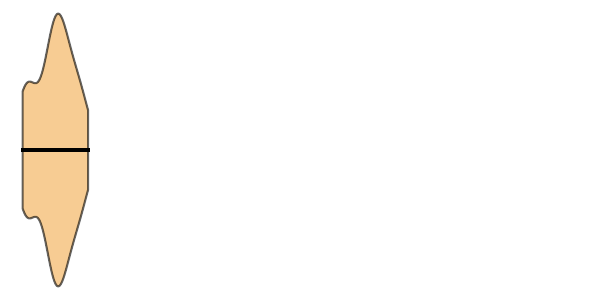}}}}\mbox{\rlap{\mbox{\hspace{1pt}0.09}}} & & \mbox{\rlap{\raisebox{-2pt}{\includegraphics[width=20pt,height=10pt]{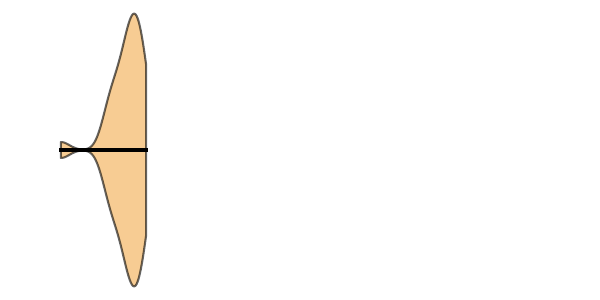}}}}\mbox{\rlap{\mbox{\hspace{3pt}0.22}}} & \\ 
 HGCRD & \mbox{\rlap{\raisebox{-2pt}{\includegraphics[width=20pt,height=10pt]{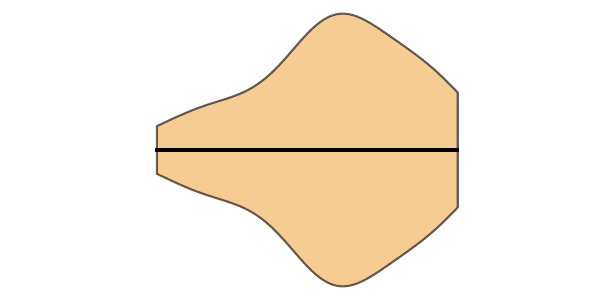}}}}\mbox{\rlap{\mbox{\hspace{2pt}0.56}}} & & \mbox{\rlap{\raisebox{-2pt}{\includegraphics[width=20pt,height=10pt]{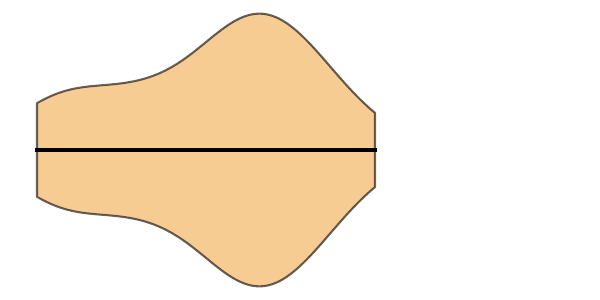}}}}\mbox{\rlap{\mbox{\hspace{2pt}0.4}}} & & \mbox{\rlap{\raisebox{-2pt}{\includegraphics[width=20pt,height=10pt]{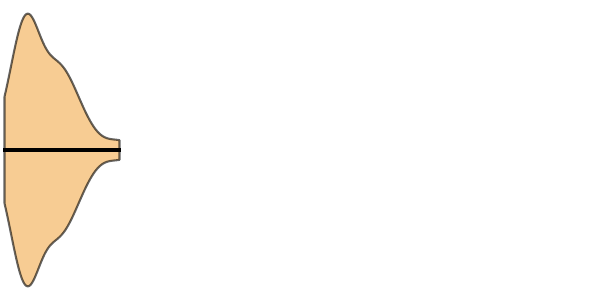}}}}\mbox{\rlap{\mbox{\hspace{1pt}0.05}}} & & \mbox{\rlap{\raisebox{-2pt}{\includegraphics[width=20pt,height=10pt]{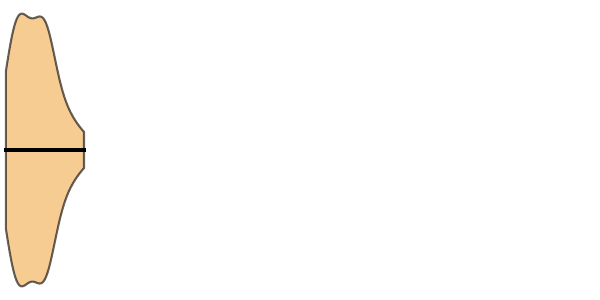}}}}\mbox{\rlap{\mbox{\hspace{1pt}0.06}}} & & \mbox{\rlap{\raisebox{-2pt}{\includegraphics[width=20pt,height=10pt]{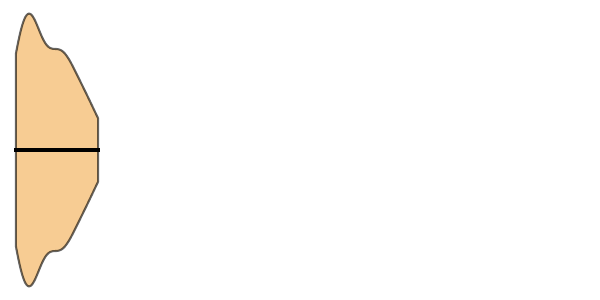}}}}\mbox{\rlap{\mbox{\hspace{1pt}0.07}}} & & \mbox{\rlap{\raisebox{-2pt}{\includegraphics[width=20pt,height=10pt]{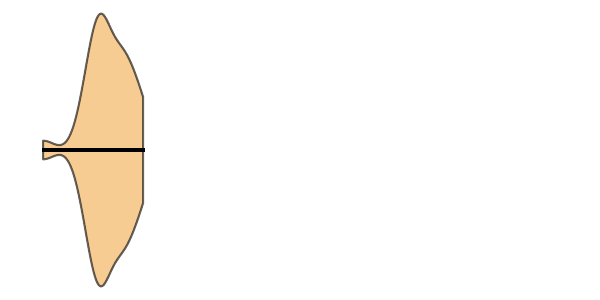}}}}\mbox{\rlap{\mbox{\hspace{3pt}0.17}}} & \\ 
\bottomrule 
\end{tabularx}

  \confspace{-\baselineskip}
\end{table}

\begin{table}[t]
  \caption{Median runtime in seconds on detecting Amazon product categories}
  \confspace{-\baselineskip}
  \label{tab:amazon_runtime}
  \noindent \begin{tabularx}{\linewidth}{@{}lXXXXXX@{}}
\toprule
 Alg & 12 &18 &17 &25 &15 &24 \\
\midrule
LH-2.0 &0.9 &0.7 &2.8 &1.0 &5.6 &13.3\\ 
LH-1.4 &8.0 &6.3 &32.3 &9.8 &53.8 &127.3\\ 
LH-2.0+flow &3.5 &5.1 &421.1 &17.8 &34.9 &151.5\\ 
star+ACL &0.2 &0.2 &0.3 &0.2 &0.5 &0.8\\ 
WCE+ACL &18.6 &17.2 &19.0 &16.5 &21.5 &20.1\\ 
UCE+ACL &9.8 &10.9 &11.2 &10.7 &13.3 &15.5\\ 
OneHop+flow &308.8 &141.7 &359.2 &224.9 &81.5 &82.4\\ 
HGCRD &120.3 &56.4 &78.1 &21.2 &239.4 &541.3\\ 
\bottomrule 
\end{tabularx}

\end{table}

\subsection{Detecting Stack Overflow Question Topics}
In the Stack Overflow dataset, we have a hypergraph with each node representing a question on ``stackoverflow.com'' and each hyperedge representing questions answered by the same user~\cite{veldt2020minimizing}. Each question is associated with a set of tags. The goal is to find questions having the same tag when seeding on some nodes with a given target tag.
 This hypergraph is much larger with 15,211,989 nodes and 1,103,243 edges. The average hyperedge size is around 24. We select 40 clusters with 2,000 to 10,000 nodes and a conductance score below 0.2 using the \emph{all-or-nothing} penalty. (There are 45 clusters satisfying these conditions, 5 of them are used to select $\delta$.)
In this dataset, a large $\delta$ can give better results. For diffusion based methods, we set the $\delta$-linear threshold to be $1000$ (more details about this choice in App.~\S\ref{sec:select-delta}), while for flow based method, we set $\delta=5000$ based on Figure 3 of~\cite{veldt2020minimizing}. In Table~\ref{tab:stackoverflow}, we summarize some recovery statistics of different methods on this dataset. In Figure~\ref{stackoverflow-f1-runtime}, we report the performance of different methods on each cluster. Overall, LH-2.0 achieves the best balance between speed and accuracy, although all the diffusion based methods (LH, ACL) have extremely similar F1 scores (which is different from the last experiment). The flow based method still has difficulty growing from small seed set as we can see from the low recall in Table~\ref{tab:stackoverflow}.

\begin{figure}[t]
  \centering
  \includegraphics[width=0.8\linewidth]{"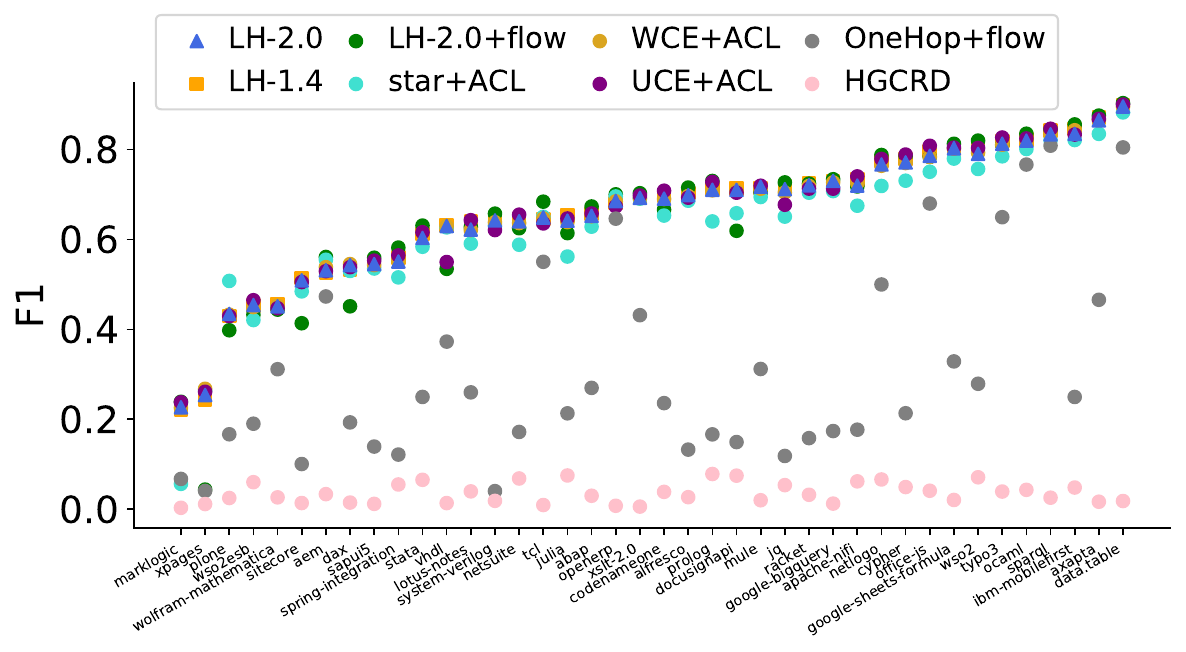"}
  \includegraphics[width=0.8\linewidth]{"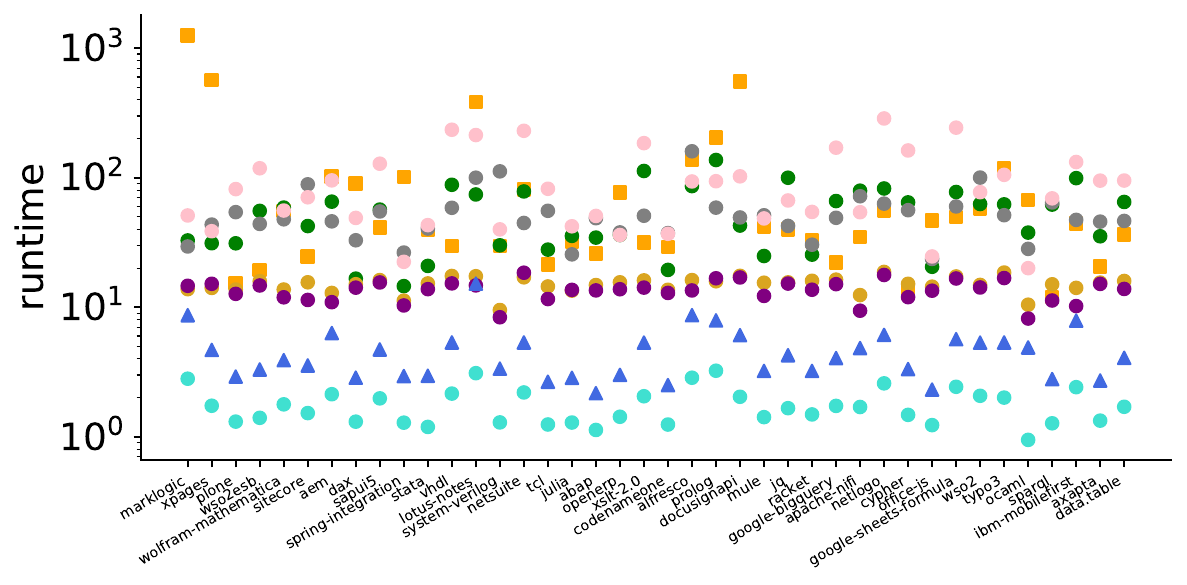"}
  \confspace{-0.5\baselineskip}
  \caption{The upper plot shows median F1 scores of different methods over 40 clusters from the Stack Overflow dataset.  The lower plot shows median running time. LH-2.0 achieves the best balance between speed and accuracy; LH-1.4 can sometimes be slower than the flow method when the target cluster contains many large hyperedges.}
  \confspace{-\baselineskip}
  \label{stackoverflow-f1-runtime}
\end{figure}

\begin{table}[t]
  \caption{This table summarizes the median of median runtimes in seconds for the Stack Overflow experiments as well as median Precision, Recall and F1 over the 40 clusters.}
  \confspace{-\baselineskip}
  \label{tab:stackoverflow}
  \begin{tabularx}{\linewidth}{@{}lXXXXXXXX@{}}
\toprule
Alg. & {LH2} & {LH1.4} & {LH2} & ACL &ACL& ACL & Flow & HG- \\
\cmidrule(l){2-4}
\cmidrule(l){5-7}
\cmidrule(l){8-8}
& & & +flow 
& +star & +WCE & +UCE 
& +1Hop 
& CRD
\\
\midrule
Time & 3.69 & 39.89 & 43.84 & 1.54 & 15.25 & 13.71 & 48.28 & 72.31\\
Pr & 0.65 & 0.66 & 0.74 & 0.66 & 0.65 & 0.66 & 0.83 & 0.46\\
Rc & 0.67 & 0.67 & 0.59 & 0.6 & 0.66 & 0.65 & 0.11 & 0.01\\
F1 & 0.66 & 0.66 & 0.66 & 0.63 & 0.65 & 0.65 & 0.19 & 0.02\\
\bottomrule
\end{tabularx}

  \confspace{-\baselineskip}
\end{table}




\subsection{Varying Number of Seeds}
Comparing to the flow-based method, an advantage of our diffusion method is that it expands from a small seed set into a good enough cluster that detects many labels correctly. Now, we use the Amazon dataset to elaborate on this point. We vary the ratio of seed set from 0.1\% to 10\%. At each seed ratio, denoted as $r$, we set $\kappa=0.025r$. And for each of the 6 clusters, we take the median F1 score over 10 trials. To have a global idea of how different methods perform on this dataset, we take another median over the 6 median F1 scores. For the flow-based method, we also consider removing the OneHop growing procedure. The results are summarized in Figure~\ref{vary-seeds}. We can see our hypergraph diffusion based method (LH-1.4, LH-2.0) performs better than alternatives for all seed sizes, although flow dramatically improves for large seed sizes.
\begin{figure}[t]
  \centering
  \includegraphics[width=0.8\linewidth]{"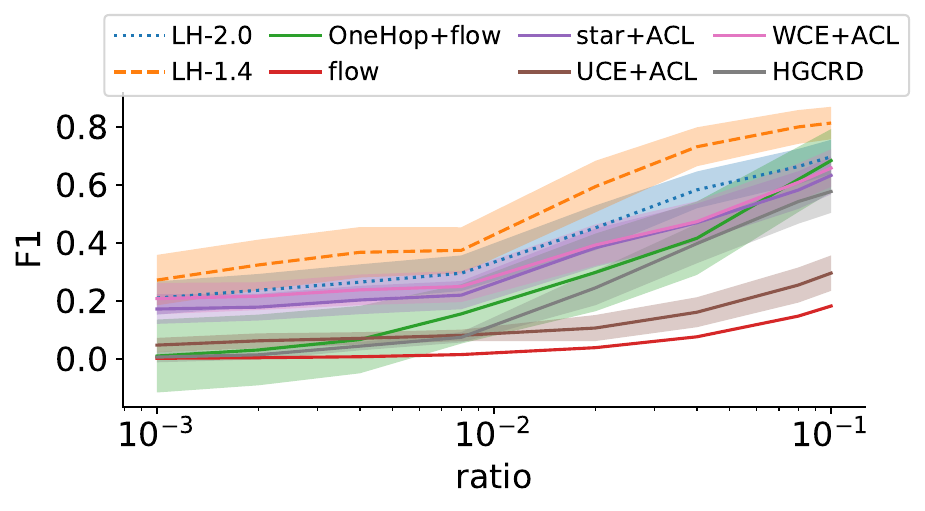"}
  \confspace{-\baselineskip}
  \caption{This plot shows the median of median F1 scores on detecting those 6 clusters in the Amazon data when varying the seed size. The envelope represents 1 standard error over the 6 median F1 scores. Without OneHop, the flow based method is not able to grow from seed set even for the largest seeds. Our hypergraph diffusion (LH) methods outperforms others, especially for small seeds.}
  \label{vary-seeds}
  \confspace{-\baselineskip}
\end{figure}



    
\section{DISCUSSION}

This paper studies the opportunities for strongly local quadratic and $p$-norm diffusions in the context of local clustering and semi-supervised learning. 

One of the distinct challenges we encountered in preparing this manuscript was comparing against the ideas of others. 
Clique expansions are often problematic because they can involve quadratic memory for each hyperedge if used simplistically. For running the baseline ACL PageRank diffusion on the clique expansion, we were able to use the \emph{linear} nature of this algorithm to implicitly model the clique expansion without realizing the actual graph in memory. (We lack space to describe this though.) For others the results were less encouraging.
We, for instance, were unable to set integration parameters for the Euler scheme employed by QHPR~\cite{takai2020hypergraph} to produce meaningful results in \S\ref{sec:exp}. 

Consequently, we wish to discuss the parameters of our method and why they are reasonably easy to set. The values $\gamma$ and $\kappa$ both control the size of the result. Roughly, $\gamma$ corresponds to how much the diffusion is allowed to spread from the seed vertices and $\kappa$ controls how aggressively we \emph{sparsify} the diffusion. To get a bigger result, then, set $\gamma$ or $\kappa$ a little bit smaller. The value of $\rho$ corresponds only to how much one of our solutions can differ from the \emph{unique} solution of~\eqref{hypergraph-2-norm-cut}. Fixing $\rho=0.5$ is fine for empirical studies unless the goal is to compare against other strategies to solve that same equation. 
The final parameter is $\delta$, which interpolates between the \emph{all-or-nothing} penalty and the \emph{cardinality} penalty as discussed in \S\ref{sec:lhpr}. This can be chosen based on an experiment as we did here, or by exploring a few small choices between $1$ and half the largest hyperedge size. 

In closing, \emph{flow-based} algorithms have often been explained or used as \emph{refinement} operations to the clusters produced by spectral methods~\cite{Lang2005-spectral-weaknesses} as in LH-2.0+flow. As a final demonstration of this usage on the Yelp experiment from \S\ref{sec:intro}, we have precision, recall, and F1 result of $0.87, 0.998, 0.93$, respectively, which demonstrates how these techniques may be easily combined to even more accurately find target clusters.

\confspace{-\baselineskip}
\bibliographystyle{ACM-Reference-Format}
\bibliography{sample-base}

\appendix

\section{Proof of Theorem~\ref{runtime-guarantee}}
First we have the following observations on $\vr$ and $\vx$.
\begin{lemma}
\label{aux-lemma}
At any iteration of Algorithm~\ref{cut-algo}, for each pair of auxiliary nodes, $a\in V_a$, $b\in V_b$ and $a\rightarrow b$, $x_a\geq x_b$.
\end{lemma}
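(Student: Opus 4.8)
The plan is to prove the inequality as an invariant maintained throughout Algorithm~\ref{cut-algo}, by induction on the number of push operations. The key structural observation I would record first is that, among all the operations the algorithm performs, the \emph{only} one that ever changes an auxiliary value $x_a$ or $x_b$ is an $\texttt{auxpush}$ on that very pair $(a,b)$: a $\texttt{hyperpush}$ at an original node $i\in V$ modifies only $x_i$ and some residuals, and an $\texttt{auxpush}$ on a different gadget touches only the auxiliary nodes of that gadget. Consequently $x_a$ and $x_b$ remain fixed between two consecutive $\texttt{auxpush}$ steps on $(a,b)$, so it is enough to check $x_a\ge x_b$ immediately after each such step, together with the base case $x_a=x_b=0$ from the initialization on line~1.

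The heart of the argument is a short contradiction at a single $\texttt{auxpush}$. By the specification on line~4 of Algorithm~\ref{cut-algo}, right after pushing on $(a,b)$ we have $r_a=r_b=0$, evaluated with the exact residual $\vr(\vx)=-\tfrac1\gamma\mB^T\diag((\mB\vx)_+)\vw$. Writing out these two coordinates from the gadget structure (unit-type edges $(i,a)$ and $(b,i)$ for each $i\in e$ and the edge $(a,b)$ of weight $w_{ab}$) and clearing the factor $1/\gamma$, the conditions $r_a=r_b=0$ read
\begin{align*}
\textstyle\sum_{i\in e} w_{ia}(x_i-x_a)_+ &= w_{ab}(x_a-x_b)_+, \\
w_{ab}(x_a-x_b)_+ &= \textstyle\sum_{i\in e} w_{bi}(x_b-x_i)_+ .
\end{align*}
I would then suppose for contradiction that $x_a<x_b$, so that $(x_a-x_b)_+=0$ and both the right side of the first identity and the left side of the second vanish. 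Since all edge weights are positive, the first identity forces $x_i\le x_a$ for every $i\in e$, and the second forces $x_i\ge x_b$ for every $i\in e$. As $e$ is nonempty, any $i\in e$ gives $x_b\le x_i\le x_a$, contradicting $x_a<x_b$; hence $x_a\ge x_b$ after the push. Combining this with the base case and the locality-of-modification observation closes the induction.

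The step requiring the most care is a potential circularity with the optimized routine: the closed-form update in~\eqref{solve-delta-xab} already presumes $x_a\ge x_b$ in order to drop the truncation on $(x_a-x_b)_+$, so one cannot invoke that derivation to prove the very inequality it assumes. I would avoid this by running the entire argument at the level of the abstract specification in Algorithm~\ref{cut-algo}, whose $\texttt{auxpush}$ is \emph{defined} purely by the requirement $r_a=r_b=0$ under the un-truncated residual; the lemma then retroactively justifies using the simplified residual~\eqref{expanded-aux-res} inside the optimized implementation in Algorithm~\ref{fast-aux-push-algo}. A secondary subtlety to flag is that a subsequent $\texttt{hyperpush}$ at a neighbor $i\in e$ can drive $r_a,r_b$ strictly positive again, but since such a step leaves $x_a$ and $x_b$ untouched, the inequality established at the previous $\texttt{auxpush}$ simply persists until the next $\texttt{auxpush}$ re-imposes $r_a=r_b=0$.
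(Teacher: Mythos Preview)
Your proposal is correct and follows essentially the same contradiction argument as the paper: both derive a contradiction from the zero-residual conditions $r_a=r_b=0$ at the auxiliary pair under the hypothesis $x_a<x_b$. Your write-up is somewhat more careful than the paper's in that you make the induction and locality-of-modification explicit (so the claim holds even at intermediate points of an iteration), and you correctly flag the circularity with the optimized~\eqref{expanded-aux-res}--\eqref{solve-delta-xab}; the paper's short proof implicitly relies on ``$g_a$ is zero for all iterations'' without spelling out why, which your framing cleanly justifies.
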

\begin{proof}
If $x_a<x_b$, from equation~\eqref{ra-rb}, $r_b=0$ means $x_b\leq x_i$ for any $b\rightarrow i$. But this means $r_a>0$, which is a contradiction.
\end{proof}

\begin{lemma}
\label{positivity}
At any iteration of Algorithm~\ref{cut-algo}, for any $i\in V\cup V_a\cup V_b$, $g_i$ will stay nonnegative and $0\leq x_i\leq 1$.
\end{lemma}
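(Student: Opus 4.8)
The plan is to prove both claims simultaneously by induction on the sequence of push operations (each \texttt{LQHD-hyperpush} or \texttt{LQHD-auxpush} call) performed by Algorithm~\ref{cut-algo}, maintaining the joint invariant that $g_i \ge 0$ for every $i \in V \cup V_a \cup V_b$ and $0 \le x_i \le 1$ for every such $i$. For the base case, the initialization $\vx = 0$ except $x_s = 1$ gives $0 \le x_i \le 1$ trivially, and evaluating \eqref{expanded-res} and \eqref{expanded-aux-res} at $\vx = 0$ yields $g_i = d_i\,\text{Ind}(i\in R) \ge 0$ on original nodes and $g_a = g_b = 0$ on auxiliary nodes.

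The two structural facts driving the induction are monotonicity statements read off the residual formulas. First, every increment produced by a push is nonnegative: a hyperpush fires only when $r_i > \kappa d_i > \rho\kappa d_i$ and resets $r_i$ to $\rho\kappa d_i$, and since $r_i$ in \eqref{expanded-res} is weakly decreasing in $x_i$, the required $\Delta x_i$ is positive; for an auxpush the two residuals $r_a, r_b$ are nonnegative on entry (they were $0$ and were only raised by the preceding hyperpush), and the local Jacobian of $(r_a, r_b)$ in $(x_a, x_b)$ --- whose negation has positive diagonal $w_{ab} + z_a$, $w_{ab} + z_b$ and off-diagonal $-w_{ab}$, hence is diagonally dominant --- is an M-matrix, so its inverse is entrywise nonnegative and $\Delta x_a, \Delta x_b \ge 0$. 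Second, because in the reduced graph $G$ an original node is adjacent only to auxiliary nodes (and to $s,t$), increasing any single coordinate $x_j$ weakly increases the residual at every other node: each cross term in \eqref{expanded-res} and \eqref{expanded-aux-res} enters through a $(\cdot)_+$ whose sign makes the dependence monotone in the stated direction. Together these give $g_i \ge 0$: the only coordinate whose residual is \emph{decreased} by a push is the pushed one, and it is reset to the nonnegative target $\rho\kappa d_i$ (hyperpush) or to $0$ (auxpush), while every other residual can only increase; combined with $x_i \ge 0$ (immediate from $\Delta x \ge 0$ and the nonnegative starting point), the nonnegativity half is complete.

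The upper bound $x_i \le 1$ is the crux. For a hyperpush at $i$, I evaluate \eqref{expanded-res} at the trial value $x_i = 1$: by the inductive hypothesis every neighboring $x_a, x_b \le 1$, so the $+\tfrac{1}{\gamma}\sum_b w_{bi}(x_b - x_i)_+$ term vanishes, the $-\tfrac{1}{\gamma}\sum_a w_{ia}(x_i - x_a)_+$ term is $\le 0$, and $d_i[\text{Ind}(i\in R) - 1] \le 0$; hence $r_i(1) \le 0 \le \rho\kappa d_i$, and since $r_i$ is strictly decreasing in $x_i$ the solution of $r_i = \rho\kappa d_i$ satisfies $x_i \le 1$. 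For an auxpush I argue by contradiction using Lemma~\ref{aux-lemma} ($x_a \ge x_b$): auxpush leaves all original coordinates fixed, so each original neighbor still has $x_j \le 1$; if the new $x_a > 1$ then every $(x_j - x_a)_+ = 0$, so $r_a = 0$ in \eqref{expanded-aux-res} forces $-w_{ab}(x_a - x_b) = 0$, i.e. $x_b = x_a > 1$, whereupon $r_b = 0$ reduces to $\sum_{i} w_{bi}(x_b - x_i)_+ = 0$; but $x_b > 1 \ge x_i$ makes every summand strictly positive (hyperedges are nonempty and weights are at least $1$), a contradiction. Thus $x_a \le 1$, and $x_b \le x_a \le 1$.

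I expect the main obstacle to be the auxpush step: one must simultaneously justify that its two coupled increments are nonnegative (the M-matrix argument above, while checking that across the binary-search pieces the relative ordering only refines, so each piece keeps the same sign structure) and that it cannot push $x_a$ above $1$, the latter relying delicately on Lemma~\ref{aux-lemma} and on the fact that auxpush does not alter any original-node value, so the inductive bound $x_j \le 1$ remains available mid-iteration. The hyperpush and base cases are routine once the monotonicity of the residual formulas is recorded.
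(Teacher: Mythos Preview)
Your proposal is correct. The organization differs from the paper's in one substantive way worth noting.

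For the nonnegativity of $g_i$ and for the upper bound $x_a,x_b\le 1$ on auxiliary nodes, your argument and the paper's are essentially identical: both use that a push only lowers the residual at the pushed coordinate (to a nonnegative target) while raising all neighboring residuals, and both derive the auxiliary upper bound by contradiction from $r_a=r_b=0$ together with Lemma~\ref{aux-lemma}.

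For the upper bound $x_i\le 1$ on \emph{original} nodes, the paper takes a maximum-principle route: it picks the largest $x_i$ among $i\in V$, supposes $x_i>1$, and uses $g_i\ge 0$ to force the existence of an adjacent $b\in V_b$ with $x_b>x_i$, then chains through $g_a=g_b=0$ to reach a contradiction. You instead evaluate $r_i$ at the trial value $x_i=1$ during the hyperpush and read off $r_i(1)\le 0\le\rho\kappa d_i$, concluding by monotonicity. Your route is more direct and avoids the detour through auxiliary residuals; the paper's route has the mild advantage that it does not need to track the push step explicitly (it is a static argument valid at any state where $g_i\ge 0$ and $g_a=g_b=0$). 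Either way the content is the same.

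One small caution: your M-matrix justification for $\Delta x_a,\Delta x_b\ge 0$ is correct on each linear piece, and your parenthetical about the binary-search pieces is the right observation, but if you write this out you should state explicitly that the signs of the cross-derivatives $\partial r_a/\partial x_b$ and $\partial r_b/\partial x_a$ do not flip across pieces (they are constant $+w_{ab}$), so the monotone-system structure persists globally. This is what makes the piecewise argument go through without further work.
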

\begin{proof}
The nonnegativity of $g_i$ is obvious. At any iteration of Algorithm~\ref{cut-algo}, we either change $g_i$ to $\rho\kappa d_i$ or add some nonnegative values to the residual of adjacent nodes or keep $g_i$ as zero. For each original node $i\in V$, to prove $0\leq x_i\leq 1$, recall $r_i = g_i$ expands to~\eqref{ri}.
Suppose $x_i$ is the largest element and $x_i>1$, then $\text{Ind}(i\in R)-x_i<0$. Since $g_i\geq 0$, this means there exists $b\in V_b$ such that $x_b>x_i>1$. Assume $a\in V_a$ is adjacent to $b$, then from equation~\eqref{ra-rb}, $g_a=g_b=0$.
 Since $x_i$ is the largest among all nodes $i\in V$, the only way to satisfy both equations is $x_a=x_b=x_i$. Thus, we have $x_i=x_b>x_i$, which is a contradiction. To prove $x_i\leq 1$ for $i\in V_a\cup V_b$, from Lemma~\ref{aux-lemma}, we only need to show $x_i\leq 1$ for $i\in V_a$. If there exists $a\in V_a$, $x_a>1$, since $x_i\leq 1$ for any $i\in V$, then $g_a=0$ means $x_a=x_b>1$ where $a\rightarrow b$. This means $g_b<0$, which is a contradiction.
\end{proof}

\mypara{Proof of Theorem~\ref{runtime-guarantee}.} 
By using Lemma~\ref{positivity}, $\|\vg\|_1$ becomes
\[\textstyle \|\vg\|_1=\sum_{i\in V\cup V_a\cup V_b}g_i=\sum_{i\in R}d_i(1-x_i)-\sum_{i\in\bar{R}}d_ix_i\]
This implies that any change to the auxiliary nodes will not affect $||\vg||_1$. Thus calling \texttt{LHQD-auxpush} doesn't change $||\vg||_1$. When there exists $i\in V$ such that $g_i>\kappa d_i$, then \texttt{hyper-push} will find $\Delta x_i$ such that $g'_i=\rho\kappa d_i$. Then the new $g'_i$ can be written as
\begin{multline}
\label{updated_residual}
g_i'=\frac{1}{\gamma}\sum_{b\in V_b}w_{bi}(x_b-x_i-\Delta x_i)_+-\frac{1}{\gamma}\sum_{a\in V_a}w_{ia}(x_i+\Delta x_i-x_a)_++\\
\kappa d_i(\text{Ind}[i\in R]-x_i-\Delta x_i)
\end{multline}
Note $g_i'$ is a decreasing function of $\Delta x_i$ and $g_i'>0$ when $\Delta x_i=0$, $g_i'<0$ when $\Delta x_i=1$ by using Lemma~\ref{positivity}. This suggests that there exists a unique $\Delta x_i$ that satisfies $g_i'=\rho\kappa d_i$. Moreover, $(x_b-x_i-\Delta x_i)_+\geq (x_b-x_i)_+-\Delta x_i$ and $(x_i+\Delta x_i-x_a)_+\leq (x_i-x_a)_++\Delta x_i$, thus we have
\[\textstyle \rho\kappa d_i=g_i'\geq g_i-\frac{1}{\gamma}(\sum_{b\in V_b}w_{bi}+\sum_{a\in V_a}w_{ia})\Delta x_i-\kappa d_i\Delta x_i\] 
From equation (4.9) of~\cite{veldt2020hypercuts},
\[\textstyle  \sum_{b\in V_b}w_{bi}=\sum_{a\in V_a}w_{ai}=\sum_{e\in\mathcal{E},i\in e}f_e(\{i\})\leq\delta d_i\]
Thus, we have $\textstyle  d_i\Delta x_i\geq \frac{g_i-g_i'}{\kappa+\delta/\gamma}>\frac{\gamma\kappa(1-\rho)}{\gamma\kappa+\delta}d_i$
So the decrease of $||\vg||_1$ will be at least $\gamma\kappa(1-\rho)d_i/(\gamma\kappa+\delta)$. Since $||\vg||_1=\text{vol}(R)$ initially, we have
$\textstyle \sum_{i=1}^Td_i\leq(\gamma\kappa+\delta)\text{vol}(R)/\gamma\kappa(1-\rho)=O(\text{vol}(R)) \qedhere$.





\section{Proof of Theorem~\ref{thm:cond}} \label{app:cond} 

We first introduce some simplifying notation. We use $1_v$ to denote the canonical basis with $v$th component as 1 and others as 0 and $1_S = \sum_{v\in S}1_v$. Without loss of generality, we assume each gadget reduced from a hyperedge has weight $c_e=1$. Otherwise one can simply add $c_e$ as the coefficients and nothing needs to change. We omit the degree regularization term in \eqref{hypergraph-2-norm-cut}. Furthermore, we normalize the seeds to guarantee $\sum_{v\in V}d_v x_v=1$ and group terms in \eqref{hypergraph-2-norm-cut} to remove the source $x_s = 1$ and sink $x_t = 0$   
\begin{align}
    \minimize_{\vx} \quad Q(\vx)\triangleq \gamma \sum_{v\in V}d_v\left(x_v - \frac{1}{\vol(S)}1_S\right)^2 + \sum_{e\in \mathcal{E}}Q_e(\vx) \label{eq:obj-proof}
\end{align}
where 
\begin{align}\label{eq:qe}
\footnotesize
Q_e(\vx) \triangleq \underset{x_a^{(e)},x_b^{(e)}}{\min}\sum_{v\in e}\left[(x_v-x_a^{(e)})_+^2 +(x_b^{(e)}-x_v)_+^2\right] + \delta_e (x_a^{(e)}-x_b^{(e)})_+^2. 
\end{align}

We denote $M=\vol(\mathcal{H})=\sum_{v\in V} d_v$. We denote the solution $\vx$ with the parameter $\gamma$ and the seed set $S$ of the optimization \eqref{eq:obj-proof} as $\vx(\gamma, S)$ and its component for node $v$ as $x_v(\gamma, S)$. We also define another degree weighted vector $p = D\vx$, where $D$ is the diagonal degree matrix. For a vector $p$ and a node set $S'$, we use $p(S')$ to denote $p(S')=\sum_{v\in S'}p_v$. It is easy to check that $p(\gamma, S)(V) = 1$ for any $S$. For a node set $S$, define $\partial S=\{e\in \mathcal{E}|e\cap S\neq\emptyset, e\cap V\backslash S\neq\emptyset\}$. 

We now define our main tool: the Lov\'{a}sz-Simonovits Curve. 
\begin{definition}[Lov\'{a}sz-Simonovits Curve (LSC)]  Given an $\vx$, we order its components from large to small by breaking equality arbitrarily, say $x_1,x_2,....,x_{n}$ w.l.o.g, and define $S_j^{\vx}=\{x_1,...,x_j\}$. LSC defines a corresponding piece-wise function $I_{\vx}:[0,M] \rightarrow R $ s.t $I_{\vx}(0) = 0$, $I_{\vx}(\vol(G)) = 1$ and $I(\vol(S_j^{\vx})) = p(S_j^{\vx}).$ And for any $k\in [\vol(S_j^{\vx}),\vol(S_{j+1}^{\vx})]$, 
$$I_{\vx}(k) = I_{\vx}(\vol(S_j^{\vx}))+ (k-\vol(S_j^{\vx}))x_j.$$
\end{definition} 
Our proof depends on the following two Lemma~\ref{lm:up} and \ref{lm:lw} that will characterize the upper and lower bound of $I_{\vx}$, which finally leads to the main theorem.

\begin{lemma} \label{lm:up}
For a set $S$, given an $\vx=\vx(\gamma, S)$, let $\phi_{\vx}$ and $S_{\vx}$ be the minimal conductance and the node set obtained through a sweep-cut over $\vx$. For any integer $t>0$ and $k\in [0,M]$, the following bound holds 
$$I_{\vx}(k)\leq \frac{k}{M}+\frac{\gamma t}{2+\gamma}+\sqrt{\frac{\min(k,M-k)}{\min_{i\in S}d_i}}(1-\frac{\sigma_{\vx}^2\phi_{\vx}^2}{8})^t$$
where $\sigma_{\vx}=(2\max_{e\in \partial S_{\vx}}\min\{\delta_e, |e|/2\} + 1)^{-1}$.
\end{lemma}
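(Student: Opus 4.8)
The plan is to follow the Lov\'{a}sz--Simonovits curve-flattening template from the PageRank analysis of Andersen--Chung--Lang, adapted to the reduced-graph diffusion, where the genuinely new work is eliminating the degree-zero auxiliary nodes. First I would write the first-order optimality conditions of \eqref{eq:obj-proof} with $\kappa=0$. Since each $Q_e$ in \eqref{eq:qe} is itself a minimization over the auxiliary values $x_a^{(e)},x_b^{(e)}$, at optimality these are pinned to their minimizers, and on the sorted ordering every $(\cdot)_+^2$ term is locally quadratic. Eliminating the auxiliary variables (a Schur-complement step) turns the stationarity condition into a self-averaging identity of the form
\[
p \;=\; \tfrac{\gamma}{2+\gamma}\,\frac{D\mathbf{1}_S}{\vol(S)} \;+\; \tfrac{2}{2+\gamma}\,\mW p,
\]
where $p=D\vx$ and $\mW$ is a reversible lazy-walk operator on $V$ whose effective inter-node weights are produced by the gadgets. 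This exhibits $\alpha:=\gamma/(2+\gamma)$ as the effective teleportation constant (explaining the $\gamma t/(2+\gamma)$ term) and shows that the contribution of a single gadget to any cut is capped by $\min\{\delta_e,|e|/2\}$, which is exactly the quantity controlling $\sigma_{\vx}$.

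Next I would prove the one-step curve inequality. Using reversibility of $\mW$ together with the fact that a lazy walk keeps half of each vertex's mass in place and moves the remainder across edges, I would show that for every prefix $S_j=S_j^{\vx}$ of the sweep order,
\[
I_{\vx}(\vol(S_j)) \;\le\; \alpha\,\frac{\vol(S_j\cap S)}{\vol(S)} \;+\; (1-\alpha)\,\tfrac12\bigl(I_{\vx}(\vol(S_j)-c_j)+I_{\vx}(\vol(S_j)+c_j)\bigr),
\]
where $c_j$ is a lower bound on the weighted cut of $S_j$ realized in the reduced graph. The right-hand averaging uses concavity of $I_{\vx}$ and the subadditivity $I_{\vx}(a+b)\le I_{\vx}(a)+I_{\vx}(b)$ (valid since $I_{\vx}$ is concave with $I_{\vx}(0)=0$), and the seed term is bounded by $\alpha$ since $(D\mathbf{1}_S/\vol(S))(S_j)=\vol(S_j\cap S)/\vol(S)$. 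I then convert the cut into conductance: because the reduced-graph cut of a hypergraph cut is diluted by at most the factor $2\max_{e\in\partial S_{\vx}}\min\{\delta_e,|e|/2\}+1=\sigma_{\vx}^{-1}$, and because $\phi_{\vx}$ is the minimum conductance along the sweep, I get $c_j\ge \sigma_{\vx}\,\phi_{\vx}\,\min(\vol(S_j),M-\vol(S_j))$, for non-breakpoints extending by concavity.

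Finally I would iterate. Rewriting the recurrence in terms of the deviation of $I_{\vx}$ from the uniform line $k/M$ and inserting the conductance lower bound on $c_j$, the standard Lov\'{a}sz--Simonovits induction contracts the deviation by a factor $\bigl(1-\sigma_{\vx}^2\phi_{\vx}^2/8\bigr)$ per step while each step adds at most $\alpha=\gamma/(2+\gamma)$ from the seed term; after $t$ steps this produces the three summands $k/M$, $\gamma t/(2+\gamma)$, and the geometric factor. The zero-step deviation is bounded by $\sqrt{\min(k,M-k)/\min_{i\in S}d_i}$ using $p(S_j)\le I_{\vx}(\vol(S_j))$ and a crude bound on $\max_v x_v$ (hence on the height of the curve), giving the square-root prefactor. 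The main obstacle I anticipate is the first step: making the auxiliary-node elimination rigorous so that $\mW$ is a genuine reversible lazy walk on $V$ despite the degree-zero nodes, and verifying that every sweep set's effective reduced-graph cut is bounded below by $\sigma_{\vx}$ times its hypergraph cut. This is precisely where the degree-zero auxiliary nodes obstruct a direct appeal to the existing directed-graph analyses, exactly as flagged in \S\ref{sec:theory}.
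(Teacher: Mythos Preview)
Your high-level template---Lov\'asz--Simonovits curve flattening, a one-step averaging inequality at every sweep prefix, then induction on $t$---is exactly what the paper does, and once the one-step inequality is in hand your induction and base-case arguments are essentially the same as the paper's. The genuine gap is in how you propose to obtain the one-step inequality. The Schur-complement elimination of the auxiliary nodes does \emph{not} yield a fixed reversible lazy walk $\mW$ on $V$: each $Q_e$ is only piecewise quadratic, and which piece is active depends on how many of the $x_v$, $v\in e$, lie above $x_a^{(e)*}$ and below $x_b^{(e)*}$. So at best $\mW=\mW(\vx)$. More seriously, even for the fixed optimizer $\vx=\vx(\gamma,S)$ the eliminated quadratic form does not have node weights matching $d_v$: any $v\in e$ with $x_b^{(e)*}\le x_v\le x_a^{(e)*}$ receives zero contribution from $L_e(\vx)$, yet $d_v$ still counts $f_e(\{v\})$. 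Hence the putative walk is not stochastic with respect to the volumes the LSC curve is built from, and the ``lazy half stays in place'' step of the standard ACL argument has no footing here.

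The paper avoids a walk interpretation altogether. It works directly with the gradient pieces $L_e(\vx)=\sum_{v\in e}\bigl[(x_v-x_a^{(e)*})_+-(x_b^{(e)*}-x_v)_+\bigr]1_v$ and proves the one-step curve inequality hyperedge by hyperedge: for each $e$ and each sweep prefix $S_j^{\vx}$ it writes $\langle L_e(\vx),1_{S_j^{\vx}}\rangle$ explicitly as a signed combination of the ordered $x_i^{(e)}$ using the closed forms for $x_a^{(e)*},x_b^{(e)*}$, splits into three cases according to where $k=|e\cap S_j^{\vx}|$ sits relative to $k_+$ and $k^-$, and then shows by direct (and fairly tedious) algebra that the smaller of the positive and negative coefficient sums is at least $(2\min\{\delta_e,|e|/2\}+1)^{-1}\min\{k,|e|-k,\delta_e\}$. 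That inequality \emph{is} the $\sigma_{\vx}$ bound you correctly flagged as the main obstacle; the paper's answer to it is brute-force case analysis, not a structural reduction to a walk. Summing over $e$ and combining with the stationarity condition $\gamma D(\vx-\vx_0)+\sum_e L_e(\vx)=0$ then produces the recursion with the $\gamma/(2+\gamma)$ coefficient, after which the induction on $t$ is standard and matches your outline.
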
 

\begin{lemma} \label{lm:lw}
For a set $S$, if a node $v\in S$ is sampled according to a distribution $\mathbb{P}$ s.t 
\begin{align}\label{eq:sampling}
 \mathbb{E}_{v\sim \mathbb{P}} [p(\gamma,\{v\})(\bar{S})] \leq c p(\gamma, S)(\bar{S}),
 \end{align}
where $c$ is a constant, then with probability at least $\frac{1}{2},$ one has 
$$p(\gamma,\{v\})(S)\geq 1 - {2c\phi(S)}/{\gamma}.$$
\end{lemma}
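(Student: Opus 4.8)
The plan is to reduce the statement to a single deterministic leakage bound, namely $p(\gamma, S)(\bar S)\le \phi(S)/\gamma$, and then obtain that bound from the first-order optimality conditions of \eqref{eq:obj-proof}. For the probabilistic part, first I would recall that $p(\gamma,\{v\})(V)=1$ for every seed, so $p(\gamma,\{v\})(S)=1-p(\gamma,\{v\})(\bar S)$. Applying Markov's inequality to the nonnegative random variable $p(\gamma,\{v\})(\bar S)$ together with the sampling hypothesis \eqref{eq:sampling} gives, with probability at least $\tfrac12$, $p(\gamma,\{v\})(\bar S)\le 2\,\mathbb{E}_{v\sim\mathbb{P}}[p(\gamma,\{v\})(\bar S)]\le 2c\,p(\gamma,S)(\bar S)$. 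Thus, once the deterministic bound $p(\gamma,S)(\bar S)\le\phi(S)/\gamma$ is established, we get $p(\gamma,\{v\})(S)\ge 1-2c\phi(S)/\gamma$ with probability at least $\tfrac12$, which is exactly the claim.

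For the deterministic bound, let $\vx^*=\vx(\gamma,S)$ and write $t=1/\vol(S)$. First I would record two structural facts about the minimizer that mirror Lemmas~\ref{aux-lemma} and~\ref{positivity}: a maximum principle giving $0\le x_v^*\le t$ for all $v$ and $0\le x_b^{(e)}\le x_a^{(e)}\le t$ for every gadget, and the inner optimality of the auxiliary variables in \eqref{eq:qe}, which yields the balance identities $\sum_{v\in e}(x_b^{(e)}-x_v^*)_+=\sum_{v\in e}(x_v^*-x_a^{(e)})_+=\delta_e\,(x_a^{(e)}-x_b^{(e)})$. Next I would differentiate \eqref{eq:obj-proof}; by the envelope theorem the derivative of $Q_e$ through the optimal auxiliary variables vanishes, so stationarity at a node $v\in\bar S$ (where the seed target is $0$) reads $\gamma d_v x_v^*=\sum_{e\ni v}[(x_b^{(e)}-x_v^*)_+-(x_v^*-x_a^{(e)})_+]$ whenever $x_v^*>0$. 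Nodes with $x_v^*=0$ contribute $0$ to $p(\gamma,S)(\bar S)$ while contributing only the nonnegative quantity $\sum_{e\ni v}x_b^{(e)}$ to the bracketed sum, so summing over $v\in\bar S$ gives $\gamma\,p(\gamma,S)(\bar S)\le\sum_{e}\tilde C_e$, where $\tilde C_e:=\sum_{v\in e\cap\bar S}[(x_b^{(e)}-x_v^*)_+-(x_v^*-x_a^{(e)})_+]$.

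The main obstacle is the per-edge estimate $\tilde C_e\le t\,f_e(e\cap S)=t\min\{|e\cap S|,|e\cap\bar S|,\delta_e\}$, which must reproduce the exact cardinality-based cut penalty so that the sum collapses to the conductance. The key observation I would use is that the balance identities force $\sum_{v\in e}[(x_b^{(e)}-x_v^*)_+-(x_v^*-x_a^{(e)})_+]=0$, so $\tilde C_e$ equals both its $\bar S$-expression and the negative of the $S$-expression $\sum_{v\in e\cap S}[(x_v^*-x_a^{(e)})_+-(x_b^{(e)}-x_v^*)_+]$. Dropping the nonpositive term in the first form and bounding each summand by $x_b^{(e)}\le t$ gives $\tilde C_e\le|e\cap\bar S|\,t$; doing the same in the second form and bounding by $x_v^*\le t$ gives $\tilde C_e\le|e\cap S|\,t$; and feeding the $\bar S$-form into the balance identity gives $\tilde C_e\le\delta_e(x_a^{(e)}-x_b^{(e)})\le\delta_e\,t$. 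Taking the minimum yields the per-edge bound, while non-boundary edges contribute $0$ (empty sum when $e\subseteq S$, and the balance identity when $e\subseteq\bar S$). Summing over $e\in\partial S$ then gives $\gamma\,p(\gamma,S)(\bar S)\le t\sum_{e\in\partial S}f_e(e\cap S)=\cut_{\mathcal H}(S)/\vol(S)=\phi(S)$, using $\vol(S)\le\vol(\bar S)$. The delicate points to get right are the envelope/subgradient handling of the $(\cdot)_+$ terms and the two-sided rewriting of $\tilde C_e$, since that symmetry is what upgrades a loose $\delta_e$-only bound into the sharp cardinality cut penalty.
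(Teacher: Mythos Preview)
Your reduction to the deterministic leakage bound $p(\gamma,S)(\bar S)\le\phi(S)/\gamma$ followed by Markov's inequality is exactly the paper's outer argument; the paper packages that bound as Lemma~\ref{lm:lw-piS}. Where you diverge is in how that bound is proved. The paper does not work with the true optimizer $\vx^*$. Instead it introduces the \emph{restricted} minimizer $\tilde{\vx}=\arg\min Q(\vx)\big|_{x_v=0,\,v\in\bar S}$, invokes an (informally justified) monotonicity comparison $\sum_{v\in S}d_v x_v^*\ge \sum_{v\in S}d_v\tilde x_v$, and then, because in the clamped problem every $v\in e\cap\bar S$ has $\tilde x_v=0$, it can solve for the auxiliary variables explicitly and obtain the product inequality $\delta_e(\tilde x_a^{(e)}-\tilde x_b^{(e)})\le \frac{|e\cap S|\,|e\backslash S|\,\delta_e}{(\delta_e+|e\cap S|)(\delta_e+|e\backslash S|)}\,t$, which is then coarsened to $\min\{|e\cap S|,|e\backslash S|,\delta_e\}\cdot t$. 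Your route sidesteps the auxiliary problem entirely: you sum stationarity over $v\in\bar S$ (the paper sums over $v\in S$ in the restricted problem) and exploit the two-sided rewriting of $\tilde C_e$ via the gadget balance identities together with the maximum principle $0\le x_v^*,x_b^{(e)}\le x_a^{(e)}\le t$. This is cleaner and fully rigorous without the comparison step; the trade-off is that the paper's intermediate product formula is slightly sharper before the final coarsening, though both approaches land on the same $\min\{|e\cap S|,|e\backslash S|,\delta_e\}\cdot t$ per-edge bound and hence the same $\phi(S)/\gamma$ leakage estimate. One small remark: since $Q_e$ uses $(\cdot)_+^2$, the objective in \eqref{eq:obj-proof} is $C^1$ and unconstrained, so stationarity already holds at every node; your separate treatment of $x_v^*=0$ is harmless but unnecessary.
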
 

Lemma~\ref{lm:lw} gives the lower bound $I_{\vx(\gamma, \{v\})}(\vol(S))$ as  this value is no less than $p(\gamma,\{v\})(S)$. Note that the sampling assumption of Lemma~\ref{lm:lw} is natural in the standard graph case, when $\mathbb{P}$ samples each node proportionally to its degree, we have an equality with $c=1$ in \eqref{eq:sampling}. Combining Lemma~\ref{lm:up} and \ref{lm:lw},  we arrive at 
 \begin{theorem} \label{thm:app}
Given a set $S^*$ of vertices s.t. $\vol(S^*)\leq \frac{M}{2}$ and $\phi(S^*)\leq \frac{\gamma}{8c}$ for some positive constants $\gamma,\,c.$ If there exists a distribution $\mathbb{P}$ s.t. $\mathbb{E}_{v\sim \mathbb{P}}[p(\gamma,\{v\})](\bar{S}^*)\leq c p(\gamma,S^*)(\bar{S}^*),$ then with probability at least $\frac{1}{2},$ the obtained conductance satisfies
\begin{align*}
\phi_{\vx}\leq \sqrt{32\gamma \max_{e\in \partial S_{\vx}}\min\left\{\delta_e, \frac{|e|}{2}\right\} ln \left(100\frac{\vol(S^*)}{d_v}\right)},
\end{align*}
where $\vx=\vx(\gamma, \{v\})$ and $v$ is sampled from $\mathbb{P}$. $S_{\vx}$ is the node set obtained via the sweep-cut over $\vx$.
\end{theorem}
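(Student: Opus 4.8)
The plan is to prove the sandwich on the Lov\'{a}sz--Simonovits curve $I_{\vx}$ supplied by Lemmas~\ref{lm:up} and~\ref{lm:lw}, exactly in the spirit of the Andersen--Chung--Lang Cheeger argument for personalized PageRank, and then to combine the two bounds at the single volume $k=\vol(S^*)$. First I would record the two elementary facts about $I_{\vx}$ that drive everything: it is concave and piecewise linear, so it is pinned down by its values at the breakpoints $\vol(S_j^{\vx})$, and $p(S') \le I_{\vx}(\vol(S'))$ for every node set $S'$. Concavity is what lets me upgrade statements about breakpoints to statements about an arbitrary $k\in[0,M]$, and hence to an arbitrary sweep set.

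For the upper bound (Lemma~\ref{lm:up}) I would derive a one-step ``smoothing'' inequality for $I_{\vx}$ directly from the KKT conditions of the quadratic objective~\eqref{eq:obj-proof}. After eliminating the auxiliary variables $x_a^{(e)},x_b^{(e)}$ inside each $Q_e$, the optimality conditions express the degree-weighted vector $p = D\vx$ as a fixed point of a lazy-walk-type operator with a teleportation weight controlled by $\gamma$ (this is the hypergraph analogue of the PageRank characterization of~\cite{gleich2014anti}). The core estimate is that for a level set $S_j^{\vx}$ the mass $p(S_j^{\vx})$ is bounded by the average of $I_{\vx}$ evaluated slightly below and slightly above $\vol(S_j^{\vx})$, with the displacement governed by the cut crossing $S_j^{\vx}$; iterating this $t$ times and invoking concavity produces the geometric decay factor $(1-\sigma_{\vx}^2\phi_{\vx}^2/8)^t$, while the teleportation contributes the additive $\gamma t/(2+\gamma)$ and $k/M$ is the stationary baseline. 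The quantity $\sigma_{\vx}=(2\max_{e\in\partial S_{\vx}}\min\{\delta_e,|e|/2\}+1)^{-1}$ enters precisely because each hyperedge gadget routes flow through its auxiliary pair, so a boundary hyperedge of effective size $\bar{\delta}$ can transport up to $\bar{\delta}$ units of mass rather than one, weakening the per-step contraction by the factor $\sigma_{\vx}$.

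For the lower bound (Lemma~\ref{lm:lw}) I would first control the escaped mass when seeding on the whole target set: a short argument from the optimality conditions shows $p(\gamma,S^*)(\bar{S}^*)=O(\phi(S^*)/\gamma)$. Feeding this into the good-sampling hypothesis of Definition~\ref{def:sampling} and applying Markov's inequality over the random seed $v\sim\mathbb{P}$ gives, with probability at least $\tfrac{1}{2}$, the bound $p(\gamma,\{v\})(S^*)\ge 1-2c\phi(S^*)/\gamma$, and hence $I_{\vx}(\vol(S^*))\ge 1-2c\phi(S^*)/\gamma$. Under the assumption $\phi_{\mathcal{H}}(S^*)\le \gamma/(8c)$ this lower bound is at least $\tfrac{3}{4}$.

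Finally I would combine the bounds at $k=\vol(S^*)\le M/2$, where the upper-bound baseline satisfies $k/M\le \tfrac12$. Choosing $t=\Theta(1/\gamma)$ (concretely $t\approx 1/(4\gamma)$) keeps the teleportation term below $\tfrac18$, so the remaining gap $\tfrac34-\tfrac12-\tfrac18=\tfrac18$ must be absorbed by $\sqrt{\vol(S^*)/d_v}\,(1-\sigma_{\vx}^2\phi_{\vx}^2/8)^t$; bounding $(1-z)^t\le e^{-zt}$, taking logarithms, and solving for $\phi_{\vx}$ yields $\phi_{\vx}\le \bar{\delta}\sqrt{32\gamma\ln(100\vol(S^*)/d_v)}$ after collecting constants. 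I expect the main obstacle to be the upper bound in the presence of the degree-zero auxiliary nodes: the classical Lov\'{a}sz--Simonovits argument relies on a stationary distribution proportional to degree, which is ill-defined once auxiliary nodes have zero degree, so the smoothing inequality must be re-derived on the original node set $V$ alone after marginalizing out the gadget variables, and one must carefully track how a boundary hyperedge inflates the mass that can cross a sweep cut. This is exactly the mechanism that turns the clean $\phi$-dependence into a $\bar{\delta}$-weighted bound, and it is the step where the degree-zero nodes ``introduce challenges.''
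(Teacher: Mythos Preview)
Your proposal is correct and follows the same Lov\'{a}sz--Simonovits sandwich strategy as the paper: invoke Lemma~\ref{lm:up} for the upper envelope $I^{(t)}$ and Lemma~\ref{lm:lw} (with the escape bound $p(\gamma,S^*)(\bar{S}^*)\le \phi(S^*)/\gamma$ plus Markov) for the lower bound $p(\gamma,\{v\})(S^*)\ge 3/4$, then compare the two at $k=\vol(S^*)\le M/2$.

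The only substantive difference is in the final balancing of the two nontrivial terms in $I^{(t)}(\vol(S^*))$. You fix $t\approx 1/(4\gamma)$ so that the teleportation term $\gamma t/(2+\gamma)$ is at most $1/8$, and then force the decay term $\sqrt{\vol(S^*)/d_v}(1-\sigma_{\vx}^2\phi_{\vx}^2/8)^t$ to absorb the remaining $1/8$; the paper does the opposite, choosing $t=\big\lceil \tfrac{8}{\sigma_{\vx}^2\phi_{\vx}^2}\ln\big(8\sqrt{\vol(S^*)/d_v}\big)\big\rceil$ so that the decay term is at most $1/8$ and then reading off $\phi_{\vx}$ from the teleportation term. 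Both routes yield the stated bound (your $(1-z)^t\le e^{-zt}$ chains correctly via $1/8\le \sqrt{\cdot}\,(1-z)^t\le \sqrt{\cdot}\,e^{-zt}$). The paper's choice has the minor advantage that the integer ceiling on $t$ is absorbed cleanly into the constants ($\ln 8\to\ln 10\to 100$), whereas with your $t=\Theta(1/\gamma)$ one has to be a bit careful that rounding $t$ does not push the teleportation term past $1/8$; this only requires a mild adjustment of constants.
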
 
\begin{proof} 
We combine Lemma~\ref{lm:lw} and \ref{lm:up} and use the same technique as Thm.~17 in \cite{panli2020qdsfm} (Sec. 7.7.3).
\end{proof} 

By removing the normalizing on the number of seeded nodes, Thm.~\ref{thm:app} becomes Thm.~\ref{thm:cond}.

\subsection{Proof of Lemma~\ref{lm:up}}
Define $L_e(\vx)\triangleq \nabla_{\vx} \frac{1}{2}Q_e(\vx)$ \eqref{eq:qe} and with some algebra, we have
\begin{align*}
    L_e(\vx) = \sum_{v\in e}\left[(x_v-x_a^{(e)*})_+ - (x_b^{(e)*}-x_v)_+\right]1_v,
\end{align*}
where $x_a^{(e)*}$ and $x_b^{(e)*}$ are the optimal values in~\eqref{eq:qe}. In the following, we will first prove Lemma~\ref{lm:diffusion} and further use it to prove Lemma~\ref{lm:diffusion2}. The same proof in Thm. 16 in \cite{panli2020qdsfm} (Sec. 7.7.2) can be used to leverage Lemma~\ref{lm:diffusion2} to prove Lemma~\ref{lm:up}.

\begin{lemma}\label{lm:diffusion} 
Given an $\vx$, we order its components from large to small, say $x_1,x_2,....,x_{n}$ w.l.o.g., and define $S_j^{\vx}=\{x_1,...,x_j\}$. Define $\sigma_{j}^{\vx} = (1+2\max_{e\in \partial S_j^{\vx}}\min\left\{\delta_e, |e|/2\right\})^{-1},$ and we have
\begin{align*}
    2I_{\vx}(\vol(S_j^{\vx})) - \langle \sum_{e\in E} L_e(x), 1_{S_j^{\vx}}\rangle &\leq \\ I_{\vx}(\vol(S_j^{\vx})-\sigma_{j}^{\vx} cut(S_j^{\vx}))&+I_{\vx}(\vol(S_j^{\vx})+\sigma_{j}^{\vx} cut(S_j^{\vx})).
\end{align*}
\end{lemma}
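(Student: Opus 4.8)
The plan is to recognize Lemma~\ref{lm:diffusion} as a single-step Lov\'{a}sz--Simonovits (LS) mixing inequality and to adapt the classical graph argument (as in~\cite{andersen2006local} and Thm.~16 of~\cite{panli2020qdsfm}) to the gadget-reduced hypergraph. Write $N := \vol(S_j^{\vx})$, $c := \cut(S_j^{\vx})$, $\sigma := \sigma_j^{\vx}$, and $p = D\vx$, so that $I_{\vx}(N) = p(S_j^{\vx})$. The quantity $\tfrac12\langle \sum_e L_e(\vx), 1_{S_j^{\vx}}\rangle$ should be read as the net mass that one lazy diffusion step removes from $S_j^{\vx}$: if $p'$ denotes the diffused vector, then $p'(S_j^{\vx}) = I_{\vx}(N) - \tfrac12\langle \sum_e L_e(\vx), 1_{S_j^{\vx}}\rangle$, so the claim is equivalent to the familiar LS bound $p'(S_j^{\vx}) \le \tfrac12\bigl[I_{\vx}(N-\sigma c)+I_{\vx}(N+\sigma c)\bigr]$. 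I would therefore aim to lower bound $\langle \sum_e L_e(\vx), 1_{S_j^{\vx}}\rangle$ by $2I_{\vx}(N) - I_{\vx}(N-\sigma c) - I_{\vx}(N+\sigma c)$; because $I_{\vx}$ is concave, the right-hand side is increasing in the shift, and the reduced shift $\sigma c \le c$ is exactly what makes the estimate attainable.

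First I would decompose the gradient edge-by-edge via $\langle \sum_e L_e(\vx), 1_{S_j^{\vx}}\rangle = \sum_e \langle L_e(\vx), 1_{S_j^{\vx}}\rangle$, together with the per-hyperedge mass conservation $\sum_{v\in e} L_e(\vx)_v = 0$, which follows from the stationarity of the auxiliary values, namely $\sum_{v\in e}(x_v - x_a^{(e)*})_+ = \sum_{v\in e}(x_b^{(e)*}-x_v)_+ = \delta_e(x_a^{(e)*}-x_b^{(e)*})_+ =: F_e$. For an uncut hyperedge the contribution vanishes, so only $e\in\partial S_j^{\vx}$ matter. For each such $e$, since $S_j^{\vx}$ is a top set, the nodes of $e\cap S_j^{\vx}$ carry the larger values and contribute the nonnegative out-terms $(x_v - x_a^{(e)*})_+$, while the nodes of $e\setminus S_j^{\vx}$ contribute the nonpositive in-terms $-(x_b^{(e)*}-x_v)_+$; this identifies $\langle L_e(\vx),1_{S_j^{\vx}}\rangle$ as (outflow across the gadget) minus (inflow), in direct analogy with the graph case.

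Next I would convert a unit of hypergraph cut into the correct volume shift. The crucial gadget estimate is that the flow $F_e$ is carried by at most $\min\{\delta_e,|e|/2\}$ active unit edges on the minority side together with the single central edge $(a,b)$, so the per-step mixing of a hyperedge is diluted by the factor $1 + 2\min\{\delta_e,|e|/2\}$; summing over $e\in\partial S_j^{\vx}$ and taking the worst edge yields $\sigma = (1 + 2\max_{e\in\partial S_j^{\vx}}\min\{\delta_e,|e|/2\})^{-1}$. With this conversion, I would invoke the two monotonicity properties of the concave LS curve as in the graph proof. Because the gadget scales both flows by $\sigma$, the outflow mass is at least $I_{\vx}(N) - I_{\vx}(N-\sigma c)$ (it sits on the top portion of $S_j^{\vx}$, with per-node weight at most the degree) and the inflow mass is at most $I_{\vx}(N+\sigma c) - I_{\vx}(N)$ (it sits just below the sweep threshold). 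Substituting these into the outflow-minus-inflow expression telescopes to $2I_{\vx}(N) - I_{\vx}(N-\sigma c) - I_{\vx}(N+\sigma c)$, which is the claim; the LS iteration of Thm.~16 in~\cite{panli2020qdsfm} then lifts this one-step estimate to Lemma~\ref{lm:up}.

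The main obstacle is the gadget/auxiliary structure, which has no counterpart in the plain graph argument. Unlike a graph edge, the flow of a hyperedge is routed through the auxiliary nodes $a,b$ whose values $x_a^{(e)*}, x_b^{(e)*}$ are themselves optimization variables, so the count of active minority-side edges (the source of the $\min\{\delta_e,|e|/2\}$ term, and hence of $\sigma$) must be argued from the stationarity conditions rather than read off a fixed incidence structure. A secondary, equally delicate difficulty is the bookkeeping for the ``middle'' nodes with $x_b^{(e)*} \le x_v \le x_a^{(e)*}$, which contribute zero to $L_e(\vx)$, and for ties when forming the sweep set $S_j^{\vx}$: the clean split of each cut hyperedge into a high side and a low side that the LS monotonicity facts rely on must be justified with care. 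The degree-zero auxiliary nodes are precisely why one cannot simply run a standard directed-graph LS analysis on the reduction and must instead work with $\sum_e L_e(\vx)$ on the original vertices.
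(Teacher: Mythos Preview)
Your high-level framework is right: this is a one-step LS inequality, the edge-by-edge decomposition and the conservation $\sum_{v\in e}L_e(\vx)_v=0$ via stationarity are both correct, and once the per-edge estimate is in hand the concavity argument does finish the proof. But the two steps that carry the actual content do not go through as you describe, and the paper's proof takes a different, explicitly computational route.

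First, your sign decomposition---``nodes of $e\cap S_j^{\vx}$ contribute $(x_v-x_a^{(e)*})_+\ge 0$, nodes of $e\setminus S_j^{\vx}$ contribute $-(x_b^{(e)*}-x_v)_+\le 0$''---is only valid when the sweep threshold $x_j$ lies in the window $[x_b^{(e)*},x_a^{(e)*}]$. Nothing forces this: if $x_j<x_b^{(e)*}$ then some $v\in e\cap S_j^{\vx}$ have $L_e(\vx)_v=-(x_b^{(e)*}-x_v)<0$, so your ``outflow only from the top side'' picture fails. The paper does not try to argue signs term-by-term. Instead it solves the gadget stationarity conditions in closed form for $x_a^{(e)*},x_b^{(e)*}$ as explicit weighted averages of the node values, substitutes them back, and writes $\langle L_e(\vx),1_{S_j^{\vx}}\rangle$ as a linear form in the $x_v$'s with coefficients depending on $k=|e\cap S_j^{\vx}|$, $k_+=|\{v:x_v>x_a^{(e)*}\}|$, $k_-=|\{v:x_v<x_b^{(e)*}\}|$, and $\delta_e$. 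This forces a three-case split ($k\le k_+$; $k_+\le k\le |e|-k_-$; $k\ge |e|-k_-$), and only after the substitution does one see that the coefficients are nonnegative on $S_j^{\vx}$, nonpositive on its complement, and each at most $1$ (matching the per-edge degree contribution needed for the LS bound).

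Second, the factor $\sigma$ does not arise from a count of ``active minority-side edges'': $k_+$ and $k_-$ are not bounded by $\delta_e$ or $|e|/2$, so the heuristic you give for the $1+2\min\{\delta_e,|e|/2\}$ denominator is not a proof. What the paper actually proves is a family of algebraic inequalities showing, in each of the three cases, that the common sum of positive (equivalently negative) coefficients is at least $\tfrac{1}{2\delta_e'+1}\min\{k,|e|-k,\delta_e\}$ with $\delta_e'=\min\{\delta_e,|e|/2\}$; the binding case is the middle one, where the inequality reads
\[
\frac{k_+k_-\delta_e}{k_+k_-+\delta_e(k_++k_-)}\;\ge\;\frac{\min\{k,|e|-k,\delta_e\}}{2\delta_e'+1},
\]
tight at $k_+=k_-=1$, $k=\delta_e=|e|/2$. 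These coefficient inequalities, not an edge count, are what produce $\sigma_j^{\vx}$. Once they are established (and the per-node coefficient bound $\le 1$ is checked), the LS concavity step you outline is exactly how the paper concludes.
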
 

\begin{lemma} \label{lm:diffusion2} 
Suppose $\vx=\vx(\gamma, S)$, $\vx_0 = 1_S/\vol(S)$ and $\sigma_{j}^{\vx} = (1+2\max_{e\in \partial S_j^{\vx}}\min\{\delta_e,|e|/2\})^{-1}$. We have
\begin{multline}\label{eq:diffution2} 
I_{\vx}(\vol(S_j^{\vx}))\leq \frac{\gamma}{2+\gamma}I_{\vx_0}(S_j^{\vx_0})+\frac{1}{2+\gamma}(I_{\vx}(\vol(S_j^{\vx})-\sigma_j^{\vx} cut(S_j^{\vx}))+\\I_{\vx}(\vol(S_j^{\vx})+\sigma_j^{\vx} cut(S_j^{\vx}))).
\end{multline} 
Furthermore, for $k\in [0,M],$ $I_{\vx}(k) \leq I_{\vx_0}(k).$
\end{lemma}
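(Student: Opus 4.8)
The plan is to prove the two claims in turn, both driven by the stationarity of the minimizer $\vx=\vx(\gamma,S)$ of \eqref{eq:obj-proof} together with the diffusion inequality of Lemma~\ref{lm:diffusion}.

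\emph{The recursion \eqref{eq:diffution2}.} First I would write down the first-order optimality condition. Since $Q$ in \eqref{eq:obj-proof} is convex and $C^1$ (the squared positive parts are differentiable) and $L_e=\nabla_{\vx}\tfrac12 Q_e$ by definition, setting $\nabla Q(\vx)=0$ gives $2\gamma D(\vx-\vx_0)+2\sum_{e}L_e(\vx)=0$, i.e.
\[
\textstyle\sum_{e\in\mathcal{E}} L_e(\vx)=\gamma D(\vx_0-\vx).
\]
Taking the inner product with $1_{S_j^{\vx}}$ and using $p=D\vx$, $p_0:=D\vx_0$, and $I_{\vx}(\vol(S_j^{\vx}))=p(S_j^{\vx})$, I obtain
\[
\textstyle\langle \sum_{e}L_e(\vx),1_{S_j^{\vx}}\rangle=\gamma\bigl(p_0(S_j^{\vx})-I_{\vx}(\vol(S_j^{\vx}))\bigr).
\]
Because $\vx_0=1_S/\vol(S)$, one has $p_0(S_j^{\vx})=\vol(S_j^{\vx}\cap S)/\vol(S)\le\min\{\vol(S_j^{\vx})/\vol(S),1\}=I_{\vx_0}(\vol(S_j^{\vx}))$, the value of the first term in \eqref{eq:diffution2}. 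Substituting the inner product into Lemma~\ref{lm:diffusion} rewrites its left-hand side as $(2+\gamma)I_{\vx}(\vol(S_j^{\vx}))-\gamma p_0(S_j^{\vx})$; replacing $p_0(S_j^{\vx})$ by the larger $I_{\vx_0}(\vol(S_j^{\vx}))$, rearranging, and dividing by $2+\gamma$ yields exactly \eqref{eq:diffution2}.

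\emph{The domination $I_{\vx}\le I_{\vx_0}$.} Here $I_{\vx_0}(k)=\min\{k/\vol(S),1\}$ is concave. I would first reduce to breakpoints: since $I_{\vx}$ is piecewise linear and $I_{\vx_0}$ is concave, once $I_{\vx}\le I_{\vx_0}$ holds at every breakpoint, a chord argument on each linear segment of $I_{\vx}$ (a concave function lies above any line dominating it at the two segment ends) extends the inequality to all $k$. The breakpoints of $I_{\vx}-I_{\vx_0}$ are those of $I_{\vx}$, namely the $\vol(S_j^{\vx})$, together with $k\in\{0,\vol(S),M\}$. At $k=0,M$ the curves agree; at $k=\vol(S)$ I use $I_{\vx}(k)\le 1=I_{\vx_0}(\vol(S))$, where $I_{\vx}\le 1$ follows from $I_{\vx}(M)=p(V)=1$ and the fact that $I_{\vx}$ has nonnegative, nonincreasing slopes (each a sorted coordinate $x_j\ge0$), hence is nondecreasing and capped by its endpoint value. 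For the remaining breakpoints I argue by contradiction: let $\rho=\max_k\{I_{\vx}(k)-I_{\vx_0}(k)\}$ and suppose $\rho>0$, attained at $k^\star=\vol(S_j^{\vx})$. With $a=\sigma_j^{\vx}\cut(S_j^{\vx})$, which satisfies $a\le\min\{k^\star,M-k^\star\}$ since $\cut(S_j^{\vx})\le\min\{\vol(S_j^{\vx}),M-\vol(S_j^{\vx})\}$ and $\sigma_j^{\vx}\le1$ (so the shifted arguments stay in $[0,M]$), \eqref{eq:diffution2} gives
\[
I_{\vx}(k^\star)\le\tfrac{\gamma}{2+\gamma}I_{\vx_0}(k^\star)+\tfrac{1}{2+\gamma}\bigl(I_{\vx}(k^\star-a)+I_{\vx}(k^\star+a)\bigr).
\]
Bounding $I_{\vx}(k^\star\pm a)\le I_{\vx_0}(k^\star\pm a)+\rho$ and using concavity $I_{\vx_0}(k^\star-a)+I_{\vx_0}(k^\star+a)\le 2I_{\vx_0}(k^\star)$ collapses the right-hand side to $I_{\vx_0}(k^\star)+\tfrac{2\rho}{2+\gamma}$, so $\rho\le\tfrac{2}{2+\gamma}\rho$; since $\gamma>0$ this forces $\rho\le0$, a contradiction. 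Hence $I_{\vx}(k)\le I_{\vx_0}(k)$ for all $k\in[0,M]$.

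\emph{Main obstacle.} The first claim is routine: it is stationarity plus Lemma~\ref{lm:diffusion} and an elementary bound on $p_0$. The delicate part is converting the self-referential recursion \eqref{eq:diffution2} into a global domination. The crux is recognizing that $I_{\vx_0}$ is a supersolution of the recursion—its concavity makes the averaging operator on the right a contraction of factor $\tfrac{2}{2+\gamma}<1$—then localizing the maximizer of $I_{\vx}-I_{\vx_0}$ at a breakpoint, dispatching the boundary breakpoint $k=\vol(S)$ separately via $I_{\vx}\le1$, and verifying that the shifted arguments $k^\star\pm a$ remain in $[0,M]$, which rests on $\cut(S)\le\min\{\vol(S),\vol(\bar S)\}$.
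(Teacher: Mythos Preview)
Your argument for the recursion \eqref{eq:diffution2} is essentially the paper's: stationarity gives $\sum_e L_e(\vx)=\gamma D(\vx_0-\vx)$, take inner product with $1_{S_j^{\vx}}$, and plug into Lemma~\ref{lm:diffusion}. You are in fact more careful than the paper here, explicitly justifying the step $p_0(S_j^{\vx})=\vol(S_j^{\vx}\cap S)/\vol(S)\le I_{\vx_0}(\vol(S_j^{\vx}))$, which the paper writes as an equality.

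For the domination $I_{\vx}\le I_{\vx_0}$ you take a genuinely different route. The paper's one-line argument uses the concavity of $I_{\vx}$ \emph{itself}: substituting $I_{\vx}(k^\star-a)+I_{\vx}(k^\star+a)\le 2I_{\vx}(k^\star)$ directly into \eqref{eq:diffution2} gives $I_{\vx}(k^\star)\le \tfrac{\gamma}{2+\gamma}I_{\vx_0}(k^\star)+\tfrac{2}{2+\gamma}I_{\vx}(k^\star)$, hence $I_{\vx}(k^\star)\le I_{\vx_0}(k^\star)$ at every breakpoint, immediately. Your approach instead uses concavity of $I_{\vx_0}$, localizes the maximum of $I_{\vx}-I_{\vx_0}$, handles the extra breakpoint $\vol(S)$ by $I_{\vx}\le 1$, and runs a contraction argument. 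This is correct but more elaborate; the paper's shortcut avoids the case analysis and the need to invoke $x_v\ge 0$ for the bound $I_{\vx}\le 1$. On the other hand, your contraction viewpoint (the averaging operator has factor $\tfrac{2}{2+\gamma}<1$) is the natural perspective for the subsequent inductive bound in Lemma~\ref{lm:up}, so it is not wasted insight.
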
 

\mypara{Proof of Lemma~\ref{lm:diffusion}.}
Given a hyperedge $e$, we order $\{x_v \mid v\in e\}$ from large to small by breaking equality arbitrarily and obtain $x_1^{(e)},x_2^{(e)},...,x_e^{(e)}$. Suppose $x_k^{(e)}\in S_j^{\vx}$ and $x_{k+1}^{(e)}\notin S_j^{\vx}$. Then, 
\begin{align}\label{eq:one-e-diff}
    \textstyle \langle L_e(x), 1_{S_j^{\vx}}\rangle = \sum\limits_{i=1}^k \bigl[(x_i^{(e)}-x_a^{(e)*})_+ - (x_b^{(e)*}-x_i^{(e)})_+\bigr]
\end{align}
Next, we will bound \eqref{eq:one-e-diff} by analyzing three cases. We only focus on $x_a^{(e)*} > x_b^{(e)*}$ and otherwise $x_v$'s are constant for all $v\in e$. Also denote $k_+=\max\{i \mid x_i^{(e)}>x_a^{(e)*}\}$, $k_- = \min\{e+1-i \mid x_i^{(e)}<x_b^{(e)*}\}$, and $k^-=|e|+1-k_-$. By the optimality of $x_a^{(e)*},\,x_b^{(e)*}$, we have 
\[
        x_a^{(e)*} = (k_- + \delta_e)X_1^{k_+} + \delta_e X_{k^-}^{|e|},\quad
    x_b^{(e)*} = \delta_eX_1^{k_+} + (k_++\delta_e)X_{k^-}^{|e|}, 
\] 
where $X_{i_1}^{i_2}= (k_+k_-+\delta_e(k_++k_-))^{-1}\textstyle \sum_{i=i_1}^{i_2} x_i^{(e)}.$

Thus, we have 
\begin{equation}\label{eq:3-case}
\langle L_e(x), 1_{S_j^{\vx}}\rangle \! = \!
\footnotesize
    \left\{\begin{array}{@{}l@{\;\;}l@{}}[(k_+-k)(k_-+\delta_e)+k_-\delta_e]X_{1}^k & \text{if}\,k\leq k_+ \\
    \quad - k(k_-+\delta_e)X_{k+1}^{k_+} - k\delta_e X_{k^-}^{|e|}, &  \\
    k_-\delta_eX_{1}^{k_+} - k_+\delta_eX_{k^-}^{|e|}, & \text{if}\, k_+\leq k\leq k^- \\ 
    (|e|-k)\delta_e X_{1}^{k_+} + (|e|-k)(k_++\delta_e) X_{k^-}^{k} & \text{if}\,k\geq k^-.\\
    \quad - [k_+\delta_e + (k-k^-)(k_++\delta_e)] X_{k}^{|e|} 
    \end{array}\right.    
\end{equation}
By using the definition of $X_{i_1}^{i_2}$, noticing that all coefficients on $x_{i}^{(e)} \le 1$ in the left hand side of~\eqref{eq:3-case},   and a good deal of algebra, we can further show
\begin{equation*}
\small
    \left\{\begin{array}{@{}ll@{}}\frac{k(k_+-k)(k_-+\delta_e)+k_-\delta_e}{k_+k_-+\delta_e(k_++k_-)}\geq \frac{2}{\delta_e' + 2}\min\{k,|e|-k,\delta_e\} & \text{if}\,k\leq k_+ \\
    \frac{k_+k_-\delta_e}{k_+k_-+\delta_e(k_++k_-)}\geq \frac{1}{2\delta_e' + 1}\min\{k,|e|-k,\delta_e\} , & \text{if}\, k_+\leq k\leq k^- \\
    \frac{k_-[k_+\delta_e + (k-k^-)(k_++\delta_e)]}{k_+k_-+\delta_e(k_++k_-)} \geq \frac{2}{\delta_e' + 2}\min\{k,|e|-k,\delta_e\}, & \text{if}\,k\geq k^-. 
    \end{array}\right.
\end{equation*}
where $\delta_e'=\min\{\delta_e,|e|/2\}$. In each case of \eqref{eq:3-case}, the sum of positive coefficient before each $x_{i}^{(e)}$ equals to the sum of negative coefficients, which are both lower bounded by $\frac{1}{2\delta_e' + 1}$ times the splitting cost of $e$.
Therefore,
\begin{align*}
    2I_{\vx}(\vol(S_j^{\vx})) - \sum_{e\in E}\langle L_e(x), 1_{S_j^{\vx}}\rangle &\leq \\
    I_{\vx}(\vol(S_j^{\vx})-\sigma_{j}^{\vx} cut(S_j^{\vx}))&+I_{\vx}(\vol(S_j^{\vx})+\sigma_{j}^{\vx}cut(S_j^{\vx})),
\end{align*}
where $\sigma_{j}^{\vx} = 1/(2\max_{e\in \partial S_j^{\vx}}\delta_e' + 1)$, which concludes the proof. 

\mypara{Proof of Lemma~\ref{lm:diffusion2}.}
Compute the derivative of $Q(\vx)$ \eqref{eq:obj-proof} w.r.t. $\vx$ and use the optimality of $\vx=\vx(\gamma,S)$ to get 
$0=  \gamma D(\vx -\vx_0) + L_e(\vx)$. Therefore,the  inner product with $1_{S_j^{\vx}}$
\begin{align*}
    0 = \gamma (I_{\vx}(\vol(S_j^{\vx}))- I_{\vx_0}(\vol(S_j^{\vx_0}))) + \langle L_e(\vx), 1_{S_j^{\vx}}\rangle.
\end{align*}
Plug in Lemma~\ref{lm:diffusion} and we achieve \eqref{eq:diffution2}. By using the concavity of $I_{\vx}$, we obtain $I_{\vx}(\vol(S_j^{\vx}))\leq  I_{\vx_0}(\vol(S_j^{\vx_0}))$ and therefore $I_{\vx}(k)\leq  I_{\vx_0}(k)$ for any $k\in[0,M]$.

\subsection{Proof of Lemma~\ref{lm:lw}}
If the following Lemma~\ref{lm:lw-piS} is true, then we have $\mathbb{E}_{v\sim \mathbb{P}} [p(\gamma,\{v\})(\bar{S})] \leq c p(\gamma,S)(\bar{S}) = c(1 -p(\gamma, S)(S))\leq  {c\phi(S)}/{\gamma}$. By using Markov's inequality, with probability $\frac{1}{2}$, we will sample a node $v$ such that $p(\gamma,\{v\})(\bar{S})\leq {2c\phi(S)}/{\gamma}$, which concludes the proof. 
\begin{lemma}\label{lm:lw-piS}
For any $ S\subset V$, $p(\gamma, S)(S)\geq 1-{\phi(S)}/{\gamma}$.
\end{lemma}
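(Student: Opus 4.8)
The plan is to extract a clean identity for $p(\gamma,S)(S)$ from the first-order optimality of the minimizer of \eqref{eq:obj-proof} and then bound the remaining term by the cut. First I would use the stationarity condition already recorded in the proof of Lemma~\ref{lm:diffusion2}, namely $0 = \gamma D(\vx - \vx_0) + \sum_e L_e(\vx)$ with $\vx = \vx(\gamma,S)$ and $\vx_0 = 1_S/\vol(S)$. Taking the inner product with $1_S$, and using $\langle D\vx_0, 1_S\rangle = \vol(S)^{-1}\sum_{v\in S}d_v = 1$ together with $p = D\vx$, gives
\[
\gamma\bigl(p(\gamma,S)(S) - 1\bigr) + \Bigl\langle \textstyle\sum_e L_e(\vx),\, 1_S\Bigr\rangle = 0 .
\]
Hence it suffices to prove the single bound $\langle \sum_e L_e(\vx), 1_S\rangle \le \cut(S)/\vol(S) \le \phi(S)$, where the second inequality is immediate from the definition of $\phi$.

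Second, I would establish a maximum principle $\max_v x_v \le 1/\vol(S)$. Since each optimal auxiliary value $x_a^{(e)*},x_b^{(e)*}$ lies between $\min_{v\in e}x_v$ and $\max_{v\in e}x_v$, the global maximum of the solution is attained at some original node $v^\star\in V$. At $v^\star$ every term $(x_{v^\star}-x_a^{(e)*})_+ - (x_b^{(e)*}-x_{v^\star})_+$ is nonnegative, so the $v^\star$-coordinate of $\sum_e L_e(\vx)$ is $\ge 0$; substituting into the stationarity condition at $v^\star$ forces $x_{v^\star}\le (\vx_0)_{v^\star}\le 1/\vol(S)$.

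Third — and this is the crux — I would prove the per-hyperedge estimate $\langle L_e(\vx), 1_{e\cap S}\rangle \le f_e(e\cap S)/\vol(S) = \min\{|e\cap S|,|e\setminus S|,\delta_e\}/\vol(S)$. Writing $a^* = x_a^{(e)*}\ge b^* = x_b^{(e)*}$ (Lemma~\ref{aux-lemma}), the first-order conditions in $x_a^{(e)},x_b^{(e)}$ give $\sum_{v\in e}(x_v-a^*)_+ = \sum_{v\in e}(b^*-x_v)_+ = \delta_e(a^*-b^*)$, so $\langle L_e(\vx),1_e\rangle = 0$ and therefore $\langle L_e(\vx),1_{e\cap S}\rangle = -\langle L_e(\vx),1_{e\cap \bar{S}}\rangle$. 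Dropping the nonpositive part of the first representation bounds the inner product by $\sum_{v\in e\cap S}(x_v-a^*)_+$, while dropping it from the complementary representation bounds it by $\sum_{v\in e\cap\bar{S}}(b^*-x_v)_+$; since each summand is at most $\max_v x_v\le 1/\vol(S)$, these give the bounds $|e\cap S|/\vol(S)$ and $|e\setminus S|/\vol(S)$. Summing the first representation over all of $e$ and invoking the first-order condition gives the third arm $\delta_e(a^*-b^*)\le \delta_e/\vol(S)$. Taking the minimum of the three and summing over $e$ yields $\langle \sum_e L_e(\vx),1_S\rangle \le \cut(S)/\vol(S)$, which combined with the first step completes the proof.

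I expect the third step to be the main obstacle: one must recover the \emph{exact} combinatorial splitting cost $\min\{|e\cap S|,|e\setminus S|,\delta_e\}$ rather than a loose estimate, and it is precisely the two complementary representations of $\langle L_e(\vx),1_{e\cap S}\rangle$ — legitimized by the per-edge optimality of the auxiliary variables — that make the $|e\setminus S|$ and $\delta_e$ arms available. The maximum principle of the second step is what supplies the $1/\vol(S)$ scale needed to convert these counts into the conductance bound.
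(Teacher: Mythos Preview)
Your argument is correct and takes a genuinely different route from the paper. The paper proves the lemma by a \emph{clamping} comparison: it introduces the auxiliary problem $\tilde{\vx}=\arg\min_{\vx} Q(\vx)|_{x_v=0,\,v\in\bar{S}}$, argues (informally, via ``mass only flows out of $S$'') that $p(\gamma,S)(S)\ge\sum_{v\in S}d_v\tilde{x}_v$, and then analyzes the optimality conditions of $\tilde{\vx}$ on each boundary hyperedge to extract the bound $\min\{|e\cap S|,|e\setminus S|,\delta_e\}/\vol(S)$. You instead stay with the true optimizer $\vx(\gamma,S)$ throughout: the stationarity identity $\gamma(p(\gamma,S)(S)-1)+\langle\sum_e L_e(\vx),1_S\rangle=0$ reduces the task to bounding a flow-like quantity, and your maximum principle $\max_v x_v\le 1/\vol(S)$ (together with the per-edge identities $\sum_{v\in e}(x_v-a^*)_+=\sum_{v\in e}(b^*-x_v)_+=\delta_e(a^*-b^*)$ and hence $\langle L_e(\vx),1_e\rangle=0$) supplies the three arms $|e\cap S|$, $|e\setminus S|$, $\delta_e$ directly. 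Your approach is more elementary in that it avoids the monotone comparison between two optimization problems, which the paper leaves somewhat heuristic; the paper's clamped problem, on the other hand, makes the bound $\tilde{x}_v\le 1/\vol(S)$ for $v\in S$ immediate without a separate maximum-principle argument. One small remark: your citation of Lemma~\ref{aux-lemma} for $a^*\ge b^*$ is technically about the algorithm's iterates rather than the minimizer of~\eqref{eq:obj-proof}, but the inequality holds at optimality by the same reasoning (or simply because swapping $a^*$ and $b^*$ cannot increase $Q_e$).
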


\mypara{Proof.} 
This mass from the nodes in $S$ to the nodes in $\bar{S}$ naturally diffuses from the auxiliary nodes $v_a^{(e)}$ to $v_b^{(e)}$ for $e\in \partial S$. As need to lower bound $p(\gamma, S)(S)$, we may consider fixing $x_v=0, \forall v\in \bar{S}$ of $Q(\vx)$ and the obtained solution $\tilde{\vx}$ naturally satisfies 
\begin{align*}
    p(\gamma, S)(S) \geq \sum\nolimits_{v\in S} d_v\tilde{x}_v, \;\text{where}\;
    \tilde{\vx} \triangleq \arg\min_{\vx} Q(\vx)|_{x_v=0, \forall v\in \bar{S}}.
\end{align*}
The optimality of $\tilde{x}_{a}^{(e)}, \tilde{x}_{b}^{(e)}$ for $e\in\partial S$ in this case implies
\begin{align*}
 \textstyle -\sum_{v\in e} (\tilde{x}_v - \tilde{x}_{a}^{(e)})_+ + \delta_e (\tilde{x}_{a}^{(e)}-\tilde{x}_{b}^{(e)})_+ = 0, \\
 \textstyle \sum_{v\in e} (-\tilde{x}_v + \tilde{x}_{b}^{(e)})_+ - \delta_e (\tilde{x}_{a}^{(e)}-\tilde{x}_{b}^{(e)})_+ = 0
\end{align*}
As $\tilde{x}_{v}=0$ for $v\in e\backslash S$ and $\tilde{x}_{v}\leq \frac{1}{\vol(S)}$ for $v\in e \cap S$, we have 
\begin{align*} 
\tilde{x}_{b}^{(e)} \geq  \frac{\delta_e}{\delta_e + |e\backslash S|}  \tilde{x}_{a}^{(e)},\;
    \tilde{x}_{a}^{(e)} &\leq \frac{|e\cap S|}{\delta_e + |e\cap S|}\frac{1}{\vol(S)},
\end{align*}
and further, $\forall e\in \partial S$,
\begin{align} \label{eq:xtildea}
    &\textstyle \sum_{v\in e\cap S} (\tilde{x}_v - \tilde{x}_{a}^{(e)})_+ = \delta_e (\tilde{x}_{a}^{(e)} - \tilde{x}_{b}^{(e)})_+  \\
    &\leq \frac{|e\cap S||e\backslash S|\delta_e}{(\delta_e + |e\cap S|)(\delta_e + |e\backslash S|)}\frac{1}{\vol(S)} \leq \frac{\min\{|e\cap S|,|e\backslash S|,\delta_e\}}{\vol(S)}. \nonumber
\end{align}
The optimality of $\tilde{\vx}$ implies
\begin{align} \label{eq:xtilde}
\sum_{v\in S} \gamma d_v \left(\tilde{x}_v - \frac{1}{\vol(S)}\right) + \sum_{e\in \partial S} \sum_{v\in e\cap S} (\tilde{x}_v - \tilde{x}_{a}^{(e)})_+ =0.
\end{align}
Here we use that $\sum_{e \subset S}\sum_{v\in e} [(\tilde{x}_v - \tilde{x}_{a}^{(e)})_+ + ( \tilde{x}_{b}^{(e)}- \tilde{x}_v)_+]=0$. Plug \eqref{eq:xtildea} into \eqref{eq:xtilde} and we have 
\begin{align*}
0 \! \leq \! \gamma \left(\sum_{v\in S} d_v \tilde{x}_v - 1\right) + \cancelto{\phi(S)}{\sum_{e\in \partial S} \frac{\min\{|e\cap S|,|e\backslash S|,\delta_e\}}{\vol(S)}} ,
\end{align*}
which concludes the proof.

\confspace{-0.5\baselineskip}
\section{{Selecting $\delta$.}}
\label{sec:select-delta}


To select $\delta$ for each dataset, we run LH-2.0 on a handful of alternative clusters as we vary $\delta$. Below, we show F1 scores on those clusters and pick 
$\delta=1$ for Amazon and $\delta=1000$ for Stack Overflow. 

\begin{minipage}{0.4\linewidth}%
\begin{picture}(100,55)
\put(0,0){\includegraphics[width=\linewidth]{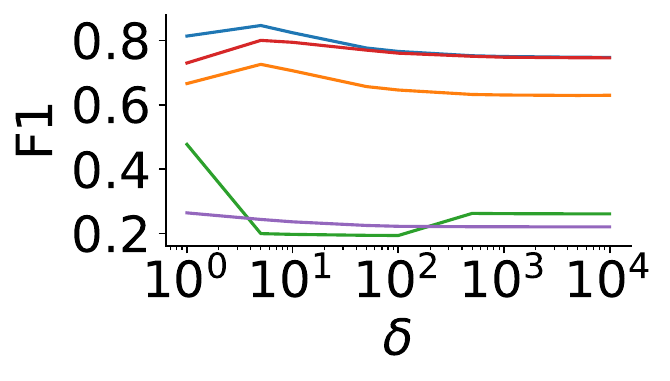}}
\put(50,50){\footnotesize Amazon}
\end{picture}
\end{minipage}%
\hspace{2em}
\begin{minipage}{0.4\linewidth}%
\begin{picture}(100,55)
\put(0,0){\includegraphics[width=\linewidth]{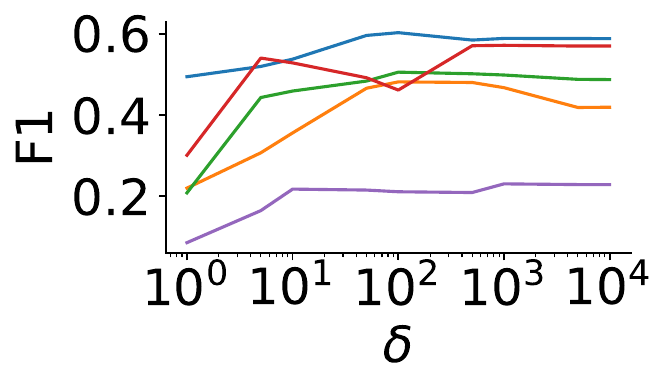}}
\put(30,50){\footnotesize Stack Overflow}
\end{picture}
\end{minipage}%
\end{document}